\documentclass[english,11pt]{article}
\usepackage[T1]{fontenc}
\usepackage[latin9]{inputenc}
\usepackage{amsmath}
\usepackage{hyperref}
\usepackage{amsthm}
\usepackage{amssymb}
\usepackage{ifthen}

\hyphenation{GreedySelect}
\hyphenation{GreedySelectSmall}
\hyphenation{GreedyIncrease}

\usepackage{thmtools} 
\usepackage{thm-restate}

\usepackage{fullpage}
\usepackage{graphicx}
\usepackage{tikz}
\usepackage{bm}
\usepackage{times}

\usetikzlibrary{patterns}
\usetikzlibrary{decorations.pathreplacing}
\usetikzlibrary{arrows.meta}

\definecolor{myred}{RGB}{163, 51, 61}
\definecolor{myblue}{RGB}{90, 107, 127}
\definecolor{mygreen}{RGB}{133, 177, 168}

\usepackage{verbatim}

\makeatletter
\theoremstyle{plain}
\newtheorem{thm}{\protect\theoremname}
\theoremstyle{remark}

\theoremstyle{plain}
\declaretheorem[name=Lemma,sibling=thm]{lem}
\ifx\proof\undefined
\newenvironment{proof}[1][\protect\proofname]{\par
	\normalfont\topsep6\p@\@plus6\p@\relax
	\trivlist
	\itemindent\parindent
	\item[\hskip\labelsep\scshape #1]\ignorespaces
}{%
	\endtrivlist\@endpefalse
}
\providecommand{\proofname}{Proof}
\fi
\theoremstyle{definition}
\newtheorem{defn}[thm]{\protect\definitionname}
\theoremstyle{plain}
\newtheorem{prop}[thm]{\protect\propositionname}


\makeatother

\usepackage{xspace}
\usepackage[textsize=tiny]{todonotes}

\newcommand{\red}[1]{\textcolor{red}{#1\xspace}}

\ifdefined\DEBUG

\newcommand{\lr}[1]{\cyan{#1}}
\newcommand{\aw}[1]{\blue{#1}}
\newcommand{\lrtodo}[1]{\todo[fancyline, color=green!40]{{LR: #1}}}
\newcommand{\awtodo}[1]{\todo[fancyline]{AW: #1}}

\else

\newcommand{\lr}[1]{#1}
\newcommand{\aw}[1]{#1}
\newcommand{\lrtodo}[1]{}
\newcommand{\awtodo}[1]{}

\fi

\renewcommand{\epsilon}{\varepsilon}

\usepackage{babel}
\providecommand{\claimname}{Claim}
\providecommand{\definitionname}{Definition}

\providecommand{\theoremname}{Theorem}
\providecommand{\propositionname}{Proposition}

\begin{document}
\global\long\def\P{\mathcal{P}}%
\global\long\def\OPT{\mathrm{OPT}}%
\global\long\def\E{\mathcal{E}}%
\global\long\def\S{\mathcal{S}}%
\global\long\def\Seg{\mathrm{Seg}}%
\global\long\def\N{\mathbb{N}}%
\global\long\def\T{\mathcal{T}}%
\global\long\def\red{\mathrm{red}}%
\global\long\def\opt{\mathrm{opt}}%
\global\long\def\dist{\mathrm{dist}}%
\global\long\def\crit{\mathrm{crit}}%
\global\long\def\next{\mathrm{next}}%
\global\long\def\Rel{\mathrm{Rel}}%
\global\long\def\C{\mathcal{C}}%
\global\long\def\Q{\mathcal{Q}}%
\global\long\def\I{\mathcal{I}}%
\global\long\def\l{\mathrm{large}}%
\global\long\def\Orel{\mathrm{off}_{x}}%
\global\long\def\Oproc{\mathrm{off}_{y}}%
\global\long\def\R{\mathcal{R}}%
\global\long\def\Rs{\mathcal{R}_{\mathrm{small}}}%
\global\long\def\Rl{\mathcal{R}_{\mathrm{large}}}%
\global\long\def\sm{\mathrm{small}}%
\global\long\def\l{\mathrm{large}}%
\global\long\def\Jh{J_{\mathrm{high}}}%
\global\long\def\Je{J_{\mathrm{easy},g}}%
\global\long\def\Jh{J_{\mathrm{hard},g}}%
\global\long\def\len{\mathrm{len}}%
\global\long\def\cellb{\mathrm{beg}}%
\global\long\def\celle{\mathrm{end}}%
\global\long\def\Rir{\R^{\mathrm{ir}}}%
\global\long\def\G{\mathcal{G}}%
\global\long\def\r{\mathrm{round}}%

\input{fig-geometric}

\date{}
\pagenumbering{gobble}

\title{A $(2+\epsilon)$-approximation algorithm for preemptive weighted
flow time on a single machine}
\author{Lars Rohwedder\footnote{EPFL, Switzerland, \href{mailto:lars.rohwedder@epfl.ch}{lars.rohwedder@epfl.ch},
  supported by the Swiss National Science Foundation project 200021-184656}
\and Andreas Wiese\footnote{Universidad de Chile, Chile, \href{mailto:awiese@dii.uchile.cl}{awiese@dii.uchile.cl}, partially
supported by the ANID Fondecyt Regular grant 1200173.}
}
\maketitle

\begin{abstract}
Weighted flow time is a fundamental and very well-studied objective
function in scheduling. In this paper, we study the setting of a single
machine with preemptions. The input consists of a set of jobs,
characterized by their processing times, release times, and weights
and we want to compute a (possibly preemptive) schedule for them.
The objective is to minimize the sum of the weighted flow times of
the jobs, where the flow time of a job is the time between its release
date and its completion time.

It had been a long-standing open problem to find a polynomial time
$O(1)$-approximation algorithm for this setting. In a recent break-through
result, Batra, Garg, and Kumar (FOCS 2018) found such an algorithm
if the input data are polynomially bounded integers, and Feige, Kulkarni,
and Li (SODA 2019) presented a black-box reduction to this setting.
The resulting approximation ratio is a (not explicitly stated) constant
which is at least $10.000$. In this paper we improve this ratio to
$2+\epsilon$. The algorithm by Batra, Garg, and Kumar (FOCS
2018) reduces the problem to \textsc{Demand MultiCut on trees} and
solves the resulting instances via LP-rounding and a dynamic program.
Instead, we first reduce the problem to a (different) geometric problem
while losing only a factor $1+\epsilon$, and then solve its resulting
instances up to a factor of $2+\epsilon$ by a dynamic program.
In particular, our reduction ensures certain structural properties,
thanks to which we do not need LP-rounding methods.

We believe that our result makes substantial progress towards
finding a PTAS for weighted flow time on a single machine.
\end{abstract}
\newpage\pagenumbering{arabic}

\setcounter{page}{1}

\section{Introduction}

Weighted flow time is a fundamental and well studied objective in
the scheduling literature, e.g., ~\cite{DBLP:conf/focs/Batra0K18,DBLP:conf/soda/FeigeKL19,DBLP:conf/latin/BansalP04,DBLP:conf/stoc/BansalP03,DBLP:journals/siamcomp/BansalP14,DBLP:journals/siamcomp/KellererTW99,azar2018improved,DBLP:conf/stoc/ChekuriK02,chekuri2001algorithms}.
We are given
a set of jobs $J$ where each job $j\in J$ is characterized by a
release time $r_{j}\in\N$, a processing time $p_{j}\in\N$, and a
weight $w_{j}\in\N$. In a computed schedule, the flowtime $F_{j}$
of a job $j$ is the difference between its completion time and its
release date $r_{j}$. The goal is to minimize $\sum_{j\in J}w_{j}F_{j}$.

In this paper, we study the setting of a single machines in which
we allow to preempt jobs (and resume them later). Note that without
preemptions the problem cannot even be approximated with a factor
of $O(n^{1/2-\epsilon})$ for any $\epsilon>0$~\cite{DBLP:journals/siamcomp/KellererTW99}.
It is known by the work of Chekuri and Khanna~\cite{DBLP:conf/stoc/ChekuriK02}
that for every $\epsilon>0$ there is a $(1+\epsilon)$-approximation
in quasi-polynomial time (QPTAS), assuming quasi-polynomially bounded
input data. In contrast to this, it had been a long-standing
important open problem whether a constant factor approximation can
be computed in \emph{polynomial} time~\cite{schuurman1999polynomial}.
In a breakthrough result, Batra,
Garg, and Kumar~\cite{DBLP:conf/focs/Batra0K18} presented such an
algorithm with pseudopolynomial running time. While for many
scheduling problems one can assume the input data to be polynomially
bounded via straight-forward rounding of the input etc., this is
not the case for weighted flow time. However, Feige, Kulkarni, and
Li~\cite{DBLP:conf/soda/FeigeKL19} gave a \lr{non-trivial} black-box reduction to
this setting which completely settles the mentioned long-standing
open question (and also yields a QPTAS for arbitrary input data).

The algorithm in~\cite{DBLP:conf/focs/Batra0K18} first reduces a
given problem instance to a clean graph problem, the \textsc{Demand
MultiCut problem on trees}. This reduction loses a factor of $32$
in the approximation ratio. Then, the authors present an approximation
algorithm for the resulting instance of \textsc{Demand MultiCut}.
To this end, they split it into two subinstances and solve the first
one by rounding a linear program (LP) and the second one with a dynamic
program (DP). Their approximation ratio for the first subinstance
is $24+8\beta$ where $\beta=O(1)$ is the approximation ratio of
an algorithm by Chan, Grant, K{\"o}nemann, and Sharpe~\cite{DBLP:conf/soda/ChanGKS12}
(which is invoked as a subroutine); the constant $\beta$ is not explicitly
stated in~\cite{DBLP:conf/soda/ChanGKS12}. The DP for the second
subinstance crucially exploits the hierarchical structure given by
the tree. Its approximation ratio is a constant which is not explicitly
stated in~\cite{DBLP:conf/focs/Batra0K18}, but is at least $512$.
Hence, the overall approximation ratio is at least $32\cdot(536+8\beta)\ge10.000$.
While one could try to optimize this constant, it is not
clear how to avoid to lose substantial factors in several parts of
the algorithm, e.g., the factor $32$ in the reduction to \textsc{Demand
MultiCut}, further constant factors when solving the two subinstances
mentioned above, and also the dependence on $\beta$.

\subsection{Our contribution}

In this paper, we present a polynomial time $(2+\epsilon)$-approximation
algorithm for weighted flow time on a single machine. We first reduce
the problem to a geometric problem (rather than \textsc{Demand MultiCut}).
Then we solve the resulting instance of this problem by a dynamic
program. Our reduction is almost loss-less, i.e., it loses only a
factor of $1+\epsilon$, and our DP has an approximation ratio
of only $2+\epsilon$ which leads to an approximation ratio of
$2+\epsilon$ overall.

In our geometric problem, the input consists of a set of non-overlapping
axis-parallel rectangles of unit height and a set of rays that are
all vertical and oriented downwards, see Figure~\ref{fig:geometric-problem}.
Each rectangle has a cost and a capacity, each ray has a demand. The
goal is to select rectangles of minimium total cost such that for
each ray, the total capacity of the selected rectangles intersecting
it is at least the demand of the ray. For technical reasons there
are some local dependencies between rectangles, that is, some rectangles
can only be selected when another rectangle of the same
size \lr{directly to its left} is selected as well.

In the instances obtained by our reduction, the rectangles are arranged
in a hierarchical structure given by a hierarchical decomposition
of the $x$-axis. More precisely, the projection of each rectangle
to the $x$-axis concides with a cell of this hierarchical decomposition.
Moreover, when we traverse each ray from its respective initial point
on, the widths of the rectangles hit by the ray are monotone (non-increasing).
This hierarchical structure is crucial for our dynamic program (similarly
to the tree-structure in~\cite{DBLP:conf/focs/Batra0K18}). In particular,
we manage to obtain this important structure while losing only a factor
of $1+\epsilon$ in the reduction.

\begin{figure}
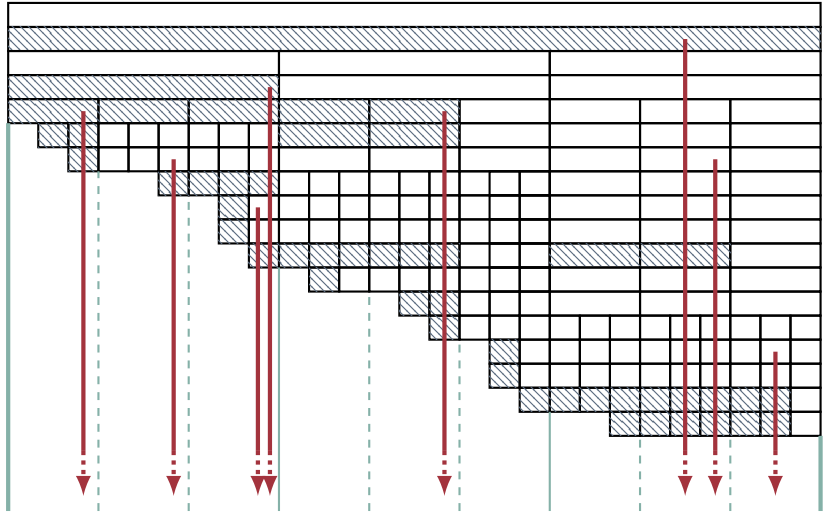

\centering
\figgeometric{}
\caption{An instance of the geometric problem to which we reduce weighted flow
time. The rays are depicted in red and the hierachical decomponsition is visualized in green.
The hatched rectangles form an example solution.
The capacities and costs of the rectangles and the demands
of the rays are not depicted, and neither the mentioned local dependencies
between some adjacent rectangles of the same sizes.}
\label{fig:geometric-problem} 
\end{figure}

Importantly, in contrast to \cite{DBLP:conf/focs/Batra0K18}
we can solve our instances of this geometric problem by dynamic programming
only, and do not require the LP rounding algorithm from~\cite{DBLP:conf/soda/ChanGKS12}
or a similar procedure (with additional constant factor losses). The
intuitive reason is that, translated to our geometric visualization,
the instances of \textsc{Demand MultiCut} described in~\cite{DBLP:conf/focs/Batra0K18}
introduce vertical line segments, rather than rays, and the algorithm
in \cite{DBLP:conf/focs/Batra0K18} needs LP-rounding for a certain
type of (intuitively short) line segments, which we can completely
avoid. In our DP, we translate some ideas from \cite{DBLP:conf/focs/Batra0K18}
to our geometric problem. However, our routine is significantly more
involved than the DP in \cite{DBLP:conf/focs/Batra0K18} due to the
higher complexity of our geometric problem (compared to \textsc{Demand
MultiCut on trees}), and since it is designed to optimize the
approximation ratio of $2+\epsilon$ incurred by it.

Our algorithm has pseudo-polynomial running time. With the black-box
reduction in~\cite[Section 4]{DBLP:conf/soda/FeigeKL19} we turn
it into a polynomial time algorithm, while losing only a factor of
$1+\epsilon$.
\begin{thm}
There is a polynomial time $(2+\epsilon)$-approximation algorithm
for the problem of minimizing weighted flow time on a single machine
in the preemptive setting. 
\end{thm}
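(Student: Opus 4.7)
The plan is to follow the roadmap outlined in the introduction and combine three ingredients. First I would invoke the black-box reduction of Feige, Kulkarni, and Li \cite{DBLP:conf/soda/FeigeKL19} to reduce, at a $(1+\epsilon)$ loss, to instances where all release times, processing times, and weights are polynomially bounded, so that the time horizon is pseudo-polynomial. The task then becomes to design a pseudo-polynomial $(2+\epsilon)$-approximation for such bounded instances; the black-box reduction reinstates polynomial running time.

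Next I would reduce the pseudo-polynomial weighted flow time instance to the geometric covering problem described in the paper: a minimum-cost selection of axis-parallel unit-height rectangles (with capacities and the local \emph{same-size-left-neighbor} dependencies) covering the demands of a set of downward vertical rays. The reduction must produce the two structural properties on which the DP relies: every selected rectangle's $x$-projection coincides with a cell of a fixed hierarchical decomposition of the time axis, and the widths of the rectangles hit by any ray are monotonically non-increasing along the ray. A natural implementation is to expose the hierarchical time partition that implicitly underlies the tree of \cite{DBLP:conf/focs/Batra0K18}, represent potential \emph{processing} and \emph{delay} decisions as rectangles supported on cells, encode each job's weighted-flow contribution as a downward ray emanating from its release time, and apply a standard shifting and geometric-rounding argument to enforce monotone widths at only a $(1+\epsilon)$ loss.

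The core step is then a dynamic program on the hierarchical geometric instance achieving ratio $2+\epsilon$. I would index the DP by cells of the hierarchy together with a bounded amount of guessed \emph{interface} information at each cell boundary: how much of each ray crossing the boundary has already been covered from outside, and which same-size dependencies need to be propagated inward. Thanks to the monotone-width property, rectangles that straddle a given cell boundary have width at least that of the cell, so there are only polynomially many relevant interface configurations after geometric rounding of capacities and demands, and the DP table has polynomial size. The ratio $2+\epsilon$ is incurred when arguing that a near-optimal geometric solution can be decomposed along the hierarchy while double-counting only the rectangles that cross the cell interfaces, in the spirit of the tree DP of \cite{DBLP:conf/focs/Batra0K18} but re-engineered to exploit the rays (rather than line segments) so that no LP rounding is needed.

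Combining the three steps, the pseudo-polynomial algorithm has approximation ratio $(1+\epsilon)(2+\epsilon)\le 2+O(\epsilon)$, and composing with the Feige--Kulkarni--Li reduction at another $(1+\epsilon)$ loss yields a polynomial-time $(2+O(\epsilon))$-approximation; rescaling $\epsilon$ gives the stated bound. The main obstacle I anticipate is the DP: one must simultaneously (i) prove that restricting to hierarchy-respecting geometric solutions costs only a factor of two, leveraging the ray (rather than segment) structure to avoid further constant-factor losses, and (ii) keep the interface information at each cell polynomially bounded despite the local same-size dependencies, which couple decisions across adjacent cells at the same level of the hierarchy.
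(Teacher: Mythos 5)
Your high-level roadmap matches the paper's (pseudo-polynomial reduction to the hierarchical geometric covering problem, a DP exploiting the cell hierarchy and the ray structure, and the Feige--Kulkarni--Li black box to recover polynomial time at a $(1+\epsilon)$ loss), but the step that carries all the difficulty --- the DP achieving $2+\epsilon$ with a polynomially bounded state space --- is asserted rather than proved, and the mechanism you propose for it would not work as described. You suggest indexing the DP by cells together with ``interface'' information recording how much of each ray crossing a cell boundary has already been covered from outside, and you justify polynomially many states by saying that rectangles straddling a boundary have width at least that of the cell. In the instances produced by the reduction, rectangles never straddle cell boundaries at all: each rectangle's $x$-projection coincides with a cell of the decomposition (two levels below the cell it is assigned to), so there is no class of ``boundary-crossing'' rectangles to reason about, and the claimed source of the factor $2$ (``double-counting only the rectangles that cross the cell interfaces'') does not correspond to any step of a working argument. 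Moreover, ray demands are pseudo-polynomially large integers and there are many rays per cell (one per pair $(s,t)$ with $t$ in the cell), so ``how much of each ray has been covered from outside'' is not a polynomially bounded interface without a substantial additional idea; making the residual-demand information compressible is exactly the open difficulty, not a routine geometric-rounding matter.

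The paper resolves this differently. It never tracks partial coverage across the hierarchy: instead, for each root-to-node path $Q$ it requires the guessed set $\R'_Q$ to cover the \emph{entire} demand of every ray whose rectangles all lie on $Q$, and consistency between nested paths is enforced by set containment (Definition~\ref{def:framework}); the simple DP of Lemma~\ref{lem:DP} then optimizes over candidate families $\{\chi_Q\}$. The real work is constructing polynomial-size families $\chi_Q$ that still admit a cheap consistent solution (Lemma~\ref{lem:cheap-consistent-solution}): budgets per cell are smoothed so that all $O_\epsilon(\log nP)$ values along a path can be guessed in polynomial time, rectangles are split into small and large relative to these budgets, small rectangles are selected by a greedy procedure whose fractional cost is $(1+\epsilon)$ times the budget and whose rounding loses at most another factor (this is one source of the $2$), omitted types are compensated by extra budget donated down the hierarchy via critical density pairs, and for each large ``hard'' job the solution buys the first rectangles of \emph{two} substitute jobs of comparable processing time (the other source of the $2$). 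None of this machinery --- which is precisely what makes the per-path guess polynomial while keeping the loss at $2+\epsilon$ --- is supplied or replaced by your sketch, so the proposal has a genuine gap at its core step.
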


We believe that our result is a crucial step forward in the search
of a PTAS for weighted flow time on a single machine. In particular,
a possible approach for constructing a PTAS could be to use our reduction
to the geometric problem above and develop a $(1+\epsilon)$-approximation
algorithm for the resulting instances.

\subsection{Other related work}

Prior to the results in~\cite{DBLP:conf/focs/Batra0K18,DBLP:conf/soda/FeigeKL19},
Bansal and Pruhs~\cite{DBLP:journals/siamcomp/BansalP14} presented
a $O(\log\log P)$-approximation algorithm for weighted flow time
(we denote by $P$ and $W$ the ratios between the largest and smallest
processing times and weights in the input, respectively), and even
more general for the General Scheduling problem in which each job
incurs a cost, depending on its completion time, and this cost is
given by a job-dependent cost function. They reduce this problem to
a geometric covering problem (which, however, is substantially
different from the geometric problem that we reduce to). For the
special cases where $w_{j}=1/p_{j}$ for each job $j$ (i.e. the stretch
metric) or if $P=O(1)$ there is a PTAS known~\cite{DBLP:conf/stoc/ChekuriK02,DBLP:journals/scheduling/BenderMR04}.
The best complexity result for weighted flow time on a single machine
with preemption is strong NP-hardness~\cite{lenstra1977complexity},
which leaves open whether a PTAS exists for the problem.

Weighted flow time has been studied in the online setting. Bansal
and Dhamdhere~\cite{bansal2007minimizing} presented a $O(\log W)$-competitive
algorithm and a semi-online $O(\log nP)$-competitive algorithm. Also,
Chekuri, Khanna, and Zhu~\cite{chekuri2001algorithms} gave a semi-online
$O(\log^{2}P)$-approximation algorithm. These results were improved
by Azar and Touitou~\cite{azar2018improved} who gave a $\min(\log W,\log P,\log D)$-competitive
algorithm, where $D$ is the ratios of the largest and smallest job
densities, being defined as $w_{j}/p_{j}$ for each job $j$. On the
other hand, there can be no online $O(1)$-competitive algorithm,
due to a result by Bansal and Chan~\cite{bansal2009weighted}. However,
if the online algorithm is given machines of speed $1+\epsilon$ then
$O(1)$-competitive algorithms exist, as shown by Bansal and Pruhs~\cite{DBLP:conf/stoc/BansalP03,DBLP:conf/latin/BansalP04}.


\section{Reduction to geometric problem}
\label{sec:geo} We start with some standard transformations to simplify
the instance of weighted flow time. We assume w.l.o.g.\ that $1/\epsilon\in\N$
and $\min_{j}r_{j}=0$. Moreover, we can assume that $\max_{j}r_{j}\le\sum_{j}p_{j}$,
since otherwise we can split the given instance into independent subinstances.
Recall that $P$ is defined as the ratio $\max_j p_j / \min_j p_j$.
\aw{By scaling the input values and rounding,}
we can also assume that $\min_j p_j =1$,
$\max_{j}p_{j}=P$,\awtodo{TODO: prove in appendix} and $1\le w_{j}\le O_{\epsilon}(n^{2}P)$ for
each job $j$, while losing only a factor of $1+\epsilon$ in the approximation ratio \aw{and increasing $P$ by only polynomial factors} (see Appendix~\ref{subsec:Bounded-weights} for details).
We define $T:=\max_{j}r_{j}+\sum_{j}p_{j}\le2nP.$ Hence, we can assume
w.l.o.g.~that each job finishes within $[0,T)$.

Then the problem is modeled by the following integer program
that we denote by (IP). Intuitively, for each job $j\in J$ and each
time $t\in\N$, we introduce a variable $x_{j,t}$ such that $x_{j,t}=1$
if in the corresponding solution job $j$ has not yet finished by
time $t$. For each interval $[s,t]$ we introduce a constraint modeling
that among the jobs released during $[s,t]$, only jobs with a total
processing time of $t-s$ can complete during $[s,t]$ (which is clearly
a necessary condition for feasibility).
\begin{align*}
\min\sum_{j\in J}\sum_{t\aw{\ge}r_{j}} & w_{j}x_{j,t}\\
\sum_{\substack{j\in J\\
s\le r_{j}\le t
}
}x_{j,t}\cdot p_{j} & \ge\sum_{\substack{j\in J\\
s\le r_{j}\le t
}
}p_{j}-(t-s) & \,\,\,\,\,\,\forall s\le t\le T\\
x_{j,t} & \ge x_{j,t+1} & \forall j\in J,t>r_{j}\\
x_{j,t} & \in\{0,1\} & \forall j\in J,t\in\{r_{j},\dotsc,T\}
\end{align*}
Given a feasible schedule, one can easily obtain a feasible solution
to (IP) with the same cost following the intuition for the variables
$x_{j,t}$ above.
Also, one can show that any feasible solution to (IP) can be translated
to a feasible schedule with the same cost. 
\begin{thm}[\cite{DBLP:conf/focs/Batra0K18}]
Suppose that $\left\{ x_{j,t}\right\} _{j,t}$ is a feasible solution
to (IP). Then, there is a schedule for which the total weighted flow-time
is equal to the cost of the solution $\left\{ x_{j,t}\right\} _{j,t}$. 
\end{thm}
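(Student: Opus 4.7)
The plan is to extract from the IP solution $\left\{x_{j,t}\right\}_{j,t}$ a set of virtual deadlines $C_{j}$, verify that the jobs admit a feasible preemptive schedule meeting these deadlines, and show that the total weighted flow-time of this schedule coincides with the IP cost.

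First I would define $C_{j}:=r_{j}+\sum_{t\ge r_{j}}x_{j,t}$ for each job $j\in J$. By the monotonicity constraint $x_{j,t}\ge x_{j,t+1}$, this is precisely the smallest integer $t\ge r_{j}$ with $x_{j,t}=0$, so $x_{j,t}=1$ for $r_{j}\le t<C_{j}$ and $x_{j,t}=0$ for $t\ge C_{j}$. Consequently the objective of (IP) rewrites as $\sum_{j\in J}w_{j}(C_{j}-r_{j})$, which is exactly the weighted flow-time if a schedule completes each job $j$ at time $C_{j}$.

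Next I would translate the remaining IP constraints into a schedulability condition. Using the equivalence $x_{j,t}=0 \Leftrightarrow C_{j}\le t$, for every $s\le t$ the interval constraint of (IP) becomes
\[
\sum_{\substack{j\in J\\ s\le r_{j}\le t\\ C_{j}\le t}}p_{j}\;\le\;t-s.
\]
That is, for any window $[s,t]$, the total processing time of jobs released within it and required to complete by its right endpoint is bounded by its length. This is the classical necessary-and-sufficient condition for the existence of a feasible preemptive single-machine schedule in which each job $j$ is released at $r_{j}$ and completes by deadline $C_{j}$.

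Given this condition, I would invoke a standard argument---either Earliest Deadline First, or Hall's theorem applied to the bipartite graph whose left side consists of the $p_{j}$ unit pieces of each job and whose right side consists of the unit time slots in $[r_{j},C_{j})$---to obtain such a schedule. To achieve equality with the IP cost rather than merely $\le$, I would specialise to the assignment that packs each job's unit pieces into the latest admissible slots; by a simple exchange argument this forces the last piece of $j$ to occupy slot $[C_{j}-1,C_{j})$, so $j$ completes exactly at $C_{j}$ and the total weighted flow-time equals $\sum_{j\in J}w_{j}(C_{j}-r_{j})$, matching the IP cost. The main obstacle is the schedulability step: recognising that the interval-indexed IP constraints are exactly the classical feasibility condition is the key insight, and obtaining equality (not just an upper bound) requires the latest-slot packing, since vanilla EDF may complete jobs strictly before $C_{j}$.
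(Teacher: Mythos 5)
Your reduction to deadlines is the right and standard route, and it is essentially the argument behind the cited result (the paper itself gives no proof of this theorem; it imports it from Batra--Garg--Kumar): setting $C_j:=r_j+\sum_{t\ge r_j}x_{j,t}$, using monotonicity to see that $x_{j,t}=1$ exactly for $r_j\le t<C_j$, rewriting the cost as $\sum_j w_j(C_j-r_j)$, and observing that the interval constraints are exactly the classical condition ``for all $s\le t$, $\sum_{j:\,r_j\ge s,\ C_j\le t}p_j\le t-s$'', which by EDF (or Hall on unit pieces versus unit slots) yields a preemptive schedule completing each $j$ by $C_j$. Up to a minor boundary remark (deadlines $C_j=T+1$ fall outside the constraint range $t\le T$, but that case is covered automatically since $T=\max_j r_j+\sum_j p_j$), this correctly gives a schedule of weighted flow time \emph{at most} the IP cost, which is all the overall reduction ever uses.

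The gap is in your final equality step. The claim that packing pieces into latest admissible slots forces every job's last piece into $[C_j-1,C_j)$ is false as soon as two jobs share the same value of $C_j$: take two unit jobs released at $0$ with $x_{j,t}=1$ for $t=0,\dots,9$ and $0$ afterwards, which is feasible for (IP), so $C_1=C_2=10$; only one of them can occupy the slot $[9,10)$, so no schedule completes both at time $10$, and every schedule has weighted flow time strictly below the IP cost $10(w_1+w_2)$. So exact equality cannot be enforced by any exchange argument for arbitrary feasible $\{x_{j,t}\}$; the literal statement is only attainable in the form ``at most'' (equality does follow for an \emph{optimal} IP solution, by combining your ``at most'' direction with the easy direction that any schedule induces an IP solution of equal cost). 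In short: your proof correctly establishes the inequality that the reduction needs, but the step purporting to upgrade it to equality does not work and cannot be repaired as stated.
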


One interpretation of (IP) is that for each job $j$ there are segments
$[r_{j},r_{j}+1),[r_{j}+1,r_{j}+2),\dotsc,[T-1,T)$, and we need to select
a prefix of these segments (modeled by the variables $x_{j,t}$ and
the constraints $x_{j,t}\ge x_{j,t+1}$ for each $t$). If we select
a segment $[t-1,t)$ for a job $j$ (i.e., $x_{j,t}=1$), then
this helps us to satisfy the constraint for each interval $[s,t]$
with $s\le r_{j}$. Figure~\ref{fig:visual-IP} provides a visualization
of these constraints: we first sort the jobs non-decreasingly by their
release dates, breaking ties arbitrarily. Denote by $\prec$ the obtained
(fixed) order of the jobs and suppose that the jobs are labeled $1,\dotsc,n$
according to $\prec$. For each job $j$ and each variable $x_{j,t}$
we introduce a square $[t-1,t)\times[j,j+1)$. For each interval
$I=[s,t]$ we define $j(I)$ to be the job $j$ with minimum $r_{j}$
such that $s\le r_{j}$; we introduce a vertical ray $L(I):=\{t-\frac{1}{2}\}\times[j(I)+\frac{1}{2},\infty)$
corresponding to $I$. Then one can show easily that $L(I)$ intersects
the square of a variable $x_{j,t}$ if and only if the variable $x_{j,t}$
appears in the left-hand side of the constraint corresponding to $I$.
Hence, intuitively, the capacity of the square for a variable $x_{j,t}$
is $p_{j}$, the demand of a ray $L(I)$ is the right-hand side of
the constraint in (IP) corresponding to $I$, i.e., $\sum_{j\in J:s\le r_{j}\le t}p_{j}-(t-s)$,
and our goal is to select squares such that each ray $L(I)$ intersects
with selected squares whose total capacity are at least the demand
of $L(I)$.

\begin{figure}
\centering \begin{tikzpicture}[scale=0.8]
  \def\xL{-2}
  \def\y{10}
  \foreach \x in {{\xL},...,15}
  { \draw[thick] (0.5 * \x, \y) rectangle (0.5 * \x + 0.5, \y - 0.5); }
  \node at (8.5, \y - 0.25) {$j_1$};
  \draw[thick] (0.5 * \xL, 6) -- (0.5 * \xL, 5.75) node[pos=1, below] {$r_{j_1}$};
  \fill[pattern=north west lines, pattern color=myblue] (0.5 * \xL, \y) rectangle (0.5 * -2 + 0.5, \y - 0.5);

  \def\xL{0}
  \def\y{9.5}
  \foreach \x in {{\xL},...,15}
  { \draw[thick] (0.5 * \x, \y) rectangle (0.5 * \x + 0.5, \y - 0.5); }
  \node at (8.5, \y - 0.25) {$j_2$};
  \draw[thick] (0.5 * \xL, 6) -- (0.5 * \xL, 5.75) node[pos=1, below] {$r_{j_2}$};
  \fill[pattern=north west lines, pattern color=myblue] (0.5 * \xL, \y) rectangle (0.5 * 7 + 0.5, \y - 0.5);

  \def\xL{3}
  \def\y{9}
  \foreach \x in {{\xL},...,15}
  { \draw[thick] (0.5 * \x, \y) rectangle (0.5 * \x + 0.5, \y - 0.5); }
  \node at (8.5, \y - 0.25) {$j_3$};
  \draw[thick] (0.5 * \xL, 6) -- (0.5 * \xL, 5.75) node[pos=1, below] {$r_{j_3}$};
  \fill[pattern=north west lines, pattern color=myblue] (0.5 * \xL, \y) rectangle (0.5 * 4 + 0.5, \y - 0.5);

  \def\xL{7}
  \def\y{8.5}
  \foreach \x in {{\xL},...,15}
  { \draw[thick] (0.5 * \x, \y) rectangle (0.5 * \x + 0.5, \y - 0.5); }
  \node at (8.5, \y - 0.25) {$j_4$};
  \draw[thick] (0.5 * \xL, 6) -- (0.5 * \xL, 5.75) node[pos=1, below] {$r_{j_4}$};
  \fill[pattern=north west lines, pattern color=myblue] (0.5 * \xL, \y) rectangle (0.5 * 12 + 0.5, \y - 0.5);

  \def\xL{9}
  \def\y{8}
  \foreach \x in {{\xL},...,15}
  { \draw[thick] (0.5 * \x, \y) rectangle (0.5 * \x + 0.5, \y - 0.5); }
  \node at (8.5, \y - 0.25) {$j_5$};
  \draw[thick] (0.5 * \xL, 6) -- (0.5 * \xL, 5.75) node[pos=1, below] {$r_{j_5}$};
  \fill[pattern=north west lines, pattern color=myblue] (0.5 * \xL, \y) rectangle (0.5 * 12 + 0.5, \y - 0.5);

  \draw[myred, ultra thick] (3.75, \y - 0.75) -- (3.75, 9.25);
  \draw[-latex, myred, dotted, ultra thick] (3.75, \y - 0.75) -- (3.75, \y - 1.5);

  \draw[myred, ultra thick] (6.25, \y - 0.75) -- (6.25, 8.25);
  \draw[-latex, myred, dotted, ultra thick] (6.25, \y - 0.75) -- (6.25, \y - 1.5);

  \draw[myred, ultra thick] (2.25, \y - 0.75) -- (2.25, 8.75);
  \draw[-latex, myred, dotted, ultra thick] (2.25, \y - 0.75) -- (2.25, \y - 1.5);
    \draw[thick, -latex] (-2, \y - 2) -- (10, \y - 2) node[pos=1, right] {time};
\end{tikzpicture}
\caption{Geometric visualization of (IP).
\lr{The rays are depicted in red.
The hatched rectangles form an example solution.
The capacities and costs of the rectangles and the demands
of the rays are not depicted}}
\label{fig:visual-IP} 
\end{figure}
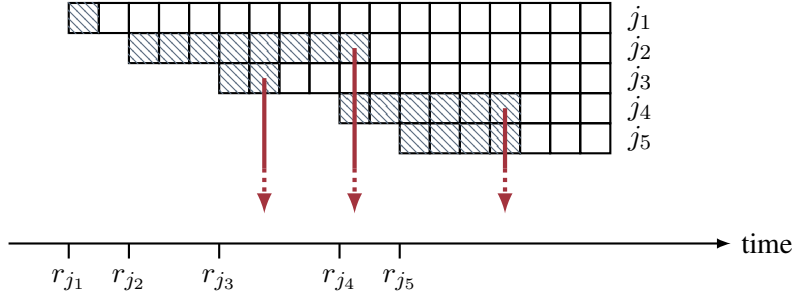
It is not clear how to approximate (IP) directly in polynomial time
and how to make use of the visualization above for this. Therefore,
we will give a randomized reduction of (IP) to a different (and in
particular more structured) integer program (IP2) with the following
relation. 
\begin{enumerate}
\item Any solution for (IP) can be transformed into a solution for (IP2)
such that the objective value increases at most by a factor $1+O(\epsilon)$
in expectation. 
\item Any solution for (IP2) can be transformed into a solution for (IP)
with the same objective value. 
\end{enumerate}
In particular, in (IP2) we will define rectangles for each
job $j$ which intuitively result from merging certain sets of
adjacent squares of $j$. Morever, these rectangles will be aligned
with a hierarchical grid which will help us later to compute a cheap
solution to (IP2) by a dynamic program.

\paragraph{Hierarchical grid.}
Our hierarchical grid has $O_{\epsilon}(\log T)=O_{\epsilon}(\log nP)$
levels. Each grid cell $C$ of some level $\ell$ corresponds to some
interval $[t_{1},t_{2})$ with $t_{1},t_{2}\in\N$. We define
$\cellb(C):=t_{1}$, $\celle(C):=t_{2}$ , and $\len(C):=t_{2}-t_{1}$.
Each cell $C$ has $K:=(2/\epsilon)^{1/\epsilon}$ children
cells of level $\ell+1$, unless $\ell$ is the maximum level $\ell_{\max}$
of the hierarchy in which case $C$ does not have any children cells.
There will be exactly one grid cell of level $0$. The grid is parametrized
by two random variables $\Orel,\Oproc$. Intuitively, we give the
grid a horizontal shift with some random offset $\Orel$. Also, we
choose the size of the unique cell of level $0$ randomly via an offset
$\Oproc$.

Formally, we define $\ell_{\max}$ to be the minimal value $k$
such that $K^{k-2}\ge T$ which will ensure that later the grid cells
$C$ of level $\ell_{\max}$ satisfy that $\len(C)\in[1,K)$. We
choose both $\Oproc\in\{(2/\epsilon)^{0},(2/\epsilon)^{1},\dotsc,(2/\epsilon)^{1/\epsilon-1}\}$
and $\Orel\in\{-K^{\ell_{\max}-1}+1,-K^{\ell_{\max}-1}+2,\dotsc,0\}$
uniformly at random. We define that the unique grid cell of
level $0$ corresponds to the interval $[\Orel,\Orel+\Oproc K^{\ell_{\max}})$
which contains $[0,T)$ (since $T\le K^{\ell_{\max}-2}\le K^{\ell_{\max}}-K^{\ell_{\max}-1}\le\Oproc K^{\ell_{\max}}+\Orel$).
Thus, we can assume w.l.o.g.\ that no job is processed outside $[\Orel,\Orel+\Oproc K^{\ell_{\max}})$.
Inductively, for each grid cell $C$ with $\len(C)\ge K\cdot\Oproc$
of some level $\ell$, we introduce $K$ child grid cells of level
$\ell+1$, one for each interval 
\[
\left[\cellb(C)+\frac{i}{K}\len(C),\ \cellb(C)+\frac{i+1}{K}\len(C)\right),\quad i=0,\dotsc,K-1.
\]
By construction, the interval of each grid cell of level $\ell$ has
length $\Oproc\cdot K^{\ell_{\max}-\ell}$. Denote by $\C$ the set
of all grid cells (of all levels). For each $C\in\C$, denote by $\ell(C)$
its level. It follows that $\ell_{\max}=\max_{C\in\C}\ell(C)$.

\paragraph{Segments of jobs.}

For each job $j$ we want to define a set of $O_{\epsilon}(\log nP)$
segments $\Seg(j)$ which form a partition of $[r_{j},T)$,
see Figure~\ref{fig:cells}. We will associate each segment
$S\in\Seg(j)$ with some grid cell $C\in\C$ such that $S\subseteq C$
and denote by $\Seg(j,C)$ the segments in $\Seg(j)$ associated
with $C$. We will ensure that all segments in $\Seg(j,C)$ are aligned
with the grid cells of level $\ell(C)+2$ and in particular all have
the same size. We will also ensure that for each $C\in\C$,
the union of the segments in $\Seg(j,C)$ forms an interval that is
\emph{right-aligned with }$C$, i.e., it holds that $\bigcup_{S\in\Seg(j,C)}S=[s,\celle(C))$
for some $s\in C$.

Formally, consider a job $j$. We construct a sequence of cells $C_{\ell_{\max}},C_{\ell_{\max}-1},\dotsc,C_{0}$
in levels $\ell_{\max},\ell_{\max}-1,\dotsc,0$ such that the union
of these cells contains $[r_{j},T)$. The cells are chosen as follows.
Cell $C_{\ell_{\max}}$ is identical to the cell of of level $\ell_{\max}$
that contains $r_{j}$. Suppose we have chosen cells $C_{\ell_{\max}},\dotsc,C_{k}$.
Then we define $C_{k-1}$ as the cell of level $k-1$
that contains $\celle(C_{k})$ (see Figure~\ref{fig:cells}); observe
that this implies $\celle(C_{k})<\celle(C_{k-1})$. For each $k\in\{\ell_{\max}-1,\dotsc,1\}$
consider the interval $[\celle(C_{k+1}),\celle(C_{k}))$, and $[r_{j},\celle(C_{\ell_{\max}}))$
for $k=\ell_{\max}$. The length of this interval must be an integer
multiple of $\len(C_{k+1})$,
or 1 if $k=\ell_{\max}$. We subdivide this interval into segments
of length $\len(C_{k+2})$ each if $k\le\ell_{\max}-2$, and
of length~1 each if $k>\ell_{\max}-2$.
We define that these segments form the set $\Seg(j,C_{k})$.
It follows by construction that each of them coincides with a cell
of level $k+2$ if $k\le\ell_{\max}-2$, and otherwise has length~1.

For each job $j$ we define $\Seg(j):=\bigcup_{C\in\C}\Seg(j,C)$.
In the following lemma, we will prove some properties of these segments.
For this, for each job $j$ let $F_{j}^{*}$ denote the flowtime of
$j$ in $\OPT$ and let $C_{j}^{*}$ denote the cell such that $\Seg(j,C_{j}^{*})$
contains a segment $S$ with $r_{j}+F_{j}^{*}-1\in S$. We will use
this notation in the (technical) fifth property below \lr{that} will be crucial
later to prove that the reduction to (IP2) loses only a factor $1+O(\epsilon)$.
The last point states intuitively that the segments of a job are coarser
if the job is released earlier, see also Figure~\ref{fig:cells}.

\begin{restatable}{lem}{definesegments}
\label{lem:define-segments}For each job $j\in J$ the segments \aw{$\Seg(j)$}
and $\left\{ \Seg(j,C)\right\} _{C\in\C}$ have the following properties:
\begin{enumerate}
\item $\Seg(j)$ forms a partition of $[r_{j},T)$,%
\item for each $C\in\C$ and each $S\in\Seg(j,C)$ it holds that $S\subseteq C$
and $S=C'$ for some cell $C'$ of level $\ell(C)+2$ if $\ell(C)\le\ell_{\max}-2$,
and $S=[t,t+1)$ for some $t\in\N$ if $\ell(C)>\ell_{\max}-2$,
\item for each $C\in\C$ the interval $\bigcup_{S\in\Seg(j,C)}S$ is right-aligned
with $C$, $\left|\Seg(j,C)\right|\le K^{2}$, and all segments in
$\Seg(j,C)$ have the same size, \label{en:define-segments-no-segs} 
\item for two segments $S\in\Seg(j,C)$, $S'\in\Seg(j,C')$ where $S$ lies
on the left of $S'$ and $C\ne C'$, it holds that \lr{$\len(S')=\len(S)=1$
or }$\len(S')=\len(S)\cdot K^{i}$ for some integer $i\ge 1$, and 
\item with probability at least $1-O(\epsilon)$ we have that $F_{j}^{*}\ge\len(C_{j}^{*})/(\epsilon K)$.\label{lem:define-segments-F*} 
\end{enumerate}
Moreover, for two jobs $j,j'$ with $r_{j}\le r_{j'}$ it holds that
for each segment $S'\in\Seg(j')$ there is a segment $S\in\Seg(j)$
with $S'\subseteq S$. 
\end{restatable}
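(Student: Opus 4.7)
\emph{Proof plan.} I will dispatch properties (1)--(4) and the ``Moreover'' clause by direct inspection of the construction of the cell sequence $C_{\ell_{\max}},\dotsc,C_0$, and then devote the main effort to the randomized property~(5), which I expect to be the main obstacle.

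For (1)--(3), the key observation is that by construction $[r_j,T)$ is the disjoint union of $[r_j,\celle(C_{\ell_{\max}}))$ and the intervals $[\celle(C_{k+1}),\celle(C_k))$ for $k\le\ell_{\max}-1$, each of which is further subdivided into equal pieces of length $\len(C_{k+2})$ (or $1$ in the two boundary levels). Since the hierarchical grid is nested, any subdivision of an interval right-aligned with a cell of level $k$ into pieces of length $\len(C_{k+2})$ automatically yields cells of level $\ell(C_k)+2$; together with the fact that each such interval ends at $\celle(C_k)$, this gives right-alignment, equal sizes, and the cardinality bound $|\Seg(j,C)|\le K^2=\len(C_k)/\len(C_{k+2})$. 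Property~(4) then follows from noting that the intervals $[\celle(C_{k+1}),\celle(C_k))$ are encountered left-to-right as $k$ decreases, so $S\in\Seg(j,C_k)$ left of $S'\in\Seg(j,C_{k'})$ with $C_k\ne C_{k'}$ forces $k>k'$, and therefore $\len(S')/\len(S)=K^{k-k'}$ (or both equal $1$ in the boundary regime). For the ``Moreover'' clause I will compare $(C_k^{(j)})$ with $(C_k^{(j')})$: once the two sequences coincide at some level $k^\star$ they remain equal below $k^\star$ (since each $C_{k-1}$ is determined by $\celle(C_k)$), and above $k^\star$ the interval defining $\Seg(j,C_{k^\star})$ is a left-extension of that of $j'$; a case split by level then shows that every segment of $j'$ nests inside one of $j$.

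The hard part is property~(5). The bad event $\len(C_j^*)>\epsilon K F_j^*$ forces the point $r_j+F_j^*-1$ to cross the right boundary of some $C_{k+1}^{(j)}$ whose length exceeds $\epsilon F_j^*$, which in turn forces $r_j$ to lie within $F_j^*$ of $\celle(C_{k+1}^{(j)})$. I plan to combine the two sources of randomness independently. First, conditional on $\Oproc$, the position of $r_j$ within a cell of length $\len(C_{k+1})\le K^{\ell_{\max}-1}$ is essentially uniform because $\Orel$ ranges over an interval of size $K^{\ell_{\max}-1}$, giving a per-level bad probability of at most $F_j^*/\len(C_{k+1})$. Second, the random exponent $\Oproc$, uniform over $1/\epsilon$ powers of $2/\epsilon$, ensures that only an $O(\epsilon)$ fraction of its values place the cell length $\Oproc K^{\ell_{\max}-k-1}$ in the narrow range where the bad event can simultaneously occur and violate the conclusion. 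Summing these contributions over the $O(1)$ relevant levels should make the total bad probability telescope to $O(\epsilon)$. The most delicate steps will be establishing this near-uniformity rigorously while respecting the joint distribution of $\Orel$ and $\Oproc$, and handling the unit-segment regime near $\ell_{\max}$ where the statement of (2) changes form.
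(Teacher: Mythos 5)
Your dispatch of (1)--(4) and the ``Moreover'' clause is in line with the paper, which simply declares these immediate from the construction; your sketches (monotone chain $\celle(C_k^{(j)})\le\celle(C_k^{(j')})$, nestedness of grid-aligned segments) are the right way to flesh that out. The substance is property (5), and there you take a genuinely different route from the paper. The paper first uses the randomness of $\Oproc$ alone to show that with probability $1-3\epsilon$ the value $F_j^*$ avoids the window $[\tfrac{\epsilon}{2}\Oproc K^k,\tfrac{2}{\epsilon}\Oproc K^k]$ for every $k$, then uses the randomness of $\Orel$ (at the single scale singled out by this separation) to ensure $r_j$ is not within an $\epsilon$-fraction of the right end of the cell $C$ containing it, and finally gives a deterministic chain of inequalities identifying $C=C_j^*$ and concluding $F_j^*\ge\len(C)/(\epsilon K)$. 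You instead union-bound over the possible levels of $C_j^*$: badness at level $k$ forces the level-$(k+1)$ boundary $\celle(C_{k+1})$ into $(r_j,r_j+F_j^*)$ while $\len(C_{k+1})>\epsilon F_j^*$, which over $\Orel$ costs roughly $F_j^*/\len(C_{k+1})$ per level; the one level with $\len(C_{k+1})\in(\epsilon F_j^*,F_j^*/\epsilon)$, where this bound is useless, is excluded via $\Oproc$, since that window spans a factor $1/\epsilon^2<(2/\epsilon)^2$ and hence contains $O(1)$ powers of $2/\epsilon$, each realized as some $\Oproc K^m$ with probability exactly $\epsilon$. This works and is arguably cleaner than the paper's argument, because it never needs to pin down $C_j^*$ exactly (the paper's most technical step). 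What it costs is per-level bookkeeping that your plan should state precisely: the number of contributing levels is \emph{not} $O(1)$ --- it is the geometric decay with ratio $1/K$ over all levels with $\len(C_{k+1})\ge F_j^*/\epsilon$ that keeps the sum at $O(\epsilon)$; exact uniformity of $\Orel\bmod\len(C_{k+1})$ needs $\len(C_{k+1})\mid K^{\ell_{\max}-1}$ (true whenever $\len(C_{k+1})\le K^{\ell_{\max}-1}$, both being powers of $2/\epsilon$; the at most one coarser level is handled by $F_j^*/K^{\ell_{\max}-1}\le 1/K\le\epsilon$); the case $C_j^*=C_{\ell_{\max}}$ is vacuous since $\Oproc/(\epsilon K)\le 1/2<1\le F_j^*$; and the independence of $\Orel$ and $\Oproc$ is what lets you condition on the $\Oproc$-event without perturbing the $\Orel$-estimates. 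With these points made explicit, your argument yields the same $1-O(\epsilon)$ guarantee as the paper's.
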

\begin{proof}
\lr{
By construction the first four properties follow immediately. The non-trivial property we need to show
is the fifth.}
First we will show that probability at least
$1-3\epsilon$ it holds that 
\begin{equation}
F_{j}^{*}\notin[\epsilon/2\cdot\Oproc K^{k},2/\epsilon\cdot\Oproc K^{k}]\quad\text{for all }k\in\mathbb{Z}.\label{eq:F*-in-range}
\end{equation}
Note that if (\ref{eq:F*-in-range}) is not true for $k$ then $F_{j}^{*}$
and $\Oproc K^{k}$ differ by a factor of at most $2/\epsilon$.
In other words, it suffices to show that with the mentioned probability
we have 
\[
|\log_{2/\epsilon}F_{j}^{*}-\log_{2/\epsilon}(\Oproc K^{k})|>1\quad\text{for all }k\in\mathbb{Z}.
\]
Notice that $\log_{2/\epsilon}(\Oproc K^{k})\in\mathbb{Z}$. Hence,
the statement above is implied by 
\begin{align*}
\lfloor\log_{2/\epsilon}F_{j}^{*}\rfloor & \neq\log_{2/\epsilon}(\Oproc)+k/\epsilon-1,\\
\lfloor\log_{2/\epsilon}F_{j}^{*}\rfloor & \neq\log_{2/\epsilon}(\Oproc)+k/\epsilon,\phantom{-1}\quad\text{and}\\
\lfloor\log_{2/\epsilon}F_{j}^{*}\rfloor & \neq\log_{2/\epsilon}(\Oproc)+k/\epsilon+1\quad\text{for all }k\in\mathbb{Z}.
\end{align*}
This is equivalent to 
\begin{align*}
\lfloor\log_{2/\epsilon}F_{j}^{*}\rfloor & \not\equiv\log_{2/\epsilon}(\Oproc)-1\mod1/\epsilon,\\
\lfloor\log_{2/\epsilon}F_{j}^{*}\rfloor & \not\equiv\log_{2/\epsilon}(\Oproc)\phantom{,-1}\mod1/\epsilon,\text{and}\\
\lfloor\log_{2/\epsilon}F_{j}^{*}\rfloor & \not\equiv\log_{2/\epsilon}(\Oproc)+1\mod1/\epsilon.
\end{align*}
The distribution of $\log_{2/\epsilon}(\Oproc)$ is uniform over
$\{0,\dotsc,1/\epsilon-1\}$. Hence (\ref{eq:F*-in-range}) holds
with probability at least $1-3\epsilon$. We condition on the event
above which implies that there is some $k\in\mathbb{Z}$ with $2/\epsilon\cdot\Oproc K^{k-1}<F_{j}^{*}<\epsilon/2\cdot\Oproc K^{k}$.
Because of $F_{j}^{*}\ge1\ge\Oproc/K$ it must hold that $\Oproc K^{k}\ge2/\epsilon$
and $k\ge0$. Moreover, since $F_{j}^{*}\le r_{j}+F_{j}^{*}\le T\le\Oproc K^{\ell_{\max}-2}$
we have that $k<\ell_{\max}-1$.

Let $C_{\ell_{\max}},\dotsc,C_{0}$ be the cells constructed in the
definition of $\Seg(j)$. Recall that $\Orel$ is chosen uniformly
at random from $\{-K^{\ell_{\max}-1}+1,\dotsc,0\}$. The number $K^{\ell_{\max}-1}$
is an integer multiple of $\Oproc K^{k}$. Thus, the distribution
of $\Orel\mod\Oproc K^{k}$ is uniform. With probability at least
$1-2\epsilon$ we have 
\[
r_{j}\not\equiv\Orel,\Orel-1,\dotsc,\Orel-\epsilon\Oproc K^{k}\mod\Oproc K^{k},
\]
which means the grid cells are aligned such that $r_{j}$ lies inside
a cell $C$ of level $\ell_{\max}-k$, that is, $\len(C)=\Oproc K^{k}$,
and $r_{j}<\celle(C)-\epsilon\len(C)$. Together with $(*)$ this
event has a probability of at least $1-5\epsilon$. We now prove
that the event implies $C=C_{j}^{*}$, which finishes the proof since
\[
F_{j}^{*}>2/\epsilon\cdot\Oproc K^{k-1}\ge\len(C)/(\epsilon K).
\]
First, we prove that $C=C_{\ell_{\max}-k}$. If $k=0$, this follows
from $r_{j}\in C$. Otherwise, it follows from 
\[
\cellb(C)\le r_{j}\le\celle(C_{\ell_{\max}-k+1})
\]
and 
\begin{multline*}
\celle(C_{\ell_{\max}-k+1})\le r_{j}+\sum_{i=\ell_{\max}-k+1}^{\ell_{\max}}\len(C_{i})=r_{j}+\sum_{i=0}^{k-1}\Oproc K^{i}<r_{j}+2\Oproc K^{k-1}\\
<\celle(C)-\epsilon\Oproc K^{k}+2\Oproc K^{k-1}=\celle(C).
\end{multline*}
Finally, $C=C_{j}^{*}$ since 
\[
r_{j}+F_{j}^{*}-1<r_{j}+\frac{\epsilon}{2}\Oproc K^{k}\le\celle(C)-\epsilon\cdot\Oproc K^{k}+\frac{\epsilon}{2}\Oproc K^{k}<\celle(C)
\]
and for $k\ge1$ we have 
\begin{multline*}
r_{j}+F_{j}^{*}-1>r_{j}+\frac{2}{\epsilon}\Oproc K^{k-1}-1>r_{j}+\frac{1}{\epsilon}\sum_{i=0}^{k-1}\Oproc K^{i}-1\\
>r_{j}+\sum_{i=\ell_{\max}-k+1}^{\ell_{\max}}\len(C_{i})\ge\celle(C_{\ell_{\max}-k+1})\qedhere
\end{multline*}
\end{proof}

\begin{figure}
\centering \begin{tikzpicture}[scale=0.8]
  \draw[ultra thick, mygreen] (5, -0.9) rectangle (5, 2.5) node[pos=1, yshift=7pt] {$r_j$};
  \draw[ultra thick, mygreen] (6.5, -0.9) rectangle (6.5, -1.9) node[pos=1, yshift=-7pt] {$r_{j'}$};
  \node at (4-0.2, 2.0) {$C_{\ell_{\max}\phantom{-0}}$};
  \draw[thick] (5, 2.0 - 0.15) rectangle (5.3, 2.0 + 0.15);
  \node at (4-0.2, 1.5) {$C_{\ell_{\max}-1}$};
  \draw[thick] (4.7, 1.5 - 0.15) rectangle (5.6, 1.5 + 0.15);
  \node at (4-0.2, 1.0) {$C_{\ell_{\max}-2}$};
  \draw[thick] (5.6, 1.0 - 0.15) rectangle (8.3, 1.0 + 0.15);
  \node at (4-0.2, 0.5) {$C_{\ell_{\max}-3}$};
  \draw[thick] (5.6, 0.5 - 0.15) rectangle (13.7, 0.5 + 0.15);

  \node[thick] at (4, -0.4) {$\Seg(j)$};
  \draw[thick, myblue] (5, -0.4 - 0.15) rectangle (5.1, -0.4 + 0.15);
  \draw[thick, myblue] (5.1, -0.4 - 0.15) rectangle (5.2, -0.4 + 0.15);
  \draw[thick, myblue] (5.2, -0.4 - 0.15) rectangle (5.3, -0.4 + 0.15);
  \draw[thick, myblue] (5.3, -0.4 - 0.15) rectangle (5.4, -0.4 + 0.15);
  \draw[thick, myblue] (5.4, -0.4 - 0.15) rectangle (5.5, -0.4 + 0.15);
  \draw[thick, myblue] (5.5, -0.4 - 0.15) rectangle (5.6, -0.4 + 0.15);
  \draw[thick, myblue] (5.6, -0.4 - 0.15) rectangle (5.9, -0.4 + 0.15);
  \draw[thick, myblue] (5.9, -0.4 - 0.15) rectangle (6.2, -0.4 + 0.15);
  \draw[thick, myblue] (6.2, -0.4 - 0.15) rectangle (6.5, -0.4 + 0.15);
  \draw[thick, myblue] (6.5, -0.4 - 0.15) rectangle (6.8, -0.4 + 0.15);
  \draw[thick, myblue] (6.8, -0.4 - 0.15) rectangle (7.1, -0.4 + 0.15);
  \draw[thick, myblue] (7.1, -0.4 - 0.15) rectangle (7.4, -0.4 + 0.15);
  \draw[thick, myblue] (7.4, -0.4 - 0.15) rectangle (7.7, -0.4 + 0.15);
  \draw[thick, myblue] (7.7, -0.4 - 0.15) rectangle (8.0, -0.4 + 0.15);
  \draw[thick, myblue] (8.0, -0.4 - 0.15) rectangle (8.3, -0.4 + 0.15);
  \draw[thick, myblue] (8.3, -0.4 - 0.15) rectangle (9.2, -0.4 + 0.15);
  \draw[thick, myblue] (9.2, -0.4 - 0.15) rectangle (10.1, -0.4 + 0.15);
  \draw[thick, myblue] (10.1, -0.4 - 0.15) rectangle (11.0, -0.4 + 0.15);
  \draw[thick, myblue] (11.0, -0.4 - 0.15) rectangle (11.9, -0.4 + 0.15);
  \draw[thick, myblue] (11.9, -0.4 - 0.15) rectangle (12.8, -0.4 + 0.15);
  \draw[thick, myblue] (12.8, -0.4 - 0.15) rectangle (13.7, -0.4 + 0.15);

  \node[thick] at (4, -1.4) {$\Seg(j')$};
  \draw[thick, myblue] (6.5, -1.4 - 0.15) rectangle (6.6, -1.4 + 0.15);
  \draw[thick, myblue] (6.6, -1.4 - 0.15) rectangle (6.7, -1.4 + 0.15);
  \draw[thick, myblue] (6.7, -1.4 - 0.15) rectangle (6.8, -1.4 + 0.15);
  \draw[thick, myblue] (6.8, -1.4 - 0.15) rectangle (6.9, -1.4 + 0.15);
  \draw[thick, myblue] (6.9, -1.4 - 0.15) rectangle (7.0, -1.4 + 0.15);
  \draw[thick, myblue] (7.0, -1.4 - 0.15) rectangle (7.1, -1.4 + 0.15);
  \draw[thick, myblue] (7.1, -1.4 - 0.15) rectangle (7.2, -1.4 + 0.15);
  \draw[thick, myblue] (7.2, -1.4 - 0.15) rectangle (7.3, -1.4 + 0.15);
  \draw[thick, myblue] (7.3, -1.4 - 0.15) rectangle (7.4, -1.4 + 0.15);
  \draw[thick, myblue] (7.4, -1.4 - 0.15) rectangle (7.7, -1.4 + 0.15);
  \draw[thick, myblue] (7.7, -1.4 - 0.15) rectangle (8.0, -1.4 + 0.15);
  \draw[thick, myblue] (8.0, -1.4 - 0.15) rectangle (8.3, -1.4 + 0.15);
  \draw[thick, myblue] (8.3, -1.4 - 0.15) rectangle (9.2, -1.4 + 0.15);
  \draw[thick, myblue] (9.2, -1.4 - 0.15) rectangle (10.1, -1.4 + 0.15);
  \draw[thick, myblue] (10.1, -1.4 - 0.15) rectangle (11.0, -1.4 + 0.15);
  \draw[thick, myblue] (11.0, -1.4 - 0.15) rectangle (11.9, -1.4 + 0.15);
  \draw[thick, myblue] (11.9, -1.4 - 0.15) rectangle (12.8, -1.4 + 0.15);
  \draw[thick, myblue] (12.8, -1.4 - 0.15) rectangle (13.7, -1.4 + 0.15);
\end{tikzpicture} \caption{Example cell and segment construction with $K=3$}
\label{fig:cells} 
\end{figure}
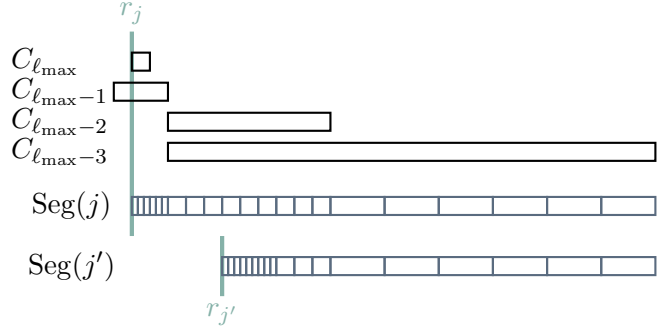

Based on the segments in the sets $\Seg(j,C)$ we define (IP2) where
we introduce a variable $y_{j,S}$ for each job $j$ and each segment
$S\in\Seg(j)$. This variable $y_{j,S}$ models whether we select
the segment $S$ for job $j$ which implies that we allow job $j$
to finish after time $\celle(S)$ (and are willing to pay for this).
This is similar to the variables $x_{j,t}$ in (IP). Like in (IP),
we have a constraint for each interval $[s,t]$. For each set $\Seg(j,C)$
we define that the first segment $S\in\Seg(j,C)$ has cost $c_{j,S}=w_{j}(\celle(S)-r_{j})$
and each other segment $S'\in\Seg(j,C)$ has cost $c_{j,S'}=w_{j}\len(S')$.
Moreover, we require that from each set $\Seg(j,C)$ a prefix of its
segments is selected. Thus, intuitively, if we select the first segment
$S$ of a set $\Seg(j,C)$ then we pay the full price for not processing
job $j$ until $\celle(S)$, and for each other segments $S'\in\Seg(j,C)$
we pay the price for delaying $j$ by $\len(S')$ more.
\begin{align}
\min\sum_{j\in J}\sum_{S\in\Seg(j)} & c_{j,S}y_{j,S}\nonumber \\
\sum_{\substack{j\in J\\
s\le r_{j}\le t
}
}\sum_{\substack{S\in\Seg(j)\\
t\in S
}
}y_{j,S}\cdot p_{j} & \ge\sum_{\substack{j\in J\\
s\le r_{j}\le t
}
}p_{j}-(t-s) &  & \,\,\,\,\,\,\forall s\le t\le T\nonumber \\
y_{j,S} & \ge y_{j,S'} &  & \begin{aligned} & \forall j\in J,C\in\C\ \forall S,S'\in\Seg(j,C)\\
 & \quad\text{with }\celle(S)<\celle(S')
\end{aligned}
\label{eq:prefix}\\
y_{j,S} & \in\{0,1\} &  & \forall j\in J\forall S\in\Seg(j)\text{ }\nonumber 
\end{align}
We prove now that
by reducing (IP) to (IP2) we lose only a factor of $1+O(\epsilon)$
in expectation. We define $\OPT^{\mathrm{(IP)}}$ and $\OPT^{\mathrm{(IP2)}}$
to be the costs of the optimal solutions to (IP) and (IP2), respectively.
Note that $\OPT^{\mathrm{(IP2)}}$ depends on $\Orel$ and $\Oproc$. 
\begin{restatable}{lem}{ipleip}
\label{lem:IP-le-IP2}
For all possible values for the offsets $\Orel,\Oproc$ it
holds that $\OPT^{\mathrm{(IP)}}\le\OPT^{\mathrm{(IP2)}}$. 
\end{restatable}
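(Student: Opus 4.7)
The plan is to transform, in a cost--nonincreasing way, any feasible solution $\{y_{j,S}\}$ of (IP2) into a feasible solution $\{x_{j,t}\}$ of (IP); this immediately yields $\OPT^{\mathrm{(IP)}}\le\OPT^{\mathrm{(IP2)}}$. The natural definition is to set $x_{j,t}:=1$ precisely when some selected segment of $j$ extends past time $t$, i.e., when there exists $S\in\Seg(j)$ with $y_{j,S}=1$ and $\celle(S)>t$, and $x_{j,t}:=0$ otherwise.

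Checking feasibility of $\{x_{j,t}\}$ is short. Monotonicity $x_{j,t}\ge x_{j,t+1}$ is immediate, because any $S$ witnessing $x_{j,t+1}=1$ also witnesses $x_{j,t}=1$. For the demand constraint on an interval $[s,t]$, the key observation is that $\Seg(j)$ partitions $[r_{j},T)$, so at most one segment of $j$ contains the point $t$; whenever that segment's $y$-variable equals $1$, the corresponding $x_{j,t}$ is forced to be $1$ by definition. Hence, per job $j$, the (IP) left-hand side at $[s,t]$ dominates the (IP2) left-hand side, and since the right-hand sides of the two constraints coincide the (IP2) constraint implies the (IP) constraint.

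The main step is the cost comparison. For each job $j$ and each cell $C$, let $e_C$ denote the right endpoint of the last selected segment in $\Seg(j,C)$, or set $e_C:=r_j$ if none is selected; define $e_j:=\max_{C}e_C$. By construction $x_{j,t}=1$ exactly for $r_j\le t<e_j$, so the (IP) contribution of $j$ equals $w_j(e_j-r_j)$. On the (IP2) side, constraint (\ref{eq:prefix}) forces the selected segments in $\Seg(j,C)$ to form a prefix $S_1,\dotsc,S_k$; since these segments share a common length and are right-aligned with $C$, the cost $w_j(\celle(S_1)-r_j)+w_j\sum_{i\ge 2}\len(S_i)$ telescopes to $w_j(e_C-r_j)$. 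Summing over all contributing cells gives the (IP2) cost for $j$ as $\sum_{C}w_j(e_C-r_j)\ge\max_{C}w_j(e_C-r_j)=w_j(e_j-r_j)$, and summing over jobs finishes the argument.

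The only subtle point I anticipate is that (IP2) enforces no monotonicity across distinct cells $C\ne C'$ in $\Seg(j)$, so a feasible (IP2) solution could wastefully pay the initial ``jump'' of $w_j(\celle(S_1)-r_j)$ for several cells simultaneously; fortunately this only inflates the (IP2) cost, which is the favourable direction for the inequality we want.
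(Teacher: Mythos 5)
Your proposal is correct and follows essentially the same route as the paper: the same transformation (set $x_{j,t}=1$ exactly when some selected segment of $j$ ends strictly after $t$), the same per-job feasibility comparison of the two left-hand sides, and the same telescoping of prefix costs within a cell. The only cosmetic difference is that you lower-bound the (IP2) cost of a job by the maximum of the per-cell telescoped costs $w_j(e_C-r_j)$, whereas the paper singles out the largest selected cell directly; both yield the same bound $w_j(e_j-r_j)$.
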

\begin{proof}
Consider some solution for (IP2). Let
$j$ be a job and let $S\in\Seg(j)$ be the rightmost segment $S$
with $y_{j,S}=1$, i.e., the segment with maximal $\celle(S)$ such
that $y_{j,S}=1$. In (IP) we set $x_{j,t}=1$ for each $t<\celle(S)$
and $x_{j,t}=0$ for each $t\ge\celle(S)$. This forms a feasible
solution of at most the same cost: For feasibility consider the covering
constraint in (IP) and (IP2) for some fixed $s\le t\le T$. As the
right-hand side is equal in both integer programs, it suffices to
show that the left-hand side of (IP) is at least as big as that in
(IP2), that is, 
\begin{equation}
\sum_{\substack{j\in J\\
s\le r_{j}\le t
}
}x_{j,t}\cdot p_{j}\ge\sum_{\substack{j\in J\\
s\le r_{j}\le t
}
}\sum_{\substack{S\in\Seg(j)\\
t\in S
}
}y_{j,S}\cdot p_{j}\label{eq:left-hands}
\end{equation}
The elements of the sums in (\ref{eq:left-hands}) correspond to jobs.
For each job $j$ with $s\le r_{j}\le t$ the left sum contains $p_{j}$
if $x_{j,t}=1$ and $0$, otherwise. The right sum contains $p_{j}$,
if $y_{j,S^{(t)}}=1$ for the segment $S^{(t)}\in\Seg(j)$ that contains
$t$ and $0$, otherwise. By definition of $x_{j,t}$, however, we
know that if $y_{j,S^{(t)}}=1$, then $x_{j,t}=1$ as well. Hence,
(\ref{eq:left-hands}) follows.

For the cost of the solution we will consider each job independently,
that is, we show that for each job $j$ it holds that 
\[
\sum_{t>r_{j}}w_{j}x_{j,t}\le\sum_{S\in\Seg(j)}c_{j,S}y_{j,S}.
\]
Let $C$ be the largest cell such that $y_{j,S}=1$ for some $S\in\Seg(j,C)$
and let $S_{1},S_{2},\dotsc,S_{k}\in\Seg(j,C)$ be the maximal prefix
of segments with $y_{j,S_{i}}=1$ for $i=1,\dotsc,k$. By definition
of $x_{j,t}$ we know that $x_{j,t}=1$ if and only if $t<\celle(S_{k})$.
Thus, 
\begin{align*}
\sum_{S\in\Seg(j)}c_{j,S}y_{j,S}\ge\sum_{i=1}^{k}c_{j,S_{i}} & =w_{j}(\celle(S_{1})-r_{j})+\sum_{i=2}^{k}w_{j}\len(S_{i})\\
 & =w_{j}(\celle(S_{1})-r_{j})+\sum_{i=2}^{k}w_{j}(\celle(S_{i})-\celle(S_{i-1}))\\
 & =w_{j}(\celle(S_{k})-r_{j})=\sum_{t\ge r_{j}}w_{j}x_{j,t}\qedhere
\end{align*}
\end{proof}

On the other hand, we prove that in expectation $\OPT^{\mathrm{(IP2)}}$
is not much more expensive than $\OPT^{\mathrm{(IP)}}$. Given an
optimal solution to (IP), we define a solution to (IP2) which incurs
for each job $j$ a cost of at most $(1+O(\epsilon))F_{j}^{*}w_{j}$
if the fifth condition of Lemma~\ref{lem:define-segments} is satisfied
for $j$ (which happens with probability $1-O(\epsilon)$). On
the other hand, we show that even if this condition is \emph{not}
satisfied for $j$, then the cost of $j$ in (IP2) is at most $O(F_{j}^{*}w_{j})$,
which yields a cost of at most $(1+O(\epsilon))F_{j}^{*}w_{j}$ in
expectation. Taking the sum over all jobs $j$ yields the following
lemma.
\begin{restatable}{lem}{costincreasesmall}
\label{lem:expected-cost-increase-small}It holds that $\mathbb{E}\left[\OPT^{\mathrm{(IP2)}}\right]\le(1+O(\epsilon))\OPT^{\mathrm{(IP)}}$. 
\end{restatable}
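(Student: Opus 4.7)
My plan is to turn an optimal (IP) solution $\{x_{j,t}^{*}\}$ (with flow times $F_{j}^{*}$) into a feasible (IP2) solution $\{y_{j,S}\}$ by a natural thresholding rule, bound its per-job cost, split on the event of Lemma~\ref{lem:define-segments}(\ref{lem:define-segments-F*}), and take expectations. First I set
$y_{j,S}:=1$ iff $\cellb(S)<r_{j}+F_{j}^{*}$, that is, iff $S$ contains some $t$ with $x_{j,t}^{*}=1$. The prefix constraints~\eqref{eq:prefix} hold automatically because $x_{j,t}^{*}$ is non-increasing in $t$ and the segments inside each $\Seg(j,C)$ are ordered left-to-right. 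Feasibility of the covering constraints follows because their (IP2) left-hand sides dominate the (IP) left-hand sides term by term: whenever $x_{j,t}^{*}=1$, the unique segment of $\Seg(j)$ containing $t$ is selected.

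Next, fix a job $j$ and let $C_{\ell_{\max}},\dotsc,C_{0}$ be the cells used to define $\Seg(j)$; let $C_{m}:=C_{j}^{*}$ and $S^{*}\in\Seg(j,C_{m})$ be the segment containing $r_{j}+F_{j}^{*}-1$. For each $i>m$ the whole interval spanned by $\Seg(j,C_{i})$ lies in $[r_{j},r_{j}+F_{j}^{*})$, so every segment of $\Seg(j,C_{i})$ is selected, contributing $w_{j}(\celle(C_{i})-r_{j})$ via the telescoping identity $c_{j,S_{1}}+\sum_{\ell\ge 2}c_{j,S_{\ell}}=w_{j}(\celle(S_{p})-r_{j})$; for $i=m$ the selected prefix ends at $S^{*}$ and contributes $w_{j}(\celle(S^{*})-r_{j})$. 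Combining $\celle(S^{*})-r_{j}\le F_{j}^{*}+\len(S^{*})$ with $\len(S^{*})\le\max(1,\len(C_{m})/K^{2})$, and $\celle(C_{i})-r_{j}\le 2\len(C_{i})$ with the geometric sum $\sum_{i>m}\len(C_{i})\le 2\len(C_{m})/K$, yields the uniform bound
$$\mathrm{cost}(j)\;\le\;w_{j}F_{j}^{*}+O\!\left(\frac{w_{j}\len(C_{m})}{K}\right).$$

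I then split on the event from Lemma~\ref{lem:define-segments}(\ref{lem:define-segments-F*}). In the good event (probability $\ge 1-O(\epsilon)$) one has $\len(C_{m})\le\epsilon K F_{j}^{*}$, so the bound immediately gives $\mathrm{cost}(j)\le(1+O(\epsilon))w_{j}F_{j}^{*}$. The main obstacle is the complementary bad event: here I need $\mathrm{cost}(j)=O(w_{j}F_{j}^{*})$ with a constant not blowing up with $1/\epsilon$. My plan is to exploit the structural inequality $\celle(C_{m+1})\le r_{j}+F_{j}^{*}-1$ together with the fact that every cell boundary is a multiple of $\len(C_{m+1})$ up to the bounded random shift $\Orel$: this forces $\len(C_{m+1})\le F_{j}^{*}+\Oproc\le F_{j}^{*}+K$ and hence $\len(C_{m})=K\len(C_{m+1})\le K F_{j}^{*}+K^{2}$; feeding this into the uniform bound and finely decomposing the bad event (as in the proof of Lemma~\ref{lem:define-segments}, where it splits into ``$F_{j}^{*}$ close to $\Oproc K^{k}$'' and ``bad offset $\Orel$'') yields an expected extra cost of order $\epsilon w_{j}F_{j}^{*}$.

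Putting the two cases together and taking expectation over the random offsets $\Orel,\Oproc$ gives $\mathbb{E}[\mathrm{cost}(j)]\le(1+O(\epsilon))w_{j}F_{j}^{*}$. Summing over $j\in J$ by linearity of expectation then yields $\mathbb{E}[\OPT^{\mathrm{(IP2)}}]\le\sum_{j}(1+O(\epsilon))w_{j}F_{j}^{*}=(1+O(\epsilon))\OPT^{\mathrm{(IP)}}$, which is the claim.
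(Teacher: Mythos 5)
Your construction of the (IP2) solution, the feasibility argument, the telescoping cost identity, and the split into the event of Lemma~\ref{lem:define-segments}(\ref{lem:define-segments-F*}) all match the paper, and your good-event bound goes through (up to the harmless corner case $\len(S^*)=1$). The gap is in the bad event, and it is a real one. First, the structural inequality you rely on, $\len(C_{m+1})\le F_{j}^{*}+\Oproc$, is false: the chain cell $C_{m+1}$ need not start anywhere near $r_{j}$. For a counterexample take $\Oproc=1$, put $r_{j}$ one unit to the left of the last level-$(m+2)$ boundary inside $C_{m+1}$ (so that $\celle(C_{m+2})=\celle(C_{m+1})-\len(C_{m+2})$), and let $r_{j}+F_{j}^{*}-1=\celle(C_{m+1})$; then $F_{j}^{*}\approx 2\len(C_{m+2})+2=2\len(C_{m+1})/K+2$, while $\len(C_{m+1})=K\len(C_{m+2})\gg F_{j}^{*}+\Oproc$ once $K$ is large. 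The true deterministic fact (and the one the paper proves) is weaker: the interval $[\celle(C_{m+2}),\celle(C_{m+1}))$ lies inside $[r_{j},r_{j}+F_{j}^{*})$ and has length a \emph{positive} multiple of $\len(C_{m+2})$, hence $F_{j}^{*}\ge\len(C_{m+2})=\len(C_{m})/K^{2}=\len(S^{*})$, i.e.\ only $\len(C_{m})\le K^{2}F_{j}^{*}$. Second, even if your inequality held, your plan would not close the case: plugging $\len(C_{m})\le KF_{j}^{*}+K^{2}$ into your uniform bound gives $\mathrm{cost}(j)\le O(w_{j}F_{j}^{*})+O(Kw_{j})$, and multiplying by the $O(\epsilon)$ bad-event probability leaves a term $O(\epsilon K w_{j})$, which is not $O(\epsilon)w_{j}F_{j}^{*}$ when $F_{j}^{*}=O(1)$, since $\epsilon K=\epsilon(2/\epsilon)^{1/\epsilon}$ is unbounded; the appeal to ``finely decomposing the bad event'' is not an argument and Lemma~\ref{lem:define-segments}'s proof does not supply the needed conditional decay of $\len(C_{m})$.

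The underlying flaw is that your uniform bound $\mathrm{cost}(j)\le w_{j}F_{j}^{*}+O(w_{j}\len(C_{m})/K)$ is too lossy in exactly the term that matters: you bound the contribution of cell $C_{m+1}$ by $2\len(C_{m+1})$ via the geometric sum, whereas it should be bounded by $\celle(C_{m+1})-r_{j}\le\celle(S^{*})-r_{j}\le F_{j}^{*}+\len(S^{*})$ directly. Doing this, and starting the geometric sum only at $i=m+2$ (whose total is $O(\len(C_{m+2}))\le O(F_{j}^{*})$ by the deterministic fact above), gives the unconditional bound $\mathrm{cost}(j)\le 8\,w_{j}F_{j}^{*}$ for \emph{all} offsets, with an absolute constant independent of $\epsilon$ and $K$. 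That is precisely what the paper does, and it is what makes the $O(\epsilon)$-probability bad event contribute only $O(\epsilon)w_{j}F_{j}^{*}$ in expectation; with that replacement your argument becomes correct and coincides with the paper's proof.
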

\begin{proof}
Let $F_{j}^{*}$ denote the flow time
in an optimal solution for (IP), that is, the optimal solution is
defined with $x_{j,t}=1$ if and only if $t<r_{j}+F_{j}^{*}$. For
each job $j$ and we set $y_{j,S}=1$ for all segments $S\in\Seg(j)$
that intersect with $[r_{j},r_{j}+F_{j}^{*})$ and $y_{j,S}=0$, otherwise.

For feasibility consider the covering constraint in (IP) and (IP2)
for some fixed $s\le t\le T$. As the right-hand side is equal in
both integer programs, it suffices to show that the left-hand side
of (IP2) is at least as big as that in (IP), that is, 
\begin{equation}
\sum_{\substack{j\in J\\
s\le r_{j}\le t
}
}x_{j,t}\cdot p_{j}\le\sum_{\substack{j\in J\\
s\le r_{j}\le t
}
}\sum_{\substack{S\in\Seg(j)\\
t\in S
}
}y_{j,S}\cdot p_{j}\label{eq:left-hands2}
\end{equation}
Let $j\in J$ with $s\le r_{j}\le t$. We argue that if $x_{j,t}=1$
then also $y_{j,S^{(t)}}=1$ for the segment $S^{(t)}\in\Seg(j)$
with $t\in S^{(t)}$. Indeed, this follows from the definition of
$y_{j,S}$, since $S^{(t)}$ intersects with $[r_{j},r_{j}+F_{j}^{*})$
(both contain $t$). Thus (\ref{eq:left-hands2}) holds.

For the cost of the solution we consider each job individually, that
is, we show that $\sum_{S\in\Seg(j)}c_{j,S}y_{j,S}$ is at most $(1+O(\epsilon))w_{j}F_{j}^{*}$
in expectation. More precisely, we first argue that it never exceeds
$O(w_{j}F_{j}^{*})$; then we show that with probability $1-O(\epsilon)$
it does not exceed $(1+O(\epsilon))w_{j}F_{j}^{*}$. To this end,
we fix a job $j$.

Let $C_{\ell_{\max}},\dotsc,C_{0}$ be the sequence of cells in the
construction of $\Seg(j)$. Let $k\in\N$ such that $C_{k}=C_{j}^{*}$,
that is, there is a segment $S^{*}\in\Seg(j,C_{k})$ with $r_{j}+F_{j}^{*}-1\in S^{*}$.
Observe that the costs of segments are chosen in a way that for each
$C_{i}$, $i>k$, we have 
\[
\sum_{S\in\Seg(j,C_{i})}c_{j,S}y_{j,S}=\sum_{S\in\Seg(j,C_{i})}c_{j,S}=w_{j}(\celle(C_{i})-r_{j}).
\]
Further, for cell $C_{k}$ we have 
\begin{equation}
\sum_{S\in\Seg(j,C_{k})}c_{j,S}y_{j,S}=w_{j}(\celle(S^{*})-r_{j})\le w_{j}F_{j}^{*}+w_{j}\len(S^{*}).\label{eq:seg-k}
\end{equation}
We first bound (\ref{eq:seg-k}) by $2w_{j}F_{j}^{*}$. If $k\in\{\ell_{\max},\ell_{\max}-1\}$
then this holds trivially, because $\len(S^{*})=1\le F_{j}^{*}$.
Otherwise, we have that $F_{j}^{*}\ge\len(C_{k})/K^{2}=\len(S^{*})$.
The first inequality holds because $[\celle(C_{k+2}),\celle(C_{k+1}))$
is contained in $[r_{j},r_{j}+F_{j}^{*})$ and its length is an integer
multiple of $\len(C_{k+2})=\Oproc K^{\ell_{\max}-(k+2)}=\len(C_{k})/K^{2}$.
It follows that 
\begin{align*}
\sum_{S\in\Seg(j)}c_{j,S}y_{j,S} & =w_{j}(\celle(S^{*})-r_{j})+\sum_{i=k+1}^{\ell_{\max}}w_{j}(\celle(C_{i})-r_{j})\\
 & \le2w_{j}(\celle(S^{*})-r_{j})+\sum_{i=k+2}^{\ell_{\max}}w_{j}(\celle(C_{i})-r_{j})\\
 & \le4w_{j}F_{j}^{*}+\sum_{i=k+2}^{\ell_{\max}}w_{j}\sum_{\ell=i}^{\ell_{\max}}\len(C_{\ell})
\end{align*}
Moreover, 
\begin{multline*}
\sum_{i=k+2}^{\ell_{\max}}w_{j}\sum_{\ell=i}^{\ell_{\max}}\len(C_{\ell})=\sum_{i=k+2}^{\ell_{\max}}w_{j}\sum_{\ell=i}^{\ell_{\max}}\Oproc K^{\ell_{\max}-\ell}\\
\le2\Oproc\sum_{i=k+2}^{\ell_{\max}}w_{j}K^{\ell_{\max}-i}\le4w_{j}\Oproc K^{\ell_{\max}-(k+2)}\le4w_{j}F_{j}^{*}.
\end{multline*}
We conclude that for all $\Oproc,\Orel$ it holds that 
\[
\sum_{S\in\Seg(j)}c_{j,S}y_{j,S}\le8w_{j}F_{j}^{*}.
\]
It remains to prove that with probability $1-O(\epsilon)$ the
selected segments have cost at most $(1+O(\epsilon))w_{j}F_{j}^{*}$.
To this end, assume we are in the case of Lemma~\ref{lem:define-segments}:\ref{lem:define-segments-F*}.
In other words, $F_{j}^{*}\ge\len(C_{k})/(\epsilon K)$. This implies
\begin{align*}
\sum_{S\in\Seg(j)}c_{j,S}y_{j,S} & =w_{j}(\celle(S^{*})-r_{j})+w_{j}\sum_{i=k+1}^{\ell_{\max}}(\celle(C_{i})-r_{j})\\
 & \le w_{j}F_{j}^{*}+w_{j}\len(S^{*})+w_{j}\sum_{i=k+1}^{\ell_{\max}}\sum_{\ell=i}^{\ell_{\max}}\len(C_{\ell}).
\end{align*}
Furthermore, $\len(S^{*})=\len(C_{k})/K^{2}\le\epsilon F_{j}^{*}/K\le\epsilon F_{j}^{*}$
and 
\begin{multline*}
w_{j}\sum_{i=k+1}^{\ell_{\max}}\sum_{\ell=i}^{\ell_{\max}}\len(C_{\ell})=w_{j}\sum_{i=k+1}^{\ell_{\max}}\sum_{\ell=i}^{\ell_{\max}}\Oproc K^{\ell_{\max}-\ell}\le2w_{j}\sum_{i=k+1}^{\ell_{\max}}\Oproc K^{\ell_{\max}-i}\\
\le4w_{j}\Oproc K^{\ell_{\max}-(k+1)}=4w_{j}\len(C_{k})/K\le4\epsilon w_{j}F_{j}^{*}.
\end{multline*}
This concludes the proof. 
\end{proof}

Note that Lemma~\ref{lem:expected-cost-increase-small} implies that
there exist values for $\Oproc,\Orel$ such that $\OPT^{\mathrm{(IP2)}}\le(1+O(\epsilon))\OPT^{\mathrm{(IP)}}$;
since the number of combinations for $\Oproc,\Orel$ is bounded by
$O_{\epsilon}(T)$ we simply guess these values.

\subsection{Geometric visualization}
\begin{figure}
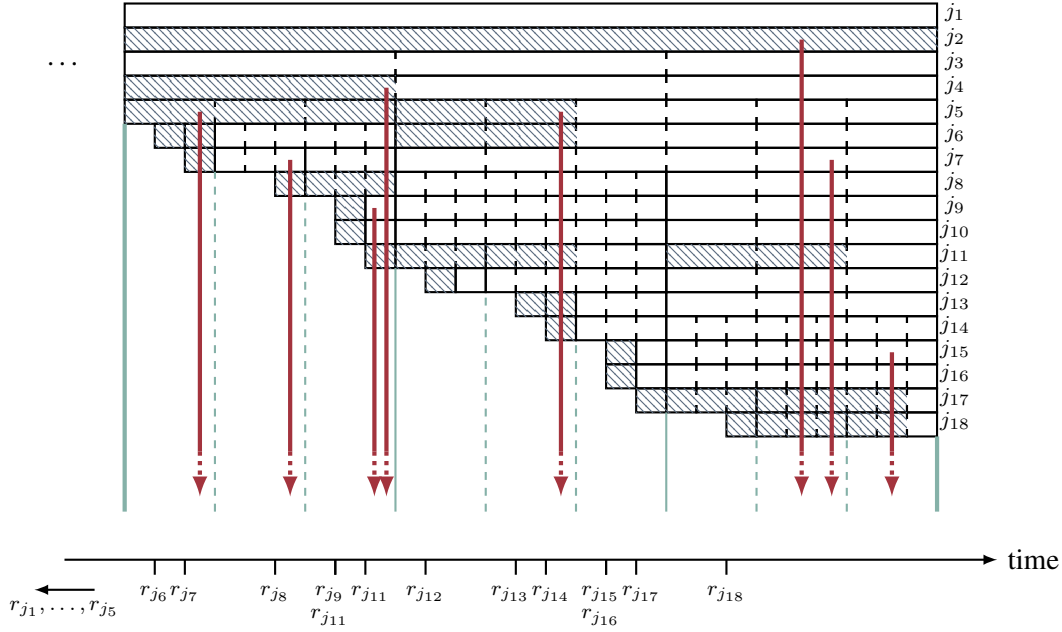

\centering
\figgeometric{full}
\caption{Geometric visualization of (IP2).
The rays are depicted in red and the hierachical decomponsition is visualized in green.
The hatched rectangles form an example solution.
The line between adjacent rectangles is interrupted, if they belong to the same set $\R(j, C)$,
i.e., these are the sets of which a solution needs to select a prefix.
The capacities and costs of the rectangles and the demands
of the rays are not depicted.}
\label{fig:geometric-problem2} 
\end{figure}

We now visualize (IP2) in a similar way as (IP) before (see Figure~\ref{fig:geometric-problem2}.
Again, we assume that the jobs are labeled $1,\dotsc,n$
according to $\prec$. For each job $j$ and each segment $S\in\Seg(j)$
we introduce a rectangle $R(j,S)=[\cellb(S),\celle(S))\times[j,j+1)$.
For each job $j$ and each cell $C\in\C$ we define $\R(j,C):=\{R(j,S) \mid S\in\Seg(j,C)\}$,
for each cell $C$ let $\R(C):=\bigcup_{j}\R(j,C)$, and additionally
we define $\R:=\bigcup_{C\in\C}\R(C)$. For each interval $I=[s,t]$
we define $j(I)$ to be the job $j$ with minimum $r_{j}$ such that
$s\le r_{j}$; we introduce a vertical ray $L(I):=\{t+\frac{1}{2}\}\times[j(I)+\frac{1}{2},\infty)$
corresponding to $I$. Then $L(I)$ intersects a rectangle $R(j,S)$
if and only if the variable $y_{j,S}$ appears in the left-hand side
of the constraint corresponding to $I$. 
\begin{lem}
\label{lem:ray-intersect} Let $I=[s,t]$. The ray $L(I)$ intersects
a rectangle $R(j,S)$ corresponding to a segment $S$ if and only
if $s\le r_{j}\le t$ and $t\in S$. 
\end{lem}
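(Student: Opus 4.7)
The plan is to decompose the intersection condition into its $x$- and $y$-coordinate parts, and analyze each using only the definitions of $L(I)$, $R(j,S)$, and $j(I)$.

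For the $x$-coordinate, $L(I)$ lies on the vertical line $x = t+\tfrac{1}{2}$, and $R(j,S)$ occupies $x \in [\cellb(S), \celle(S))$. Since $\cellb(S)$, $\celle(S)$, and $t$ are integers, we have $t+\tfrac{1}{2} \in [\cellb(S), \celle(S))$ if and only if $\cellb(S) \le t \le \celle(S)-1$, which is exactly the condition $t \in S$.

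For the $y$-coordinate, $L(I)$ spans $y \in [j(I)+\tfrac{1}{2}, \infty)$ while $R(j,S)$ spans $y \in [j, j+1)$. These intervals overlap if and only if $j(I)+\tfrac{1}{2} < j+1$, which (since $j$ and $j(I)$ are integers) is equivalent to $j \ge j(I)$. Now I use the convention fixed earlier that the jobs are relabeled $1,\dotsc,n$ in non-decreasing order of release dates, together with the definition of $j(I)$ as the minimum-index job with $r_{j(I)} \ge s$: under this labeling, $j \ge j(I)$ if and only if $r_j \ge s$.

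Combining the two, the intersection $L(I) \cap R(j,S) \ne \emptyset$ holds if and only if $t \in S$ and $r_j \ge s$. To finish, I note that by part 1 of Lemma~\ref{lem:define-segments}, $\Seg(j)$ partitions $[r_j, T)$, so any $S \in \Seg(j)$ satisfies $S \subseteq [r_j, T)$; hence $t \in S$ automatically yields $r_j \le t$. Therefore $t \in S$ together with $r_j \ge s$ is equivalent to $s \le r_j \le t$ and $t \in S$, which is the stated condition. The only step that requires a second of thought is translating $j \ge j(I)$ into $r_j \ge s$ via the sorting convention; everything else is a direct unpacking of the definitions.
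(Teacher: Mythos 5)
Your proof is correct and follows essentially the same route as the paper's: both arguments simply unpack the definitions of $L(I)$, $R(j,S)$ and $j(I)$, use the non-decreasing labeling by release dates to translate the $y$-overlap condition $j\ge j(I)$ into $r_j\ge s$, and use $S\subseteq[r_j,T)$ to recover $r_j\le t$ from $t\in S$. Your reading of $j(I)$ as the $\prec$-minimal job with $r_{j(I)}\ge s$ is the same (necessary) interpretation the paper's proof uses implicitly, so there is no substantive difference.
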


\begin{proof}
Suppose that $L(I)$ and $R(j,S)$ intersect. Then $\cellb(S)\le t+\frac{1}{2}<\celle(S)$
and $r_{j(I)}\le r_{j}$. Note that $s\le r_{j(I)}$ and hence $s\le r_{j}$.
Also, $t+\frac{1}{2}\ge\cellb(S)\ge r_{j}$ holds. On the other hand,
assume that $s\le r_{j}\le t$ and $t\in S$. Then $j(I)\prec j$
and thus $[j(I)+\frac{1}{2},\infty)\cap[j,j+1)\ne\emptyset$. Therefore,
$L(I)$ and $R(j,S)$ intersect. 
\end{proof}
For each ray $L(I)$ corresponding to an interval $I=[s,t]$ we define
a demand of $d(I):=\sum_{j\in J:s\le r_{j}\le t}p_{j}-(t-s)$ (which
is the right-hand side of the constraint corresponding to $I$ in
(IP2)). For each rectangle $R=R(j,S)$ we define a capacity $p_{R}:=p_{j}$
and a cost $c_{R}:=c_{j,S}$. This yields a geometric covering problem
in which our goal is to select rectangles (respecting the prefix constraints~\eqref{eq:prefix})
of minimum total cost such that each ray $L(I)$ intersects selected
rectangles with a total capacity of at least $d(I)$.

We will solve this problem approximately with a dynamic program. In
our DP, we will take advantage of the hierarchical structure induced
by the cells $\C$. To this end, note that for a cell $C$ with two
children cells $C_{1},C_{2}\subseteq C$, the rectangles in $\mathcal{R}(C_{1})$
and $\mathcal{R}(C_{2})$ are independent, in the sense that if a
ray $L(I)$ intersects a rectangle in $\mathcal{R}(C_{1})$ then it
does not intersect any rectangle in $\mathcal{R}(C_{2})$ and vice
versa.

\section{Computing an approximate solution}

Assume that we are given the cells $\C$ and the rectangles $\R$
as defined above. We want to compute a set $\R'\subseteq\R$ of small
total cost that represents a feasible\emph{ }solution to (IP2),
i.e., such that if we set $y_{j,S}:=1$ for each rectangle $R(j,S)\in\R'$
then we satisfy (IP2).

The cells $\C$ induce a tree $G=(V,E)$ as follows. For each cell
$C\in\C$ we introduce a vertex $v_{C}$ in $V$. We connect two vertices
$v_{C},v_{C'}$ by an edge $\{v_{C},v_{C'}\}$ if $C$ is the
parent cell of $C'$ in the hierarchy, i.e., if $C'\subseteq C$ and
$C$ is of level $\ell$ and $C'$ is of level $\ell+1$ for some
$\ell\in\N$.
We define that the root of $G$ is the vertex that corresponds to
the unique cell of level 0.

Let $\Q$ denote the set of all paths in $G$ for which one of the
endpoints is the root of $G$. For convenience, for a path $Q\in\Q$
we write $C\in Q$ if $v_{C}\in Q$ (i.e., abusing notation we also
interpret $Q$ as a set of cells). For each path $Q\in\Q$ we define
$\R(Q):=\cup_{C\in Q}\R(C)$, i.e., all rectangles assigned to cells
on $Q$. Let $\I$ denote the set of all intervals $I=[s,t]$ with
$0\le s\le t\le T$. For each interval $I\in\I$ we define $\R(I)$
as a set of all rectangles $R(j,S)$ such that $R(j,S)$ and $L(I)$
intersect. 
\begin{restatable}{lem}{coverinterval}
\label{lem:path-cover-interval}For each interval $I=[s,t]\in\I$
there is a path $Q\in\Q$ such that $\R(I)\subseteq\R(Q)$. 
\end{restatable}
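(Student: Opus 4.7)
The plan is to exploit the hierarchical, nested structure of the grid cells $\C$: all cells containing a fixed time point $t$ form a chain from the root of $G$ to a leaf, and I will show that every rectangle in $\R(I)$ is assigned to a cell on that chain.

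First I would fix $I=[s,t]\in\I$ and define $Q$ to be the unique path in $G$ from the root cell (the unique level-$0$ cell, which contains $[0,T)$) down to the leaf cell of level $\ell_{\max}$ whose half-open interval contains $t$. By construction of the hierarchical grid, for every level $\ell\in\{0,1,\dotsc,\ell_{\max}\}$ there is exactly one cell of level $\ell$ containing $t$ (since the $K$ children of any cell $C$ partition $C$ into half-open intervals), and these are exactly the cells on $Q$. In particular, a cell $C\in\C$ satisfies $t\in C$ if and only if $v_C\in Q$.

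Next I would take an arbitrary rectangle $R(j,S)\in\R(I)$ and show $R(j,S)\in\R(Q)$. By Lemma~\ref{lem:ray-intersect}, the fact that $L(I)$ intersects $R(j,S)$ gives $s\le r_j\le t$ and, crucially, $t\in S$. Let $C\in\C$ be the cell with $S\in\Seg(j,C)$; by property~2 of Lemma~\ref{lem:define-segments}, $S\subseteq C$, so $t\in C$. By the preceding paragraph this forces $v_C\in Q$, and therefore $R(j,S)\in\R(C)\subseteq\R(Q)$. Since $R(j,S)\in\R(I)$ was arbitrary, this establishes $\R(I)\subseteq\R(Q)$.

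I do not expect a real obstacle here: the lemma is essentially a structural consequence of how $\Seg(j,C)$ was defined to lie inside $C$, combined with the laminarity of the grid. The only mild care needed is to make sure the leaf cell containing $t$ is well defined (it is, because $0\le t\le T$ and the level-$0$ cell covers $[0,T)$, while each level refines its parent into half-open children), and to remember that we only need the cells attached to segments that actually contain $t$ — rectangles of $\R(I)$ corresponding to other segments of $j$ simply do not appear in $\R(I)$ at all.
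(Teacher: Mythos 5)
Your proof is correct and follows essentially the same route as the paper's: use Lemma~\ref{lem:ray-intersect} to get $t\in S$ for every $R(j,S)\in\R(I)$, note $S\subseteq C$ for the cell $C$ to which $S$ is assigned, and observe that the cells containing $t$ form exactly the root-to-leaf path $Q$. No gaps; the extra remarks about well-definedness of the leaf cell are harmless and match the paper's construction.
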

\begin{proof}
By Lemma~\ref{lem:ray-intersect} the set $\R(I)$ contains exactly
those rectangles $R(j,S)$ where $s\le r_{j}\le t$ and $t\in S$.
Let $C_{0},C_{1},\dotsc,C_{\ell_{\max}}$ be the cells of level $0,1,\dotsc,\ell_{\max}$
that contain $t$. Note that for each level there is exactly one such
cell and $v_{C_{i+1}}$ must be the child of $v_{C_{i}}$ for each
$i$. This is precisely the path $Q$ such that $\R(I)\subseteq\R(Q)$:
Let $R(j,S)\in\R(I)$. Then $t\in S$. This means the cell $C$ which
$S$ is assigned to must also contain $t$ (since $S\subseteq C$).
Hence $C$ is in the path $Q$ and $R(j,S)\in\R(C)\subseteq\R(Q)$. 
\end{proof}

Next, we define a framework for approximating our problem by a dynamic
program; in~\cite{DBLP:conf/focs/Batra0K18} a similar framework
was implicitly used. We will define a global solution $\R'\subseteq\R$
and for each path $Q\in\Q$ a subset $\R'_{Q}\subseteq\R'\cap\R(Q)$.
We will ensure that for each $Q\in\Q$ and each interval $I\in\I$
with $\R(I)\subseteq\R(Q)$ the rectangles in $\R'_{Q}$ are sufficient
to satisfy the demand of $L(I)$. Also, we want the subsets $\left\{ \R'_{Q}\right\} _{Q\in\Q}$
to be consistent in the sense that for two paths $Q,Q'\in\Q$ with
$Q\supseteq Q'$ (i.e., $Q$ is an extension of $Q'$), the set $\R'_{Q}$
can contain only those rectangles from cells in $Q'$ that are contained
in $\R'_{Q'}$ (but possibly $\R'_{Q}$ does not contain all of them).
Moreover, we want that for each set $\R'_{Q}$ there are only polynomially
many candidates. Therefore, we will require for each $Q\in\Q$
that $\R'_{Q}\in\chi_{Q}$ for a family of sets $\chi_{Q}$ that we
can compute in time $(nP)^{O_{\epsilon}(1)}$, and hence in particular
$|\chi_{Q}|\le(nP)^{O_{\epsilon}(1)}$. These properties will be
useful for our dynamic program later.

Formally, we require $(\R',\left\{ \R'_{Q}\right\} _{Q\in\Q})$ to
be a consistent solution as defined below.
For any set of rectangles $\R'\subseteq\R$ we define $p(\R'):=\sum_{R(j,S)\in\R'}p(j,S)$
and $c(\R'):=\sum_{R(j,S)\in\R'}c(j,S)$. 
\begin{defn}
\label{def:framework} Let \emph{$\left\{ \chi_{Q}\right\} _{Q\in\Q}$}
be a family with $\chi_{Q}\subseteq2^{\R(Q)}$ for each $Q\in\Q$.
Let $\R'\subseteq\R$ and $\left\{ \R'_{Q}\right\} _{Q\in\Q}$ with
$\R'_{Q}\in\chi_{Q}$ for each $Q\in\Q$. We say that $(\R',\left\{ \R'_{Q}\right\} _{Q\in\Q})$
forms a \emph{consistent solution for $\left\{ \chi_{Q}\right\} _{Q\in\Q}$}
if 
\begin{enumerate}
\item $\R'_{Q}\subseteq\R'$ for each $Q\in\Q$, \label{en:framework-subset} 
\item for each $I\in\I,Q\in\Q$ with $\R(I)\subseteq\R(Q)$, we have that
$\R'_{Q}$ covers $I$, i.e., $p(\R'_{Q}\cap\R(I))\ge d(I)$, \label{en:framework-cover} 
\item for any two paths $Q,Q'\in\Q$ with $Q\supseteq Q'$ we have that
$\R'_{Q}\cap\R(Q')\subseteq\R'_{Q'}$. \label{en:framework-paths} 
\end{enumerate}
We define $c(\R')$ to be the \emph{cost }of $(\R',\left\{ \R'_{Q}\right\} _{Q\in\Q})$.

It should be noted that by Definition~\ref{def:framework} the rectangles
in $\R'$ form a feasible solution. This follows from the first and
second property and Lemma~\ref{lem:path-cover-interval}: For every
interval $I\in\I$ there is a $Q\in\Q$ with $\R(I)\subseteq\R(Q)$.
By the second property we have $p(\R'_{Q}\cap\R(I))\ge d(I)$ and
since $\R'_{Q}\subseteq\R'$ we also have $p(\R'\cap\R(I))\ge d(I)$.
We can compute the cheapest consistent solution for a given family
$\left\{ \chi_{Q}\right\} _{Q\in\Q}$ with an easy dynamic program. 
\end{defn}

\begin{restatable}{lem}{dynprog}
\label{lem:DP}Given a family $\left\{ \chi_{Q}\right\} _{Q\in\Q}$,
we can compute the cheapest consistent solution for $\left\{ \chi_{Q}\right\} _{Q\in\Q}$
in time $\left(|\I|\cdot|\Q|\cdot\max_{Q\in\Q}|\chi_{Q}|\right)^{O(1)}$. 
\end{restatable}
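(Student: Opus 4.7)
The plan is to set up a standard bottom-up dynamic program on the tree $G$, exploiting two structural observations. First, there is a bijection between paths $Q \in \Q$ and vertices: each root-rooted path is determined by its deeper endpoint $v_C$, so we write $Q_C$ for the path ending at $v_C$. Second, the cost $c(\R')$ of a consistent solution decomposes cleanly per cell: since each rectangle lies in $\R(C)$ for a unique cell $C$, and by property~\ref{en:framework-paths} of Definition~\ref{def:framework} (applied with $Q' = Q_C$ and any extension $Q \supseteq Q_C$) any rectangle $R \in \R(C)$ that appears in some $\R'_Q$ must already appear in $\R'_{Q_C}$, we get
\[
\textstyle c(\R') = \sum_{C \in \C} c(\R'_{Q_C} \cap \R(C)),
\]
where without loss of generality $\R' = \bigcup_{Q \in \Q} \R'_Q$ since we only pay for selected rectangles.

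The DP table has an entry $f(v_C, S)$ for every $v_C \in V$ and every $S \in \chi_{Q_C}$, representing the cheapest way to pick $\R'_{Q_{C''}} \in \chi_{Q_{C''}}$ for every descendant $C''$ of $C$ (including $C$), fixing $\R'_{Q_C} = S$, so that the consistency condition~\ref{en:framework-paths} holds along every parent-child pair in the subtree and the covering condition~\ref{en:framework-cover} holds for every $I \in \I$ with $\R(I) \subseteq \R(Q_{C''})$ for some $C''$ in the subtree. The value $f(v_C,S)$ accounts only for the cost attributable to $\R(C'')$'s in the subtree. If $S$ fails to cover some interval $I$ with $\R(I) \subseteq \R(Q_C)$ (checked via Lemma~\ref{lem:ray-intersect} and summing the capacities of rectangles of $S$ hit by $L(I)$), set $f(v_C,S) = \infty$; otherwise
\[
f(v_C,S) \;=\; c(S \cap \R(C)) \;+\; \sum_{\substack{v_{C'} \text{ child}\\\text{of } v_C}} \min_{\substack{T \in \chi_{Q_{C'}}\\ T \cap \R(Q_C)\, \subseteq\, S}} f(v_{C'}, T).
\]
The answer is $\min_{S \in \chi_{Q_{C_0}}} f(v_{C_0}, S)$ where $C_0$ is the unique level-$0$ cell. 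The recurrence is correct because the decomposition of the cost is additive over cells, consistency among arbitrary $Q \supseteq Q'$ follows by transitivity from parent-child consistency (using $\R(Q'') \subseteq \R(Q')$), and every covering pair $(I,Q)$ with $\R(I) \subseteq \R(Q)$ is checked exactly at the vertex corresponding to the deepest cell of $Q$ that still contains $\R(I)$ (existence of at least one such $Q$ being given by Lemma~\ref{lem:path-cover-interval}).

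For the running time, the number of DP states is $\sum_{v_C} |\chi_{Q_C}| \le |\Q| \cdot \max_Q |\chi_Q|$. Evaluating one state requires checking coverage of all $I$ with $\R(I) \subseteq \R(Q_C)$, which costs $\mathrm{poly}(|\I|,|\R|)$, and for each of the $K = O_\epsilon(1)$ children computing a minimum over at most $\max_Q |\chi_Q|$ admissible successors, at $\mathrm{poly}(|\R|)$ time per comparison. Since $|\R| \le |\Q| \cdot n$ is polynomial in the instance size and $K$ is a constant, the total running time is $(|\I| \cdot |\Q| \cdot \max_Q |\chi_Q|)^{O(1)}$.

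I do not expect a major obstacle here: the main conceptual point is the per-cell cost decomposition, which removes double-counting concerns, and the only delicate verification is that the local parent-child consistency filter inside the min suffices to enforce the global condition~\ref{en:framework-paths}, which holds by the transitivity argument sketched above.
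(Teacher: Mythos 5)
Your proposal is correct and takes essentially the same approach as the paper: a bottom-up dynamic program over states $(v,S)$ with $S\in\chi_{Q_v}$, enforcing only parent--child consistency $T\cap\R(Q_v)\subseteq S$ and recovering the full condition~\ref{en:framework-paths} by the same transitivity argument, then reading off the cheapest root entry. The only cosmetic differences are that you check the covering constraints at every vertex whereas the paper checks them only at leaves and deduces coverage for interior paths from consistency, and that you propagate a numeric cost via the per-cell decomposition instead of storing the rectangle sets themselves; neither affects correctness or the claimed running time.
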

\begin{proof}
We build a dynamic programming table that contains an entry for each
pair $(v,\S')$, where $v$ is a vertex and $\S'\in\chi_{Q}$
for $Q$ which we define to be the path from the root to $v$.
This entry stores a set of rectangles $\S\subseteq\bigcup_{Q':Q'\supseteq Q}\R(Q')$,
that is, $\S$ contains rectangles that belong to cells that are either
descendants of $v$ or on the path from $v$ to the root. The set
$\S$ is chosen such that $\S\cap\R(Q)=\S'$ and $p(\S\cap\R(I))\ge d(I)$
for every $Q'\supseteq Q$ and $I\in\I$ with $\R(I)\subseteq\R(Q)$.
If there is no such $\S$ then a special symbol indicates that the
choice of $\S'$ is infeasible, that is, the value $\R'_{Q}$ in a
consistent solution cannot be $\S'$.

We fill the table starting with the leafs and then compute each inner
node's entries using the previously computed childrens' entries. Suppose
that $v$ is a leaf and let $Q$ be the path from the root to $v$.
We check for each $\S'\in\chi_{Q}$ whether for all $I\in\I$ with
$\R(I)\subseteq\R(Q)$ it holds that $p(\S'\cap\R(I))\ge d(I)$. If
so, we store $\S'$ in the entry for $(v,\S')$. Otherwise, we insert
a special symbol indicating that the choice is infeasible.

Now let $v$ be an inner node and let again $Q$ be the path from
root to $v$. Let $\S'\in\chi_{Q}$. In the following we describe
how to compute the table entry for $(v,\S')$. Let $u_{1},\dotsc,u_{K}$
be the children of $v$ and $Q_{1},\dotsc,Q_{K}$ the extension of
$Q$ to each child. For each $i=1,\dotsc,K$ let $\S'_{i}\in\chi_{Q_{i}}$
be the set $\S'_{i}$ with
\begin{equation}
\S'_{i}\cap\R(Q)\subseteq\S'\label{eq:dyn-subset}
\end{equation}
for which the set $\S_{i}$ stored in $(u_{i},\S'_{i})$ minimizes
$c(\S_{i}\setminus R(Q))$. If for some $i$ no such $\S'_{i}$
exists, then we determine that the choice
$\S'$ is infeasible. Otherwise, we insert for $(v,\S')$ the entry
\[
\S=\S'\cup\S_{1}\cup\cdots\cup\S_{K}.
\]
Eventually, this dynamic program will compute various solutions for
the root $r$, namely one solution $\S$ for each $S'\in\chi_{\{r\}}$.
As an overall solution $\S^{*}$ we output the solution $\S$
that minimizes $c(\S)$. We define $\{\S_{Q}^{*}\}_{Q\in\Q}$ by memoization:
Recall that $\S^{*}=\S'\cup\S_{1}\cup\cdots\cup\S_{K}$ where $\S'\in\chi_{\{r\}}$,
$\S_{i}$ is the entry for $(u_{i},\S'_{i})$, $u_{i}$ is the $i$'th
child of $r$, and $\S'_{i}\in\chi_{\{r,u_{i}\}}$. We set $\S_{\{r\}}^{*}=\S'$.
Likewise, we set $\S_{\{r,u_{i}\}}^{*}=\S'_{i}$. Each $\S_{i}$ is
derived from selections at the children of $u_{i}$. In the same way
we recursively define $\S_{Q}^{*}$ on each path $Q$. Indeed, $(\S^{*},\{\S_{Q}^{*}\}_{Q\in\Q})$
forms a consistent solution:

Let $Q=\{v_{1},v_{2},v_{3},\dotsc,v_{k}\}\in\Q$ where $v_{1}$ is
the root. For all $i\le k$ let $Q^{(\le i)}=\{v_{1},\dotsc,v_{i}\}$.
By construction we have that $\S_{Q^{(\le i)}}^{*}\subseteq\S_{Q^{(\le i-1)}}^{*}$.
In particular, 
\[
\S_{Q^{(\le k)}}^{*}\subseteq\S_{Q^{(\le k-1)}}^{*}\subseteq\cdots\subseteq\S_{Q^{(\le1)}}^{*}=\S^{*}.
\]
This proves (\ref{en:framework-subset}) of Definition~\ref{def:framework}.
Moreover, by Equation~(\ref{eq:dyn-subset}) the dynamic program
ensures that 
\[
\S_{Q^{(\le k)}}^{*}\cap\R(Q^{(\le k-1)})\subseteq\S_{Q^{(\le k-1)}}^{*}.
\]
It follows for all $i<k$ that 
\[
\S_{Q^{(\le k)}}^{*}\cap\R(Q^{(\le i)})=(\S_{Q^{(\le k)}}^{*}\cap\R(Q^{(\le k-1)}))\cap\R(Q^{(\le i)})\subseteq\S_{Q^{(\le k-1)}}^{*}\cap\R(Q^{(\le i)}).
\]
Iterating this argument we obtain 
\[
\S_{Q^{(\le k)}}^{*}\cap\R(Q^{(\le i)})\subseteq\S_{Q^{(\le i)}}^{*}\cap\R(Q^{(\le i)})=\S_{Q^{(\le i)}}^{*}.
\]
and thus (\ref{en:framework-paths}) of Definition~\ref{def:framework}
holds. Finally, we prove (\ref{en:framework-cover}) of Definition~\ref{def:framework}.
To this end let $I\in\I,Q\in\Q$ with $\R(I)\subseteq\R(Q)$. We need
to show that $p(\S_{Q}^{*}\cap\R(I))\ge d(I)$. Let $Q'\supseteq Q$
be any extension of $Q$ ending in a leaf. Then also $\R(I)\subseteq\R(Q)\subseteq\R(Q')$.
The way we define the dynamic program on leafs it holds that $p(\S_{Q'}^{*}\cap\R(I))\ge d(I)$.
Moreover, since we already showed (\ref{en:framework-paths}) it follows
that $\S_{Q'}^{*}\cap\R(Q)\subseteq\S_{Q}^{*}$. Hence, 
\[
p(\S_{Q}^{*}\cap\R(I))\ge p((\S_{Q'}^{*}\cap\R(Q))\cap\R(I))=p(\S_{Q'}^{*}\cap\R(I))\ge d(I).
\]
It remains to check that this dynamic program indeed computes the
cheapest consistent solution. To this end let $(\R',\{\R'_{Q}\}_{Q\in\Q})$
be the cheapest consistent solution. We show inductively that for
each path $Q$ from the root to a vertex $v$ the entry $\S$ computed
for $(v,\R'_{Q})$ satisfies $c(\S)\le c(\bigcup_{Q'\supseteq Q}\R'_{Q})$.
It follows that $\S^{*}$ is of minimal cost, because $c(\S^{*})$
is at most the cost of the entry computed for $(r,\R'_{\{r\}})$ which
is at most 
\[
c(\bigcup_{Q\supseteq\{r\}}\R'_{Q})=c(\bigcup_{Q\in\Q}\R'_{Q})\le c(\R').
\]
If $v$ is a leaf the claim is satisfied by definition, since the
entry of the dynamic table is $\R'_{Q}=\bigcup_{Q'\supseteq Q}\R'_{Q'}$.
Now assume that $v$ is not a leaf. Let---as in the definition of
the dynamic program---$u_{1},\dotsc,u_{K}$ be the children of $v$
and $Q_{1},\dotsc,Q_{K}$ the extensions of $Q$ to each child. Let
$\hat{\S}_{i}$ be the entry computed for $(u_{i},\R'_{Q_{i}})$,
$i=1,\dotsc,K$. By induction hypothesis we have for each $i=1,\dotsc,K$
that $c(\hat{\S}_{i})\le c(\bigcup_{Q'\supseteq Q_{i}}\R'_{Q'})$.
Since the rectangle sets in both sides contain the same rectanges
from $\R(Q)$, namely $\R'_{Q_{i}}\cap\R(Q)$, we also have 
\[
c(\hat{\S}_{i}\setminus\R(Q))\le c(\bigcup_{Q'\supseteq Q_{i}}\R'_{Q'}\setminus\R(Q)).
\]
The rectangles at entry $(v,\R'_{Q})$ were chosen as $\S=\R'_{Q}\cup\S_{1}\cup\cdots\cup\S_{K}$
where $\S_{i}$ minimizes $c(\S_{i}\setminus\R(Q))$ over all entries
$\S_{i}$ at $(u_{i},\S'_{i})$ with $\S'_{i}\in\chi_{Q_{i}}$ and
$\S'_{i}\cap\R(Q)\subseteq\R'_{Q}$. Since $\R'_{Q_{i}}\cap\R(Q)\subseteq\R'_{Q}$
by property (\ref{en:framework-paths}) of Definition~\ref{def:framework},
$\hat{S}_{i}$ is among these candidates and in particular $c(\S_{i}\setminus\R(Q))\le c(\hat{\S}_{i}\setminus\R(Q)$.
Hence, 
\begin{multline*}
c(\S)\le c(\R'_{Q}\cup\S_{1}\cup\cdots\cup\S_{K})\\
\le c(\R'_{Q})+c(\bigcup_{Q'\supseteq Q_{1}}\R'_{Q'}\setminus\R(Q))+\cdots+c(\bigcup_{Q'\supseteq Q_{K}}\R'_{Q'}\setminus\R(Q))\le c(\bigcup_{Q'\supseteq Q}\R'_{Q'}).
\end{multline*}
This finishes the proof that $(\S^{*},\{S_{Q}^{*}\}_{Q\in\Q})$ is
a consistent solution of minimal cost.

The claimed running time follows because there are $|\Q|\max_{Q\in\Q}|\chi_{Q}|$
entries in the dynamic table, computing each leaf's entry requires
$O(|\I|\cdot\max_{Q\in\Q}|\chi_{Q}|)$ operations, and computing each
inner vertex's entry requires $O(K\max_{Q\in\Q}|\chi_{Q}|)\le|\Q|\max_{Q\in\Q}|\chi_{Q}|$
operations. 
\end{proof}

The hard part is to show that in polynomial time we can compute a
polynomial size family $\left\{ \chi_{Q}\right\} _{Q\in\Q}$ that
admits a consistent solution of small cost. We will prove the following
lemma in Section~\ref{subsec:quasi-polynomial-size-consistent}
and Section~\ref{subsec:Existence-cheap-consistent-solution}. 
\begin{lem}
\label{lem:cheap-consistent-solution}In time $(nP)^{O_{\epsilon}(1)}$
we can compute a family $\left\{ \chi_{Q}\right\} _{Q\in\Q}$ with
$\max_{Q\in\Q}|\chi_{Q}|\le(nP)^{O_{\epsilon}(1)}$ for which there
exists a consistent solution of cost at most $(2+\epsilon)\OPT^{\mathrm{(IP2)}}$. 
\end{lem}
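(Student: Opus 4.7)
The proof splits into two parts. First, I would construct the family $\{\chi_Q\}_{Q \in \Q}$ in time $(nP)^{O_\epsilon(1)}$ with $\max_{Q \in \Q} |\chi_Q| \le (nP)^{O_\epsilon(1)}$. Second, I would show this family contains a consistent solution of cost at most $(2+\epsilon)\,\OPT^{\mathrm{(IP2)}}$.

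For the construction, note that each path $Q \in \Q$ has at most $\ell_{\max}+1 = O_\epsilon(\log nP)$ cells. Within a cell $C \in Q$, a feasible selection from $\R(C)$ picks for every job $j$ a prefix of $\R(j,C)$; listing these prefixes job-by-job is exponential, so I would compress as follows. Classify jobs by $p_j$ and $w_j$ into $O_\epsilon(\log^2 nP)$ power-of-$(1+\epsilon)$ buckets, so that rectangles within a bucket are essentially interchangeable in cost and capacity. Describe each configuration by a \emph{profile} per cell: for each bucket and for each of the $\le K^2$ column positions in $C$ (the level-$(\ell(C)+2)$ subcells right-aligned with $C$), record only the number of selected rectangles, quantized to a power of $(1+\epsilon)$. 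This gives $(nP)^{O_\epsilon(1)}$ profiles per cell. Additionally, the monotone-width property of Lemma~\ref{lem:define-segments} (part~4) along rays lets me prune profile combinations that are inconsistent across cells of $Q$, keeping $|\chi_Q| \le (nP)^{O_\epsilon(1)}$ overall. Every $\chi_Q$ can then be enumerated in polynomial time.

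To exhibit a cheap consistent solution, I would canonicalize an optimal solution $\S^*$ of (IP2) cell by cell. For each cell $C$, each bucket, and each column, round the number of selected rectangles up to the nearest canonical (power of $(1+\epsilon)$) value, filling in greedily by smallest cost. Only adding rectangles preserves ray coverage, and the inflation from bucket plus count quantization is $(1+O(\epsilon))$. The factor of $2$ enters when repairing the prefix condition~\eqref{eq:prefix}: after rounding, a selected rectangle in $\R(j,C)$ may be missing the rectangles strictly to its left in $\Seg(j,C)$, and completing the prefix can up to double the cost in each $\R(j,C)$. A charging argument --- each filler rectangle is charged to a rectangle already paid for by $\S^*$ within the same $\R(j,C)$ --- bounds the overhead by a factor of $2$. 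Taking $\R'$ to be the globally canonicalized selection and $\R'_Q := \R' \cap \R(Q)$ then verifies conditions (\ref{en:framework-subset})--(\ref{en:framework-paths}) of Definition~\ref{def:framework}, yielding total cost $\le (2+\epsilon)\,\OPT^{\mathrm{(IP2)}}$.

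The hardest part will be making the canonicalization actually fit within the $(2+\epsilon)$ budget rather than a larger constant: the losses from bucket/count quantization and from prefix repair must share the $\epsilon$-overhead and the factor $2$, so a global charging scheme across cells of a path (exploiting the hierarchical decomposition and the monotone widths along rays) is likely needed, as naive per-cell accounting would stack losses multiplicatively. Pinning the constant exactly to $2+\epsilon$ --- so that it matches the approximation ratio of the final theorem --- is where the technical weight of the proof resides.
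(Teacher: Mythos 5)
There is a genuine gap, and it sits exactly where the real difficulty of this lemma lies: the size of the family $\left\{ \chi_{Q}\right\} _{Q\in\Q}$. Your per-cell ``profile'' already does not have $(nP)^{O_{\epsilon}(1)}$ many options: with $O_{\epsilon}(\log^{2}nP)$ buckets and up to $K^{2}$ column positions, each carrying its own quantized count, the number of profiles per cell is exponential in the number of (bucket, column) pairs, i.e.\ at least quasi-polynomial. Worse, even if a single cell admitted polynomially many profiles, a path $Q$ contains $O_{\epsilon}(\log nP)$ cells, so enumerating independent profiles along $Q$ gives $(nP)^{O_{\epsilon}(\log nP)}$ candidates for $\R'_{Q}$ — this is precisely the quasi-polynomial bound of the relaxed statement proved in Section~\ref{subsec:quasi-polynomial-size-consistent}, not the polynomial bound claimed in Lemma~\ref{lem:cheap-consistent-solution}. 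Your one sentence asserting that the monotone-width property lets you ``prune'' the combinations down to $(nP)^{O_{\epsilon}(1)}$ is unsupported, and it is not a routine step: the entire content of Section~\ref{subsec:Existence-cheap-consistent-solution} is devoted to making the per-path description polynomial. The paper does this by (i) replacing the per-cell/per-type budgets by smoothed budgets $B^{\mathrm{round}}(C)$ that decay slowly along a path and can therefore be jointly guessed in time $(nP)^{O_{\epsilon}(1)}$ (Lemma~\ref{lem:guessA}); (ii) introducing critical density pairs and additional budgets $B^{\mathrm{add}}(C)$ so that for all but $O_{\epsilon}(\log nP)$ combinations of a cell on $Q$ and a type, the budget need not be guessed at all, the omitted rectangles being compensated by extra cheap-density purchases (Lemmas~\ref{lem:guessB} and~\ref{lma:no-density-pairs}); and (iii) a separate treatment of large rectangles, where buying two jobs in place of one creates slack that allows ``irrelevant'' pairs to be dropped from $\R'_{\l,Q}$. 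Note in particular that the paper's polynomial-size sets $\R'_{\sm,Q},\R'_{\l,Q}$ are \emph{proper subsets} of $\R'\cap\R(Q)$; your choice $\R'_{Q}:=\R'\cap\R(Q)$ is exactly what forces the description (and hence $|\chi_{Q}|$) to blow up.

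Secondary, but worth flagging: your source of the factor $2$ (repairing the prefix condition~\eqref{eq:prefix} after rounding counts, with a charging of each filler rectangle inside the same $\R(j,C)$) is not worked out and differs from the mechanism that actually yields $2+\epsilon$ in the paper, namely the GreedySelect analysis (fractional cost $(1+\epsilon)\sum_{s'}B(s')$ plus at most one rounded-up job per kind, Lemma~\ref{lem:greedyselect}) for small rectangles and the duplication of jobs ($j^{(1)},j^{(2)}$ or $\tilde{j}$) for large ones. As written, your canonicalization also needs, but does not verify, the domination property for \emph{every} ray $L(I)$ after quantizing counts per column — coverage of a ray depends on which rows (jobs, ordered by $\prec$) are selected, not only on how many rectangles per bucket and column are selected, which is why the paper's greedy always fills from the bottom (largest release dates) first. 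In short, the proposal at best re-derives the relaxed quasi-polynomial variant of the lemma and is missing the ideas that make the family polynomial.
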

Then Lemmas~\ref{lem:DP} and \ref{lem:cheap-consistent-solution}
yield a $(2+\epsilon)$-approximation algorithm with a running time
of $(nP)^{O_{\epsilon}(1)}$. The black-box reduction in \cite[Section 4]{DBLP:conf/soda/FeigeKL19}
then implies our main result. 
\begin{thm}
There exists a polynomial time $(2+\epsilon)$-approximation algorithm
for weighted flow time on a single machine when preemptions are allowed. 
\end{thm}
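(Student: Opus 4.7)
The plan is to chain together all the reductions and the algorithmic lemmas developed in the paper so far, and then invoke the black-box polynomial-time reduction from Feige, Kulkarni, and Li at the end. The pseudo-polynomial intermediate algorithm works as follows. First, I would apply the standard transformations from the beginning of Section~\ref{sec:geo} so that $\min_j p_j = 1$, $\max_j p_j = P$, and $1 \le w_j \le O_\epsilon(n^2 P)$, all at a cost of a $(1+\epsilon)$ factor. Then I would enumerate all $O_\epsilon(T) = O_\epsilon(nP)$ possible pairs of offsets $(\Orel, \Oproc)$ for the hierarchical grid. By Lemma~\ref{lem:expected-cost-increase-small} at least one such pair realizes $\OPT^{\mathrm{(IP2)}} \le (1+O(\epsilon))\OPT^{\mathrm{(IP)}}$, so guessing pays only a polynomial factor in running time and no loss in approximation ratio.

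For each guess, I would build the rectangles $\R$, cells $\C$, and tree $G$ as defined, then invoke Lemma~\ref{lem:cheap-consistent-solution} to construct the family $\{\chi_Q\}_{Q\in\Q}$ in time $(nP)^{O_\epsilon(1)}$ with $\max_Q |\chi_Q| \le (nP)^{O_\epsilon(1)}$, and finally run the dynamic program of Lemma~\ref{lem:DP} to obtain the cheapest consistent solution $(\R', \{\R'_Q\}_Q)$ in polynomial time. By property~\ref{en:framework-subset} and~\ref{en:framework-cover} of Definition~\ref{def:framework} combined with Lemma~\ref{lem:path-cover-interval}, the set $\R'$ yields a feasible assignment to the $y$-variables of (IP2) of cost at most $(2+\epsilon) \OPT^{\mathrm{(IP2)}}$. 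Next I would apply Lemma~\ref{lem:IP-le-IP2} (or rather its proof, which is constructive) to convert the (IP2)-solution into a feasible (IP)-solution of no larger cost, and then apply the theorem of Batra, Garg, and Kumar cited above to turn this (IP)-solution into an actual preemptive schedule with the same weighted flow time. Composing all the factors, the schedule has total weighted flow time at most $(2+\epsilon)(1+O(\epsilon))\OPT = (2+O(\epsilon))\OPT$, and after rescaling $\epsilon$ this is the desired $(2+\epsilon)$ ratio.

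The above algorithm runs in time $(nP)^{O_\epsilon(1)}$, which is only \emph{pseudo}-polynomial since $P$ can be exponential in the encoding length of the input. To remove this dependency on $P$, I would invoke the black-box reduction of Feige, Kulkarni, and Li~\cite[Section 4]{DBLP:conf/soda/FeigeKL19}, which converts any pseudo-polynomial $\alpha$-approximation for weighted flow time on a single machine into a genuinely polynomial-time $(1+\epsilon)\alpha$-approximation. Applying it to our $(2+\epsilon)$-approximation and, once again, rescaling $\epsilon$, yields a polynomial-time $(2+\epsilon)$-approximation algorithm.

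The main obstacle in this plan is the content hidden inside Lemma~\ref{lem:cheap-consistent-solution}, which still needs to be established: one must exhibit a polynomial-size family $\{\chi_Q\}_{Q\in\Q}$ of ``candidate slices'' of the solution along each root-to-leaf path such that (a)~the candidates admit a globally consistent selection covering all interval constraints, and (b)~the best such consistent selection has cost only $(2+\epsilon)$ times $\OPT^{\mathrm{(IP2)}}$. Every other step in the plan is either a direct invocation of a previously stated result or a routine composition of approximation factors; by contrast, Lemma~\ref{lem:cheap-consistent-solution} is where the geometric/hierarchical structure of (IP2), the prefix constraints~\eqref{eq:prefix}, and the precise $2+\epsilon$ bound have to be exploited simultaneously, and I expect this to be the technical heart of the paper (treated in Sections~\ref{subsec:quasi-polynomial-size-consistent} and~\ref{subsec:Existence-cheap-consistent-solution}).
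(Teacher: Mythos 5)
Your proposal is correct and follows essentially the same route as the paper: combine Lemma~\ref{lem:cheap-consistent-solution} with the dynamic program of Lemma~\ref{lem:DP} to get a pseudo-polynomial $(2+\epsilon)$-approximation (after guessing the offsets and translating the (IP2) solution back to a schedule via Lemma~\ref{lem:IP-le-IP2} and the Batra--Garg--Kumar theorem), then apply the Feige--Kulkarni--Li black-box reduction to obtain polynomial time. You also correctly identify Lemma~\ref{lem:cheap-consistent-solution} as the technical heart, which the paper establishes in Sections~\ref{subsec:quasi-polynomial-size-consistent} and~\ref{subsec:Existence-cheap-consistent-solution}.
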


\section{\label{subsec:quasi-polynomial-size-consistent}Quasi-polynomial
size consistent solution}

In this section, we prove a weaker variant of Lemma~\ref{lem:cheap-consistent-solution}
which already introduces several of our key techniques and leads to
a quasi-polynomial $(2+\epsilon)$-approximation. More precisely,
in this section we relax the condition in Lemma~\ref{lem:cheap-consistent-solution}
on the size of each set $\chi_{Q}$ with $Q\in\Q$ to $|\chi_{Q}|\le(nP)^{O_{\epsilon}(\log^{2}(nP))}$
and also the running time to $(nP)^{O_{\epsilon}(\log^{2}(nP))}$.

For each rectangle $R$ we define a \emph{density} $\rho_{R}$ which
approximately describes its cost-efficiency $c_{R}/p_{R}$. Instead
of using this ratio directly, we define $\rho_{R}:=(1+\epsilon)^{k}$
for the value $k\in\mathbb{Z}$ with $(1+\epsilon)^{k}\le c_{R}/p_{R}<(1+\epsilon)^{k+1}$.
In this way, $\rho_{R}$ differs from $c_{R}/p_{R}$ only by a small
factor of $1+\epsilon$, but we ensure that there are only $O_{\epsilon}(\log nP)$
different densities overall. Recall that we defined the set $\R(j,C)$
for combinations of a job $j$ and cell $C\in\C$ (which contains
all rectangles in $\R(C)$ corresponding to $j$). By construction,
almost all of these rectangles have the same cost $c_{R}$,
apart from the leftmost rectangle in $\R(j,C)$ whose cost might be
higher. Thus, we can describe the densities of the rectangles
in $\R(j,C)$ by only two values that we denote by $\rho_{j,C},\rho'_{j,C}$.
Formally, let $R,R'\in\R(j,C)$ be the leftmost and second leftmost
rectangles in $\R(j,C)$, respectively. We define $\rho_{j,C}:=\rho_{R}$
and $\rho'_{j,C}:=\rho_{R'}$; in case that $\left|\R(j,C)\right|\le1$
we define $\rho'_{j,C}:=\infty$ and if $\left|\R(j,C)\right|=0$
we define also $\rho{}_{j,C}:=\infty$.
Using these values $\rho_{j,C}$ and $\rho'_{j,C}$ we classify the
sets $\left\{ \R(j,C)\right\} _{j\in J,C\in\C}$ into types. 
\begin{defn}
For a job $j$ and a cell $C$ with $\R(j,C)$ we say that $\R(j,C)$
is of \emph{type }$\tau=(\rho,\rho',s)$ if $|\R(j,C)|=s$, $\rho_{j,C}=\rho$,
and $\rho'_{j,C}=\rho'$. 
\end{defn}

It turns out that there are only $O_{\epsilon}(\log(nP))$ different
types $\tau=(\rho,\rho',s)$ arising in the input, since in each set
$\R(j,C)$ the costs of the leftmost and second leftmost rectangles
differ only by a factor $O_{\epsilon}(1)$, all rectangles have the
same capacity, and $s=O_{\epsilon}(1)$. 
\begin{restatable}{lem}{notypes}\label{lem:no-types}
There are at most $O_{\epsilon}(\log(nP))$ different types $\tau$
for which there exists a set $\R(j,C)$ of type $\tau$. Moreover,
for each $\rho$ there are only $O_{\epsilon}(1)$ many pairs
$\rho',s$ for which there is a set $\R(j,C)$ of type $\tau=(\rho,\rho',s)$. 
\end{restatable}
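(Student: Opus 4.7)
The plan is to bound two quantities separately: the total number of distinct densities $\rho$ appearing in the instance, and, for each fixed $\rho$, the number of compatible pairs $(\rho',s)$. Combined with the trivial bound $s\le K^{2}=O_{\epsilon}(1)$ from Lemma~\ref{lem:define-segments}~(\ref{en:define-segments-no-segs}), these two bounds yield both parts of the claim.

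First I would bound the range of densities. After the preprocessing in Section~\ref{sec:geo} we have $p_{j}\in[1,P]$, $w_{j}\in[1,O_{\epsilon}(n^{2}P)]$, and every segment length is a positive integer bounded by $T=O(nP)$. Hence every ratio $c_{R}/p_{R}$ lies in $[1/P,\mathrm{poly}_{\epsilon}(nP)]$, and since $\rho_{R}$ is rounded to a power of $1+\epsilon$, only $O_{\epsilon}(\log(nP))$ distinct density values can arise overall.

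The heart of the argument is the bound $\rho_{j,C}/\rho'_{j,C}=O_{\epsilon}(1)$ whenever $|\R(j,C)|\ge 2$. Fix such a set, write $C=C_{k}$ in the cell sequence $C_{\ell_{\max}},\dotsc,C_{0}$ from the construction of $\Seg(j)$, let $\ell$ be the common length of segments in $\Seg(j,C)$, and let $S_{1}$ be the leftmost one. The two leftmost rectangles of $\R(j,C)$ share capacity $p_{j}$ and have costs $w_{j}(\celle(S_{1})-r_{j})$ and $w_{j}\ell$ respectively, so $\rho_{j,C}/\rho'_{j,C}\le(1+\epsilon)^{2}\cdot(\celle(S_{1})-r_{j})/\ell$. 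When $k=\ell_{\max}$ the ratio is $1$, since $\cellb(S_{1})=r_{j}$ and $\ell=1$. When $k<\ell_{\max}$ we have $\cellb(S_{1})=\celle(C_{k+1})$, and since each $C_{i}$ contains $\celle(C_{i+1})$ the telescoping estimate $\celle(C_{i})-\celle(C_{i+1})\le\len(C_{i})$, together with $\celle(C_{\ell_{\max}})-r_{j}\le\len(C_{\ell_{\max}})=\Oproc$, yields
\[
\celle(C_{k+1})-r_{j}\le\sum_{i=k+1}^{\ell_{\max}}\len(C_{i})=\Oproc\sum_{i=k+1}^{\ell_{\max}}K^{\ell_{\max}-i}\le\tfrac{K}{K-1}\len(C_{k+1}).
\]
Since $\ell\in\{1,\len(C_{k+1})/K\}$, this forces $(\celle(S_{1})-r_{j})/\ell=O(K)=O_{\epsilon}(1)$ in every case.

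Because densities are powers of $1+\epsilon$ and $\rho/\rho'$ lies in a constant-size range, only $O_{\epsilon}(1)$ values of $\rho'$ are compatible with any fixed $\rho$; together with $s\le K^{2}$ this gives the second statement, and further multiplication by the $O_{\epsilon}(\log(nP))$ choices of $\rho$ gives the first. The degenerate cases $|\R(j,C)|\in\{0,1\}$ (where $\rho'$ or both densities equal $\infty$) contribute only an additive constant. The main technical obstacle is verifying the telescoping bound on $\celle(S_{1})-r_{j}$; once that is in place, the remainder is bookkeeping about the rounding of $c_{R}/p_{R}$ to powers of $1+\epsilon$.
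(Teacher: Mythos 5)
Your proof is correct and takes essentially the same route as the paper: bound $s\le K^{2}$, show that the costs of the leftmost and the remaining rectangles of $\R(j,C)$ differ by only an $O_{\epsilon}(1)$ factor (the paper sandwiches all costs in $[w_{j}\len(C)/K^{2},\,w_{j}\len(C)]$, while you compare the two leftmost rectangles directly via the telescoping sum of cell lengths), and bound the overall density range polynomially in $nP$ to get $O_{\epsilon}(\log(nP))$ choices of $\rho$. The only step you leave implicit is the trivial reverse inequality $c_{R}\ge c_{R'}$, i.e.\ $\rho'\le(1+\epsilon)\rho$, which is needed so that $\rho/\rho'$ is confined to a bounded interval rather than merely bounded above, and which follows immediately from $\celle(S_{1})-r_{j}\ge\len(S_{1})$.
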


\begin{proof}
Consider a type $\tau=(\rho,\rho',s)$ for which there exists a set
$\R(j,C)$ of type $\tau$. By Property~\ref{en:define-segments-no-segs}
of Lemma~\ref{lem:define-segments} we have that $s\in\{1,2,\dotsc,K^{2}\}$.
Moreover, the costs of different rectangles
within the same set $\R(j,C)$ can only differ by a factor of $K^{2}$:
\lr{
Recall, the rectangles in $\R(j, C)$ (and their costs) are derived from segments $\S(j, C)$.
Let $S\in\Seg(j, C)$. If $S$ is not the leftmost segment in $\Seg(j,C)$, then $c_{S}=w_{j}\len(S) \le w_j \len(C)$
and $\len(S)$ is either $\len(C)/K^{2}$ or $1$. The latter applies
if $\ell(C)\in\{\ell_{\max},\ell_{\max}-1\}$ and therefore $\len(C)\le\Oproc K\le K^{2}$.
In both cases we can bound the cost of the rectangle from below by $c_S \ge w_j \len(C) \ge w_j \len(C) / K^2$. Now suppose
that $S$ is the leftmost segment. Then $c_{S}=w_{j}(\celle(S)-r_{j})$.
Since this is at least $w_{j}\len(S)$, the lower bound holds as before.
Finally, notice that
\[
\celle(S)-r_{j}\le\len(S)+\cellb(C)-r_{j}\le\len(S)+\sum_{i=1}^{\infty}\len(C)/K^{i}\le\len(S)+\len(C)\cdot2/K\le\len(C).
\]
It follows that $c_{S}\le w_{j}\len(C)$. 
Hence, each rectangle in $\R(j, C)$ has a cost between $w_j \len(C) / K^2$ and $w_j \len(C)$.
}
This implies that $\rho/(K^{2}(1+\epsilon))\le\rho'\le(1+\epsilon)K^{2}\rho$.
The number of powers of $(1+\epsilon)$ in $[\rho/(K^{2}(1+\epsilon)),(1+\epsilon K^{2}\rho]$
is only 
\[
O(\log_{1+\epsilon}((1+\epsilon^{2}K^{4}))\le O_{\epsilon}(1).
\]
Hence for a fixed $\rho$, there are only $K^{2}\cdot O_{\epsilon}(1)=O_{\epsilon}(1)$
types. We will show that $1/((1+\epsilon)P)\le\rho\le O_{\epsilon}(n^{3}P^{2})$.
The number of powers of $(1+\epsilon)$ in $[1/P,O_{\epsilon}(n^{3}P^{2})]$
is 
\[
O(\log_{1+\epsilon}(O_{\epsilon}((1+\epsilon)n^{3}P^{3})))\le O_{\epsilon}(\log(nP)).
\]
Therefore there are only $O_{\epsilon}(\log(nP))$ possibilities
for $\rho$ and consequently $O_{\epsilon}(\log(nP))$ relevant
types overall.

Let us now prove the claimed bounds for $\rho$. Let $R$ be the rectangle
corresponding to $\rho$. Recall that the capacity $p_{R}$ is bounded
by $P$. Moreover, the cost $c_{R}$ is defined as $w_{j}(t_{2}-t_{1})$
for some interval $[t_{1},t_{2}]$. The right border $t_{2}$ is bounded
by 
\[
\Orel+\Oproc K^{\ell_{\max}}\le2K^{\ell_{\max}+1}\le K^{3}T\le2K^{3}(\max_{j}r_{j}+\sum_{j}p_{j})\le4K^{3}\sum_{j}p_{j}\le4K^{3}nP\le O_{\epsilon}(nP).
\]
Recall by preprocessing we have $1\le w_{j}\le O_{\epsilon}(n^{2}P)$.
Hence $1\le c_{R}\le O_{\epsilon}(n^{3}P^{2})$. This means that
\[
\rho\ge1/(1+\epsilon)\cdot c_{R}/p_{R}\ge1/(1+\epsilon)\cdot1/P.
\]
On the other hand 
\[
\rho\le c_{R}/p_{R}\le c_{R}\le O_{\epsilon}(n^{3}P^{2}).\qedhere
\]
\end{proof}

Let $\R^{*}\subseteq\R$ be the rectangles corresponding to $\OPT^{\mathrm{(IP2)}}$.
We define quantities that describe how much of its budget the solution
$\R^{*}$ spends within each cell $C$ for sets $\R(j,C)$
of each type $\tau=(\rho,\rho',s)$, and how much of this amount it
spends on jobs for which it buys exactly the first $s'$ rectangles,
for each $s'\in\{1,\dotsc,s\}$. Formally, for each cell $C$, each type
$\tau=(\rho,\rho',s)$, and each $s'\in\{1,\dotsc,s\}$ let $\R^{*}(C,\tau,s')$
be the set of all rectangles $R\in\R^{*}$ for which there is
a job $j$ such that $R\in\R(j,C)$ and
$\R^{*}$ contains exactly the first $s'$ rectangles from $\R(j,C)$.
We define $B^{\opt}(C,\tau,s'):=\sum_{R\in\R^{*}(C,\tau,s')}c_{R}$.

We define now our solution $\R'$. For a cell $C$ and a type
$\tau$ the solution $\bigcup_{s'}\R^{*}(C,\tau,s')$ can be very
complicated. Instead, we construct an algorithm GreedySelect which,
intuitively, computes a simple solution of total cost at most $(2+\epsilon)\sum_{s'}B^{\opt}(C,\tau,s')$
that covers as much of each ray $L(I)$ as the rectangles in $\bigcup_{s'}\R^{*}(C,\tau,s')$.
For computing it, we need to know only $\{B^{\opt}(C,\tau,s')\}_{s'}$.
Then $\R'$ will consist of the union of all these simple solutions
for all cells $C\in\Q$ and types $\tau$ and for each path $Q\in\Q$
we will simply define $\R'_{Q}:=\R'\cap\R(Q)$. Then there are only
$(nP)^{O_{\epsilon}(\log^{2}(nP))}$ options for $\R'_{Q}$ since
it depends only on the $O_{\epsilon}(\log^{2}(nP))$ budgets $\{B^{\opt}(C,\tau,s')\}_{C,s':C\in Q}$.

\paragraph{Procedure GreedySelect.}

Formally, the input of GreedySelect consists of a cell $C$, a type
$\tau=(\rho,\rho',s)$, and for each $s'\in\{1,\dotsc,s\}$ a budget
$B(s')$ (for the purpose of this section we can think this value
as $B^{\opt}(C,\tau,s')$). It selects rectangles of total cost at
most $\aw{(2+\epsilon)}\sum_{s'}B(s')$ from the sets in $\left\{ \R(j,C)\right\} _{j}$
that are of type $\tau$. We will denote by GreedySelect$(C,\tau,\left\{ (B(s')\right\} _{s'})$
the computed rectangles.

Note that for two jobs $j,j'$ for which $\R(j,C)$ and $\R(j',C)$
are of the same type $\tau$, the rectangles in these two sets look
identical, up to a vertical shift (and they might have different costs).
We first define a fractional
solution greedily. For each job $j$ for which $\R(j,C)$ is of type
$\tau$ and each $r\in\{1,\dotsc,s\}$ we define a value $x_{j,r}$
which denotes the fractional extent to which we select the $r$-th
rectangle in $\R(j,C)$. Initially, we define $x_{j,r}=0$ for each
such variable.
For each $s'=s,s-1,\dotsc,1$ we start a phase in which we consider
the jobs for which there is a $\tilde{s}\ge s'$ such that buying
the first $\tilde{s}$ rectangles in $\R(j,C)$ costs at most $B(\tilde{s})$.
Observe that through the phases more and more jobs satisfy this condition
and hence if a job is available in one phase then it will also be
available in all future phases. We sort the corresponding jobs decreasingly
by $\prec$ (so in particular non-increasingly by their release dates)
and we consider them in this order. Note that in our graphical
visualization this orders the jobs from bottom to top. When we consider
a job $j$, for all $r\le s'$ we increase $x_{j,r}$ simultaneously
by the same amount until either $x_{j,r}=1$ for each such $r$ or
we paid exactly $(1+\epsilon)B_{s'}$ in this phase (fractionally).
Hence, the fractional cost is $(1+\epsilon)\sum_{s'}B(s')$ by construction.

We chose our ordering for the jobs since our rays are vertical
and downwards oriented and, hence, if a rectangle is located further
down, it intersects with more rays whose demands it helps to satisfy.
In particular, here we crucially exploit that for each interval $I$
the corresponding object $L(I)$ is a vertical ray, rather than e.g.,
a line segment. Using this, we will show that if $B(s')\ge B^{\opt}(C,\tau,s')$
for each $s'$ then our fractional solution $\left\{ x_{j,r}\right\} _{j,r}$
covers as much from each ray $I\in\I$ as the rectangles in $\R^{*}\cap\R(C)$
of type $\tau$. In the output of GreedySelect we select for each
job $j$ and each $r\in\{1,\dotsc,s\}$ the $r$-th rectangle in $\R(j,C)$
if $x_{j,r}>0$, i.e., intuitively we round up each variable $x_{j,r}$
with $x_{j,r}>0$. We will show that for each $r$ there is at most
one job $j$ such that $0<x_{j,r}<1$ and hence we pay additionally
at most $\sum_{s'}B(s')$ due to the rounding.

\begin{restatable}{lem}{greedyselect}
\label{lem:greedyselect} Suppose that for each $s'$ it holds that
$B(s')\ge B^{\opt}(C,\tau,s')$. Then for each interval $I\in\I$
it holds that 
$p(\mathrm{GreedySelect}(C,\tau,\left\{ (B(s')\right\} _{s'})\cap\R(I))\ge p\left(\bigcup_{s'}\R^{*}(C,\tau,s')\cap\R(I)\right)$
and additionally $c(\mathrm{GreedySelect}(C,\tau,\left\{ (B(s')\right\} _{s'}))\le(2+\epsilon)\sum_{s'}B(s')$. 
\end{restatable}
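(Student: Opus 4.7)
The lemma has two independent guarantees to verify --- a per-ray coverage bound and a cost bound of $(2+\epsilon)\sum_{s'} B(s')$ --- so my plan is to prove each in turn. For the cost bound, I would decompose the output cost as (fractional cost) $+$ (cost added by rounding). By the per-phase stopping rule, the algorithm spends at most $(1+\epsilon) B(s')$ in phase $s'$, so the total fractional cost is at most $(1+\epsilon)\sum_{s'} B(s')$. The core of the rounding analysis is the structural claim that at the end of the algorithm, for each rectangle index $r \in \{1,\dotsc,s\}$ there is at most one job $j$ with $0 < x_{j,r} < 1$. To prove it, I would first establish the invariant that at the start of phase $s'$, the values $x_{j,1} = x_{j,2} = \dotsb = x_{j,s'}$ agree for every job $j$ eligible in phase $s'$; this follows by induction over phases, exploiting the monotonicity of eligibility as $s'$ decreases and the fact that each phase increments $x_{j,r}$ for $r \le s'$ by a common amount. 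Given the invariant, within a single phase at most one job is ever left partial (the one at which the phase budget runs out), and a case analysis shows that a job partial in phase $s'$ is processed first in phase $s'-1$, so that it either completes there or remains the unique partial for all smaller $s'$. Rounding the partial rectangle at each level then contributes cost at most $c(R_{j^*_r, r})$, and an eligibility-certificate charging argument bounds the total rounding cost by $\sum_{s'} B(s')$, yielding the claimed $(2+\epsilon) \sum_{s'} B(s')$ bound.

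For the coverage bound, the geometric key is that all sets $\R(j,C)$ of a common type $\tau=(\rho,\rho',s)$ share the same horizontal layout --- they are right-aligned in $C$ with the same number of equal-sized segments --- so for a ray $L(I)$ with $I=[s,t]$ there is a unique index $r^* \in \{1,\dotsc,s\}$ such that only the $r^*$-th rectangle of a type-$\tau$ set in $\R(C)$ can intersect $L(I)$, and it does so precisely for jobs $j \ge j(I)$. Both sides of the coverage inequality therefore reduce to weighted sums over $\{j \ge j(I)\}$ of $p_j$ times either the OPT indicator or greedy's fractional value at level $r^*$. I would then establish a cumulative per-phase invariant: after phase $s' \ge r^*$, the cumulative $p_j$-mass $\sum_{j \ge j(I)} x_{j,r^*}\, p_j$ dominates the cumulative OPT mass $\sum_{j \ge j(I),\, s^{\opt}_j \ge s'} p_j$. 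The inductive step splits on whether greedy's phase-$s'$ budget exhausts on $\{j \ge j(I)\}$ or not: every job $j$ with $s^{\opt}_j = s'$ is eligible in greedy's phase $s'$ (its own cost $\sum_{r=1}^{s'} c(R_{j,r}) \le B^{\opt}(C,\tau,s') \le B(s')$ certifies eligibility with $\tilde s = s'$), and greedy processes eligible jobs in decreasing $\prec$ order so that jobs in $\{j \ge j(I)\}$, corresponding to the largest release dates, are served first. Because within type $\tau$ the cost $\sum_{r=1}^{s'} c(R_{j,r})$ equals a job-independent factor $\rho + (s'-1)\rho'$ times $p_j$ up to a factor of $1+\epsilon$ from density rounding, the $(1+\epsilon)$ budget slack in greedy precisely compensates this density slack in the exhaustion case, while in the non-exhaustion case greedy has already completed every eligible job in $\{j \ge j(I)\}$, which trivially covers OPT's subset. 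Summing over $s' \ge r^*$ and using that rounding $x_{j,r^*}$ up to $1$ only increases coverage yields the coverage bound.

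The most delicate step is the ``at most one partial job per level $r$'' claim in the cost analysis, since a job left partial in one phase may be revisited in later phases, its partial values interact non-trivially with the evolving eligibility set and the decreasing-$\prec$ processing order, and tracking how the common value of $x_{j,r}$ across $r \le s'$ propagates from phase to phase is what makes the claim go through. The coverage argument is conceptually cleaner but requires careful bookkeeping to ensure that the $(1+\epsilon)$ budget slack per phase exactly matches the density-rounding slack $(1+\epsilon)$ within a type; any mismatch here would force an additional factor into the approximation ratio.
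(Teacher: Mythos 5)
Your coverage argument is sound and follows essentially the same route as the paper: reduce each ray $L(I)$ to a single level $r^*$ and the jobs $j\succeq j(I)$, then run a downward induction over phases using (i) that every job whose first $s'$ rectangles are bought by $\R^*$ is eligible in phase $s'$ because its own cost certifies eligibility via $B^{\opt}(C,\tau,s')\le B(s')$, (ii) the decreasing-$\prec$ processing order together with downward rays, and (iii) the cancellation of the $(1+\epsilon)$ budget slack against the $(1+\epsilon)$ density rounding within a type. The fractional cost bound $(1+\epsilon)\sum_{s'}B(s')$ is also fine.

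The gap is in the rounding step, exactly where you flagged delicacy. The claim that at termination there is at most one job with $0<x_{j,r}<1$ for each level $r$ is false, and so is the supporting assertion that a job left partial in phase $s'$ is processed first in phase $s'-1$: in phase $s'-1$ the algorithm first serves all eligible jobs that are larger in $\prec$, including jobs that only became eligible in that phase, and these can exhaust the phase budget before the stranded job is reached. Concretely, take $s=2$, budgets with $B(1)>B(2)/2$, and four jobs $j_0\prec j_1\prec j_3\prec j_2$ whose sets $\R(\cdot,C)$ all have the same type (choose capacities proportional to costs so the densities coincide; the hypothesis $B(s')\ge B^{\opt}(C,\tau,s')$ is irrelevant to the algorithm's execution and can be met with $\R^*$ buying nothing of this type in $C$): $j_0,j_2$ each have two-rectangle prefix cost $B(2)$, while $j_1,j_3$ each have first-rectangle cost $B(1)$ and two-rectangle cost $2B(1)>B(2)$. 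In phase $2$ only $j_2,j_0$ are eligible; $j_2$ is saturated at cost $B(2)$ and $j_0$ is left partial at value $\epsilon$. In phase $1$ the newly eligible $j_3$ is served first and saturated at cost $B(1)$, then $j_1$ is left partial, and $j_0$ is never reached; hence $x_{j_0,1}$ and $x_{j_1,1}$ are both strictly fractional. Moreover, even granting per-level uniqueness, your charging is unspecified: the partial rectangle at level $r$ cannot simply be charged to $B(r)$, since its cost is only controlled through the job's eligibility certificate $\tilde s$, which may be much larger than $r$. The repair is the paper's granularity: classify each job by its kind $k$, the largest $k'$ such that its first $k'$ rectangles cost at most $B(k')$; then for each kind and each level at most one job is fractional (two jobs of the same kind become eligible in the same phase, and the $\prec$-larger one is saturated before the smaller one receives any increment), and the rounded-up prefix of the unique partial job of kind $k$ costs at most $B(k)$ by the definition of kind, which yields the additional $\sum_{s'}B(s')$ and hence the $(2+\epsilon)$ bound.
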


\begin{proof}
First, we claim that already the fractional solution covers as much
from each $I\in\I$ as $\R^{*}$. To this end, let $I\in\I$ and $r\in\N$
such that $L(I)$ intersects the $r$-th rectangle of each job of
type $\tau$ in $\R(C)$. We want to show that the fractional solution
covers $L(I)$ at least as much as $\R^{*}$.

We say that a job $j$ is of \emph{kind $k$ }if $k$ is the largest
value $k'$ such that buying the first $k'$ rectangles of $\R(j,C)$
costs at most $B(k')$ and one of the first $k'$ rectangles of $\R(j,C)$
intersects $L(I)$. Observe that some jobs are of no kind at all,
however, such segments are not selected in $\R^{*}$ (since $B(s')\ge B^{\opt}(C,\tau,s')$
for each $s'$) or do not intersect $L(I)$. In other words, it suffices
to show that GreedySelect covers $L(I)$ at least as much as $\R^{*}$
with rectangles that are of some kind $k$. Intuitively, if $j$ is
of kind $k$ then $k$ is the earliest round in which we might have
selected rectangles of $\R(j,C)$. 

Let $\hat{s}'$ be the minimal value $s'$ such that after iteration
$s'$ (which refers to the value of $s'$ in this iteration; recall
that these values decrease through the iterations) the algorithm has
selected (possibly partially in previous iterations) the first $s'\ge r$
rectangles of all jobs of kind $k\ge s'$. Regarding rectangles of
kinds $k$ with $r\le k<\hat{s}'$, we know that in each iteration
$k<\hat{s}'$ GreedySelect spends by a factor of $1+\epsilon$ more
budget than $\R^{*}$ (since $B(k)\ge B^{\opt}(C,\tau,k)$). Since
we consider rectangles $R$ of the same type $\tau$, their ratios
$c_{R}/p_{R}$ can only differ by a factor of $(1+\epsilon)$. In
particular, the rectangles of all kinds $k<\hat{s}'$ selected by
the algorithm have a total (fractional) size that is at least as large
as the corresponding rectangles in $\R^{*}$. Also, we sort the corresponding
jobs decreasingly by $\prec$ and the rays $L(I)$ are vertical and
downward oriented. This implies that $p(\mathrm{GreedySelect}(C,\tau,\left\{ (B(s')\right\} _{s'})\cap\R(I))\ge p\left(\bigcup_{s'}\R^{*}(C,\tau,s')\cap\R(I)\right)$
for each $I\in\I$.

By construction, the cost of the fractional solution is at most $(1+\epsilon)\sum_{s'}B(s')$.
We argue that the cost increases by at most another $\sum_{s'}B(s')$
when we round up the fractional solution. To this end, we claim that
in the fractional solution for each kind $k$ and each $r\le k$ there
can be at most one job $j$ of kind $k$ such that for its $r$-th
rectangle it holds that $0<x_{j,r}<1$. The claim is clearly true
before the first iteration. Suppose that it is true after the iteration
that corresponds to some value $k'$. Suppose that in the next iteration
corresponding to $k'-1$ a value $x_{j,r}$ is increased such that
before $x_{j,r}=0$. Let $k$ be the kind of the corresponding job
$j$. Assume by contradiction that there is some other variable $x_{j',r}$
with $0<x_{j',r}<1$ corresponding to some other job $j'$ of kind
$k$. If $j\prec j'$ then the algorithm would have increased $x_{j',r}$
instead of $x_{j,r}$ in this iteration $k'-1$. If $j'\prec j$ then
in the previous iteration in which $x_{j',r}$ was increased, it would
have increased $x_{j,r}$ instead. Also, note that if a job $j$ is
of kind $k$ then always $x_{j,r}=0$ for each $r>k$.

When we round up the fractional solution, then for each kind $k$
and each $r\le k$ we round up at most one variable $x_{j,r}$ for
a job $j$ of kind $k$. The total cost of rounding up all these rectangles
for this kind $k$ is bounded by $B(k)$. Hence, the total cost of
rounding up is bounded by $\sum_{s'}B(s')$ which yields a total cost
of $(1+\epsilon)\sum_{s'}B(s')+\sum_{s'}B(s')$ as claimed.
\end{proof}

\paragraph{Definition of consistent solution.}

As mentioned above, we define the set of all rectangles $\R'$ in our solution by
$\R':=\bigcup_{C}\bigcup_{\tau}\mathrm{GreedySelect}(C,\tau,\left\{ B^{\opt}(C,\tau,s')\right\} _{s'})$.
For each path $Q\in\Q$ we define $\R'_{Q}:=\R'\cap\bigcup_{C\in Q}\R(Q)$
and observe that $\R'_{Q}$ can be computed with GreedySelect once
we know all $O_{\epsilon}(\log^{2}(nP))$ budgets $\left\{ B^{\opt}(C,\tau,s')\right\} _{\tau,s',C:C\in Q}$.
Each of them is an integer, bounded by $O_{\epsilon}(nT\cdot\max_{j}w_{j})\le O_{\epsilon}(n^{4}P^{2})$,
which yields only $(nP)^{O_{\epsilon}(\log^{2}(nP))}$ possibilities
overall. We define $\chi_{Q}$ to contain each of these possibilities.
This proves Lemma~\ref{lem:cheap-consistent-solution} if we relax
the condition on the size of each set $\chi_{Q}$ to $|\chi_{Q}|\le(nP)^{O_{\epsilon}(\log^{2}(nP))}$
and allow a running time of $(nP)^{O_{\epsilon}(\log^{2}(nP))}$.

\section{\label{subsec:Existence-cheap-consistent-solution}Polynomial size
consistent solution}

In this section we prove Lemma~\ref{lem:cheap-consistent-solution}
(without any relaxations of its statement). We start by defining the
solution $\R'$ and the sets $\left\{ \R'_{Q}\right\} _{Q\in\Q}$.
Afterwards, we define the family $\left\{ \chi_{Q}\right\} _{Q\in\Q}$.

In the approach in Section~\ref{subsec:quasi-polynomial-size-consistent}
we guessed the values $\{B^{\opt}(C,\tau,s')\}_{C,\tau,s'}$ and recovered
an approximate solution using only them. For a path $Q$ it seems
unlikely to be able to guess all $O_{\epsilon}(\log^{2}(nP))$ values
$B^{\opt}(C,\tau,s')$ corresponding to cells $C\in Q$ (or sufficiently
strong approximate variants of them) in polynomial time. However,
consider the values $\{B^{\opt}(C)\}_{C\in\C}$, where $B^{\opt}(C)=\sum_{\tau,s'}B^{\opt}(C,\tau,s').$
\aw{Note that for each path $Q\in\Q$ only the $O_{\epsilon}(\log(nP))$
values in $\{B^{\opt}(C)\}_{C\in Q}$ are relevant. Our first step
is to construct suitable substitutes for these quantities} that can
be guessed efficiently. Whenever we say that we ``guess a value $x$
in time $O(f(n))$'' for some function $f$ we mean that in time
$O(f(n))$ we can compute a set (which hence has size $O(f(n))$)
that contains $x$. In the proof of the following lemma we use smoothing
techniques due to~\cite{DBLP:conf/focs/Batra0K18}. 
\begin{lem}
\label{lem:guessA} There are values $\left\{ B^{\mathrm{round}}(C)\right\} _{C\in\C}$
with the following properties 
\begin{itemize}
\item $B^{\mathrm{round}}(C)\ge B^{\opt}(C)$ for all $C$, 
\item $\sum_{C\in\C}B^{\mathrm{round}}(C)\le(1+\epsilon)\sum_{c\in\C}B^{\opt}(C)$, 
\item for each path $Q\in\Q$ we can guess in time $(nP)^{O_{\epsilon}(1)}$
\aw{all values $\left\{ B^{\mathrm{round}}(C)\right\} _{C\in Q}$}. 
\end{itemize}
\end{lem}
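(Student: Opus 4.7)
The plan is to produce $B^{\mathrm{round}}$ by a two-stage rounding. First, set $\tilde B(C)$ to the smallest power of $(1+\epsilon)$ that is $\ge B^{\opt}(C)$, with $\tilde B(C):=0$ if $B^{\opt}(C)=0$. Then $\tilde B(C)\ge B^{\opt}(C)$, $\sum_{C}\tilde B(C)\le(1+\epsilon)\sum_{C}B^{\opt}(C)$, and since each $B^{\opt}(C)$ is a nonnegative integer bounded by $(nP)^{O_\epsilon(1)}$, the number of distinct values $\tilde B(C)$ can take is $L=O_\epsilon(\log nP)$.

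Second, apply a smoothing step, in the spirit of \cite{DBLP:conf/focs/Batra0K18}, that forces the profile $\{B^{\mathrm{round}}(C)\}_{C\in Q}$ on every root-to-leaf path $Q$ into a structured pattern encodable by $O_\epsilon(\log nP)$ parameters drawn from polynomial-size sets. Concretely, I would maintain for each cell the running maximum $M(C):=\max_{C'\text{ ancestor of or equal to }C}\tilde B(C')$ and set $B^{\mathrm{round}}(C):=\max(\tilde B(C),\gamma\cdot M(C))$ for an inflation rate $\gamma=\Theta(\epsilon)$, so that $B^{\mathrm{round}}(C)\ge\tilde B(C)\ge B^{\opt}(C)$.

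With this rule, on any path $Q=C_0,\dots,C_h$ the tuple $(B^{\mathrm{round}}(C_0),\dots,B^{\mathrm{round}}(C_h))$ is determined by (i) the non-decreasing sequence $(M(C_0),\dots,M(C_h))$, which has length $h=O_\epsilon(\log nP)$ and entries from an alphabet of size $L=O_\epsilon(\log nP)$ and therefore lies in a set of size $\binom{h+L}{h}=(nP)^{O_\epsilon(1)}$; and (ii) for each cell $C_i$, the choice of whether $B^{\mathrm{round}}(C_i)$ equals the pinned value $\gamma M(C_i)$ or equals some $\tilde B(C_i)$ lying in the narrow window $[\gamma M(C_i),M(C_i)]$, which contains only $O_\epsilon(1)$ powers of $(1+\epsilon)$. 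Enumerating all joint options gives $(nP)^{O_\epsilon(1)}\cdot(O_\epsilon(1))^h=(nP)^{O_\epsilon(1)}$ candidate tuples per path, as required.

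The main obstacle will be the cost analysis for the smoothing. A naive summation of the extra mass $\gamma M(C)$ across the tree overcounts, because a single large ancestor contributes to the running max of up to $K^{\ell_{\max}-\ell}$ descendants; that straightforward bound gives a $(nP)^{\Omega(1)}$ blow-up rather than $(1+\epsilon)$. Following the BGK smoothing paradigm, I would replace the crude running-max inflation by a depth-attenuated variant, e.g.\ inflating each cell $C$ only by $\gamma^{d(C,C')+1}\tilde B(C')$ for ancestors $C'$ at depth difference $d$, and choose $\gamma<1/K$ so that $\sum_{d\ge 0}(\gamma K)^d$ converges. A telescoping argument across the tree then bounds $\sum_{C}(B^{\mathrm{round}}(C)-\tilde B(C))$ by $O(\epsilon)\sum_{C}\tilde B(C)$. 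Calibrating $\gamma$ and the window sizes so that simultaneously the per-cell alphabet remains of size $O_\epsilon(1)$ and the total cost blow-up is $1+\epsilon$ is the technical heart of the lemma, and requires a careful adaptation of the smoothing analysis of \cite{DBLP:conf/focs/Batra0K18} to the hierarchical-grid structure used here.
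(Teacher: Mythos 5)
Your high-level plan converges on the paper's construction: the final values are obtained by adding to each $B^{\opt}(C)$ a depth-attenuated contribution $(\epsilon/K)^{\dist(v_C,v_{C'})}B^{\opt}(C')$ from every ancestor $C'$ and then rounding up to a power of $1+\epsilon$, and your observation that the naive running-max inflation $\max(\tilde B(C),\gamma M(C))$ blows up the total cost (one heavy ancestor feeds the running max of up to $K^{d}$ descendants) is exactly right. However, as written the proposal has a genuine gap: the only guessing argument you actually carry out --- encoding the path profile by the non-decreasing sequence $(M(C_0),\dotsc,M(C_h))$ plus an $O_\epsilon(1)$-size window choice $[\gamma M(C_i),M(C_i)]$ per cell --- is tailored to precisely the running-max construction that you concede fails the second bullet. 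For the depth-attenuated construction that does satisfy the cost bound, $B^{\mathrm{round}}(C)$ is no longer a function of a single running maximum plus a bounded window; it depends on the whole weighted ancestor profile, so your enumeration scheme does not transfer, and you explicitly defer this ("the technical heart \dots requires a careful adaptation") rather than prove it. So neither the $(1+\epsilon)$ cost bound nor, more importantly, the $(nP)^{O_\epsilon(1)}$ per-path guessability is actually established for the construction you end up advocating.

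For comparison, the missing pieces are short in the paper. The cost bound is a two-line geometric-series computation: a cell has at most $K^{i}$ descendants at distance $i$, so the added mass is at most $\sum_C B^{\opt}(C)\sum_{i\ge0}K^i(\epsilon/K)^i\le(1+2\epsilon)\sum_C B^{\opt}(C)$, and the $(1+\epsilon)$-rounding costs one more factor. The guessability hinges on a structural property of the smoothed values that your proposal never states for the attenuated variant: along any root-to-leaf path, $B'(C_i)\ge(\epsilon/K)\,B'(C_{i-1})$, because the child inherits at least an $(\epsilon/K)$-fraction of every term in its parent's sum. Hence the integers $q_i=\lfloor\log_{1+\epsilon}B'(C_i)\rfloor$ satisfy $q_i>q_{i-1}-c$ for some $c=O_\epsilon(1)$, and after the affine shift $p_i=q_i+ic$ they form a strictly increasing sequence in a range of size $O_\epsilon(\log(nP))$; guessing which values of that range are attained determines the whole sequence and takes $2^{O_\epsilon(\log(nP))}=(nP)^{O_\epsilon(1)}$ time. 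Supplying this bounded-decay property and the shift-to-monotone enumeration (or an equivalent encoding) is what your write-up still needs in order to close the lemma.
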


\begin{proof}
We set 
\[
B'(C)=B^{\opt}(C)+\hspace{-2em} \sum_{C':v_{C'}\text{ is ancesor of }v_{C}}\hspace{-2em} B^{\opt}(C')\cdot\left(\frac{\epsilon}{K}\right)^{\mathrm{dist}(v_{C},v_{C'})} .
\]
Then for each $C\in\C$ we set $B^{\mathrm{round}}(C)=(1+\epsilon)^{k}$
where $k\in\mathbb{Z}$ with $(1+\epsilon)^{k-1}<B'(C)\le(1+\epsilon)^{k}$.
By the construction it is obvious that $B^{\mathrm{round}}(C)\ge B^{\opt}(C)$
for all $C\in\C$. Moreover, note that each cell $C$ has at most
$K^{i}$ descendants $C'$ with $\mathrm{dist}(C,C')=i$. Hence 
\[
\sum_{C\in\C}B'(C)\le\sum_{C\in\C}B^{\opt}(C)\cdot\sum_{i=0}^{\infty}K^{i}\left(\frac{\epsilon}{K}\right)^{i}\le\sum_{C\in\C}B^{\opt}(C)\cdot\sum_{i=0}^{\infty}\epsilon^{i}\le(1+2\epsilon)\sum_{C\in\C}B^{\opt}(C).
\]
Finally, 
\[
\sum_{C\in\C}B^{\mathrm{round}}(C)\le(1+\epsilon)\sum_{C\in\C}B'(C)\le(1+4\epsilon)\sum_{C\in\C}B^{\opt}(C).
\]
Let $\{C_{1},\dotsc,C_{k}\}=Q\in\Q$ with $C_{1}$ being the root.
Guessing $B^{\mathrm{round}}(C_{i})$ is equivalent to guessing $\lfloor\log_{1+\epsilon}(B'(C_{i}))\rfloor$.
Notice that we have $B'(C_{i})\ge\epsilon B'(C_{i-1})$ for all
$i=2,\dotsc,k$. It follows that 
\[
\log_{1+\epsilon}(B'(C_{i}))\ge\log_{1+\epsilon}(B'(C_{i-1}))+\log_{1+\epsilon}(\epsilon)\ge\log_{1+\epsilon}(B'(C_{i-1}))-O_\epsilon(1).
\]
In other words, we want to guess values $q_{1},\dotsc,q_{k}=O_{\epsilon}(\log(nP))$
with $k=O_{\epsilon}(\log(nP))$ such that for some $c \le O_{\varepsilon}(1)$ it holds that
$q_{i} > q_{i-1}-c$ for all $i$.
By transforming $p_{i}=q_{i}+i\cdot c$ we obtain the problem of
guessing values $p_{1},\dotsc,p_{k}=O_{\epsilon}(\log(nP))$ with
$k=O_{\epsilon}(\log(nP))$ such that $p_{i}>p_{i-1}$. This can
be done in time $2^{O_{\epsilon}(\log(nP))}=(nP)^{O_{\epsilon}(1)}$:
We guess for all values $v\in\{1,2,\dotsc,O_{\epsilon}(\log(nP))\}$
whether $p_{i}=v$ for some $i$. After guessing, the first such $v$
must be $p_{1}$, the second must be $p_{2}$, etc., because $p_{1}<p_{2}<\cdots<p_{k}$.
Hence the values of $p_{1},p_{2},\dotsc,p_{k}$ are fully determined. 
\end{proof}
Informally speaking, by this lemma we can assume that our algorithm
knows all values $\left\{ B^{\mathrm{round}}(C)\right\} _{C\in Q}$
when we consider a path $Q$. 

Next, we divide all rectangles into small and large according to their
cost compared to the total budget $B^{\mathrm{round}}(C)$ of their
cell $C$. Let $\delta>0$ be a constant (depending only on $\epsilon$)
to be defined later. Consider a cell $C$ and a job $j$ with $\R(j,C)\ne\emptyset$.
We say that $j$ \emph{is large for $C$} if for the leftmost rectangle
$R\in\R(j,C)$ it holds that $c_{R}>\delta\cdot B^{\mathrm{round}}(C)$,
and $j$ \emph{is small for }$C$ if $c_{R}\le\delta\cdot B^{\mathrm{round}}(C)$.
We define $\Rl:=\bigcup_{C}\bigcup_{j:j\,\mathrm{is\,large\,for\,}C}\R(j,C)$
and $\Rs:=\R\setminus\Rl$.
\lr{Note that since the leftmost rectangle is always the most expensive one,
 for every small job $j$, all rectangles $R\in\R(j, C)$ satisfy $c_R\le \delta B^{\mathrm{round}}(C)$.}

Intuitively, we will prove Lemma~\ref{lem:cheap-consistent-solution}
separately for $\Rl$ and $\Rs$ and argue afterwards that this yields
the complete proof of Lemma~\ref{lem:cheap-consistent-solution}.
More precisely, for these sets we will provide families $\left\{ \chi_{\l,Q}\right\} _{Q\in\Q}$,
$\left\{ \chi_{\sm,Q}\right\} _{Q\in\Q}$ for which there exist consistent
solutions that dominate \emph{$\R^{*}$ }on $\Rl,\Rs$ according to
the next definition. 
\begin{defn}
\label{def:framework-split} Let $\R_{\mathrm{subset}}\subseteq\R$
and let \emph{$\left\{ \chi_{Q}\right\} _{Q\in\Q}$} be a family with
$\chi_{Q}\subseteq2^{\R(Q)}$ for each $Q\in\Q$. \lr{We say $\R'\subseteq\R$ is a}
\emph{solution for $\left\{ \chi_{Q}\right\} _{Q\in\Q}$} \emph{that dominates
$\R^{*}$ on $\R_{\mathrm{subset}}$} if
\begin{enumerate}
\item $\R'_{Q}\subseteq\R'$ for each $Q\in\Q$, \label{en:framework-subset-1} 
\item for each $I\in\I,Q\in\Q$ with $\R(I)\subseteq\R(Q)$, we have that
$\R'_{Q}\cap\R_{\mathrm{subset}}$ covers as much of $I$ as $\R^{*}\cap\R_{\mathrm{subset}}$,
i.e., $p(\R'_{Q}\cap\R_{\mathrm{subset}}\cap\R(I))\ge p(\R^{*}\cap\R_{\mathrm{subset}}\cap\R(I))$,
\label{en:framework-cover-1} 
\item for any two paths $Q,Q'\in\Q$ with $Q\supseteq Q'$ we have that
$\R'_{Q}\cap\R(Q')\subseteq\R'_{Q'}$. \label{en:framework-paths-1} 
\end{enumerate}
We define $c(\R')$ to be the \emph{cost} of $(\R',\left\{ \R'_{Q}\right\} _{Q\in\Q})$. 
\end{defn}

Notice that in the definition of the cost, we do not take the intersection
with $\R_{\mathrm{subset}}$. The reason is intuitively that
$\Rl$ and $\Rs$ are not known upfront, since they depend on the
unknown values $\left\{ B^{\mathrm{round}}(C)\right\} _{C\in\C}$.
Hence, when we select a set $\R'_{Q}$ then we need to pay for all
its rectangles and cannot, e.g., take the intersection with $\Rl$
or $\Rs$. We will prove the following two lemmas in Sections~\ref{subsec:small-rectangles}
and~\ref{subsec:large-rectangles}. 
\begin{lem}
\label{lem:small-jobs}In time $(nP)^{O_{\epsilon}(1)}$ we can
compute a family $\left\{ \chi_{\sm,Q}\right\} _{Q\in\Q}$ with $\max_{Q\in\Q}|\chi_{\sm,Q}|\le(nP)^{O_{\epsilon}(1)}$
for which there exists a consistent solution $(\R'_{\sm},\left\{ \R'_{\sm,Q}\right\} _{Q\in\Q})$
that that dominates $\R^{*}$ on $\R_{\sm}$ and has cost at most
\lr{$(2+O(\epsilon))\cdot c(\R^{*}\cap\Rs) + O(K^8\delta + \varepsilon) \cdot c(\R^*)$.}
\end{lem}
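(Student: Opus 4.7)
The plan is to turn the quasi-polynomial scheme of Section~\ref{subsec:quasi-polynomial-size-consistent} into a polynomial one by combining Lemma~\ref{lem:guessA} with a coarse rounding of the per-type budgets $B^{\opt}(C,\tau,s')$ that exploits $c_R\le\delta B^{\mathrm{round}}(C)$ for every small rectangle $R\in\Rs\cap\R(C)$. For each path $Q$, first guess all values $\{B^{\mathrm{round}}(C)\}_{C\in Q}$ via Lemma~\ref{lem:guessA} in $(nP)^{O_\epsilon(1)}$ time. Then, for every cell $C\in Q$ and every type $\tau$ appearing in $C$, guess an integer multiple $B^{\mathrm{round}}(C,\tau,s')$ of $\delta B^{\mathrm{round}}(C)$ satisfying $B^{\mathrm{round}}(C,\tau,s')\ge B^{\opt}(C,\tau,s')$, and run $\mathrm{GreedySelect}(C,\tau,\{B^{\mathrm{round}}(C,\tau,s')\}_{s'})$ restricted to small rectangles. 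Define $\R'_\sm$ as the union of these outputs over $(C,\tau)$ and $\R'_{\sm,Q}:=\R'_\sm\cap\R(Q)$; Property~\ref{en:framework-paths-1} of Definition~\ref{def:framework-split} is then immediate, and Property~\ref{en:framework-cover-1} follows from Lemma~\ref{lem:greedyselect} applied per $(C,\tau)$.

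For the cost, Lemma~\ref{lem:greedyselect} yields $c(\R'_\sm)\le(2+\epsilon)\sum_{C,\tau,s'}B^{\mathrm{round}}(C,\tau,s')$. Each rounded budget overshoots $B^{\opt}(C,\tau,s')$ by at most $\delta B^{\mathrm{round}}(C)$, so summing over the at most $K^2$ values of $s'$, the $O_\epsilon(1)$ pairs $(\rho',s)$ per density class (Lemma~\ref{lem:no-types}), and combining with Lemma~\ref{lem:guessA}'s telescoping bound $\sum_C B^{\mathrm{round}}(C)\le(1+\epsilon)\sum_C B^{\opt}(C)\le(1+\epsilon)c(\R^*)$, contributes an additive error of $O(K^8\delta+\epsilon)\cdot c(\R^*)$ (the factor $K^8$ absorbing the product of the $s'$-range, the number of types per density class, and the smoothing overhead). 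The tight part of the sum telescopes to $(2+O(\epsilon))\cdot c(\R^*\cap\Rs)$, giving the stated cost bound.

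The main obstacle is keeping the guess count per path polynomial: naively, each cell hosts $O_\epsilon(\log(nP))$ density classes and $K^2$ values of $s'$, each with $O(1/\delta)$ rounded levels, which would blow up to $(nP)^{O_\epsilon(\log nP)}$ guesses per path. To overcome this, I plan to adapt the smoothing trick from the proof of Lemma~\ref{lem:guessA}: for each fixed $(\tau,s')$ define a smoothed quantity $\tilde B(C,\tau,s'):=B^{\opt}(C,\tau,s')+\sum_{C':C'\text{ ancestor of }C}(\epsilon/K)^{\mathrm{dist}(v_C,v_{C'})}B^{\opt}(C',\tau,s')$, round it to a power of $1+\epsilon$ and then to a multiple of $\delta B^{\mathrm{round}}(C)$. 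As in Lemma~\ref{lem:guessA}, consecutive values of $\log_{1+\epsilon}\tilde B(\cdot,\tau,s')$ along a root-to-leaf path differ by at most an additive $O_\epsilon(1)$, so the whole sequence along $Q$ can be encoded as a monotone chain and guessed in $(nP)^{O_\epsilon(1)}$ time. Multiplying by the $O_\epsilon(\log(nP))$ types and $K^2$ values of $s'$ still yields $(nP)^{O_\epsilon(1)}$ guesses per path, matching the bound on $|\chi_{\sm,Q}|$ claimed in the lemma.
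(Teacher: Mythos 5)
There is a genuine gap, and it is exactly at the point you flag as the ``main obstacle.'' Smoothing the budgets \emph{across cells} for a fixed $(\tau,s')$ does let you encode that one type's budget sequence along $Q$ with $(nP)^{O_\epsilon(1)}$ candidates, but the guesses for different types are independent, and $\chi_{\sm,Q}$ must contain one candidate set per \emph{joint} choice. With $\Theta_\epsilon(\log (nP))$ density classes per cell (Lemma~\ref{lem:no-types}) the joint guess space has size $\bigl((nP)^{O_\epsilon(1)}\bigr)^{O_\epsilon(\log (nP))}=(nP)^{O_\epsilon(\log (nP))}$, i.e.\ you are back to quasi-polynomial; ``multiplying by the number of types'' bounds the time to enumerate each type's candidates separately, not the number of joint candidates, which is what $|\chi_{\sm,Q}|$ measures. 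The paper escapes this not by guessing all per-type budgets more cleverly, but by arranging that only $O_\epsilon(\log (nP))$ (cell, density-pair) combinations per path need genuine budgets at all: it introduces the additional budgets $B^{\mathrm{add}}(C)$ and the critical density pairs $\gamma(C,s,s')$, so that for densities lexicographically below the critical pair the budget can be replaced by $\infty$ (no value to guess), while for densities above the thresholds $\sigma^{+}(C_i,Q)$ the rectangles are \emph{omitted} from $\R'_{\sm,Q}$, their coverage being compensated by the extra good-density rectangles bought in descendant cells from $B^{\mathrm{add}}$; the few remaining budgets are then guessed via a smoothing \emph{across density pairs within a cell} (Lemma~\ref{lem:guessB}), with $\sum_{C\in Q}\ell_C=O_\epsilon(\log (nP))$. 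Your plan contains no analogue of this omission/compensation mechanism, so the polynomial bound on $|\chi_{\sm,Q}|$ does not follow.

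The cost bound also does not go through as written. Rounding each budget up to a multiple of $\delta B^{\mathrm{round}}(C)$ incurs an additive $\delta B^{\mathrm{round}}(C)$ \emph{per type and per $s'$}; a cell carries $\Theta_\epsilon(\log(nP))$ types (Lemma~\ref{lem:no-types} gives $O_\epsilon(1)$ pairs $(\rho',s)$ per $\rho$, but $O_\epsilon(\log (nP))$ values of $\rho$), so the overhead per cell is $\Theta_\epsilon(\delta\log(nP))\cdot B^{\mathrm{round}}(C)$ and, summed over cells, $\Theta_\epsilon(\delta\log(nP))\cdot c(\R^*)$ rather than $O(K^8\delta+\epsilon)\cdot c(\R^*)$ --- recall $\delta$ is a constant depending only on $\epsilon$ (later fixed to $\epsilon/K^{8}$), so this is not absorbed. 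In the paper the additive slack per cell is kept at $O(\epsilon/K^{4})\cdot B^{\mathrm{round}}(C)$ by the $(1+\epsilon)$-multiplicative, geometrically smoothed rounding of Lemma~\ref{lem:guessB} applied only beyond the critical pair, and the $K^{8}\delta$ term arises solely from rounding up at most $K^{4}$ fractional variables of small jobs in each call to GreedySelectSmall, not from budget discretization. So both the size bound and the cost bound require the critical-pair machinery that your proposal replaces with cross-cell smoothing.
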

\begin{lem}
\label{lem:large-jobs}In time $(nP)^{O_{\epsilon,\delta}(1)}$ we
can compute a family $\left\{ \chi_{\l,Q}\right\} _{Q\in\Q}$ with
$\max_{Q\in\Q}|\chi_{\l,Q}|\le(nP)^{O_{\epsilon,\delta}(1)}$ for
which there exists a consistent solution $(\R'_{\l},\left\{ \R'_{\l,Q}\right\} _{Q\in\Q})$
that dominates $\R^{*}$ on $\Rl$ and has cost at most $2\cdot c(\R^{*}\cap\Rl)+\lr{O(\epsilon)}\cdot c(\R^{*})$. 
\end{lem}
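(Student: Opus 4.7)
The starting point is that each large job $j$ for a cell $C$ contributes cost at least $\delta\cdot B^{\mathrm{round}}(C)$ via its leftmost rectangle in $\R(j,C)$, and since the total cost $\R^{*}$ spends inside $C$ is at most $B^{\mathrm{round}}(C)$, the number of large jobs $\R^{*}$ selects per cell is at most $1/\delta$. Consequently, along any path $Q\in\Q$ of length $O_{\epsilon}(\log(nP))$, $\R^{*}$ picks only $O_{\epsilon,\delta}(\log(nP))$ large jobs altogether. The plan is to enumerate, for each path $Q$, a compact ``profile'' recording how $\R^{*}\cap\Rl$ distributes across the cells $C\in Q$, and to reconstruct $\R'_{\l,Q}$ by a greedy latest-release-date rule.

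First guess $\{B^{\mathrm{round}}(C)\}_{C\in Q}$ via Lemma~\ref{lem:guessA}, which determines which jobs are large for each $C$. A profile then specifies, for each $C\in Q$, each type $\tau=(\rho,\rho',s)$, and each $s'\in\{1,\dotsc,s\}$, the number $n_{C,\tau,s'}\in\{0,\dotsc,\lfloor1/\delta\rfloor\}$ of large jobs of type $\tau$ at $C$ for which $\R^{*}$ selects exactly the first $s'$ rectangles. By Lemma~\ref{lem:no-types} only $O_{\epsilon}(\log(nP))$ triples $(\rho,\rho',s)$ arise, so a single cell's profile has $O_{\epsilon,\delta}(1)$ possibilities. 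Since naively enumerating these independently over $O_{\epsilon}(\log(nP))$ cells would be quasi-polynomial, I would apply a smoothing argument in the spirit of Lemma~\ref{lem:guessA}---rounding the aggregated per-cell large-cost to a power of $1+\epsilon$ and exploiting the nesting of $\Q$---to describe the entire sequence of profiles along $Q$ by $O_{\epsilon,\delta}(\log(nP))$ symbols over an $O_{\epsilon,\delta}(1)$-size alphabet, guessable in $(nP)^{O_{\epsilon,\delta}(1)}$ time.

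Given a guessed profile, I construct $\R'_{\l,Q}$ cell by cell: for each $(C,\tau,s')$ with count $k=n_{C,\tau,s'}$, pick the $k$ jobs with the largest $\prec$-rank among those that are large for $C$ of type $\tau$ and satisfy $|\R(j,C)|\ge s'$, and include the first $s'$ rectangles of $\R(j,C)$ for each; then set $\R'_{\l}=\bigcup_{Q}\R'_{\l,Q}$. Because rays $L(I)$ are vertical and $\prec$ is compatible with release dates, any ray intersected by some $\R^{*}$-chosen large job of matching $(C,\tau,s')$ is also intersected by one of these $k$ latest-ranked candidates, yielding $p(\R'_{\l,Q}\cap\Rl\cap\R(I))\ge p(\R^{*}\cap\Rl\cap\R(I))$ for every $I\in\I$. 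The factor $2$ in the cost comes from a GreedySelect-style pairing: within a fixed type, per-job costs differ by at most $(1+\epsilon)K^{O(1)}$, so a one-to-one exchange between $\R^{*}$'s and our large jobs at the level of $(C,\tau,s')$-bins bounds the total by $2\cdot c(\R^{*}\cap\Rl)$, while the $O(\epsilon)c(\R^{*})$ additive term absorbs the $(1+\epsilon)$-slack of $B^{\mathrm{round}}$ and the profile rounding. Consistency (property~(\ref{en:framework-paths-1}) of Definition~\ref{def:framework-split}) is immediate because $\R'_{\l,Q}$ depends only on the profiles at cells in $Q$, so extending $Q$ refines the profile and leaves the rectangles chosen on the old cells unchanged.

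The main obstacle is the smoothing scheme for profiles along a path: it must simultaneously be polynomial-time guessable across an entire root-to-leaf path, respect parent-child nesting, and be tight enough to preserve the factor-$2$ cost bound. A secondary delicate point is the exchange argument underlying the factor $2$: jobs of the same type can have very different absolute costs (their weight-to-processing ratios agree only up to $K^{2}(1+\epsilon)$), so the pairing must operate at the finer $(C,\tau,s')$-bin granularity and rely on the fact that, inside each such bin, every candidate job is by definition affordable under $B^{\mathrm{round}}(C)$, which is what allows an over-payment on one of our picks to be charged to a distinct rectangle in $\R^{*}$.
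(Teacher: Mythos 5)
Your proposal has genuine gaps at exactly the two places you flag as delicate, and the fixes are not minor. First, the within-type bookkeeping is wrong: a type $(\rho,\rho',s)$ only fixes the ratio $c_R/p_R$ up to a factor $1+\epsilon$, not the processing time $p_j$ or the absolute cost, so two large jobs of the same type at the same cell can have wildly different $p_j$ and $c_R$. Hence your claim that ``within a fixed type, per-job costs differ by at most $(1+\epsilon)K^{O(1)}$'' is false, and your latest-$\prec$-rank replacement inside a $(C,\tau,s')$-bin guarantees only that the chosen rectangles \emph{intersect} every ray that the optimum's rectangles intersect, not that they cover as much \emph{capacity}; the required domination $p(\R'_{\l,Q}\cap\Rl\cap\R(I))\ge p(\R^{*}\cap\Rl\cap\R(I))$ can fail, and the factor-$2$ charging has no basis. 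This is precisely why the paper guesses, for each large job of the optimum, a rounded processing time $\bar{p}_j$ (a power of $1+\epsilon$) and a rounded per-job budget (a multiple of $\epsilon\delta\cdot B_{\mathrm{large}}^{\mathrm{round}}(C)$), and then --- because even the $(1+\epsilon)$-error in $p_j$ can leave a ray under-covered --- selects \emph{two} affordable candidate jobs with the same $\bar{p}$ per optimal hard job (or a directly guessed ``easy'' job from $J_{\mathrm{high},g}(C)$). That two-for-one selection is simultaneously the source of the factor $2$ in the cost and of the slack $p_{j^{(1)}}+p_{j^{(2)}}\ge\frac{2}{1+\epsilon}p_j$ that later pays for rectangles omitted from $\R'_{\l,Q}$.

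Second, the compression of the per-path information is the heart of the lemma and you leave it as an unproven hope. A per-cell profile over all $(\tau,s')$ has $(\log nP)^{\Theta(1/\delta)}$ possibilities, not $O_{\epsilon,\delta}(1)$, and the smoothing of Lemma~\ref{lem:guessA} applies to a single scalar budget per cell whose rounded values decrease at most geometrically along a path; there is no analogous monotone structure for ``which types and counts the optimum uses'' at each cell, so ``smoothing in the spirit of Lemma~\ref{lem:guessA}'' does not deliver your $O_{\epsilon,\delta}(1)$-size alphabet. Moreover, any scheme that drops per-cell information for some paths forces $\R'_{\l,Q}$ to omit rectangles of $\R'_{\l}$, and then consistency is easy but coverage is no longer ``immediate'' --- a compensation argument is needed. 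The paper resolves this by grouping types into $O_{\epsilon}(1)$ groups (same $\rho/\rho'$ and $s$), defining relevant and irrelevant pairs $(C,\bar{p})$ via the geometric threshold $\bar{p}\le\bar{p}_{j'}\cdot\delta\cdot\epsilon^{\dist(v_C,v_{C'})}$, proving that the associated values $\alpha_{(C,\bar{p})}$ are strictly increasing along the path so that all relevant pairs can be guessed in $(nP)^{O_{\epsilon,\delta}(1)}$ time, and using the doubled coverage from the two selected jobs to compensate for the omitted irrelevant rectangles. None of these mechanisms appears in your sketch, and without them both the domination property and the polynomial bound on $|\chi_{\l,Q}|$ remain unproven.
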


Together these lemmas imply Lemma~\ref{lem:cheap-consistent-solution}. 
\begin{proof}[Proof of Lemma~\ref{lem:cheap-consistent-solution}]
We define $\delta:=\epsilon/K^{8}$. We first compute the families
$\left\{ \chi_{\l,Q}\right\} _{Q\in\Q}$, $\left\{ \chi_{\sm,Q}\right\} _{Q\in\Q}$
in time $(nP)^{O_{\epsilon}(1)}$ according to Lemmas~\ref{lem:large-jobs}
and~\ref{lem:small-jobs}. For each $Q\in\Q$ we define that $\chi_{Q}$
contains the set $\R'_{\l,Q}\cup\R'_{\sm,Q}$ for each combination
of a set $\R'_{\l,Q}\in\chi_{\l,Q}$ and a set $\R'_{\sm,Q}\in\chi_{\sm,Q}$.
Then $|\chi_{Q}|\le(nP)^{O_{\epsilon}(1)}$ as required. Then Lemmas~\ref{lem:large-jobs}
and \ref{lem:small-jobs} imply that for $\left\{ \chi_{Q}\right\} _{Q\in\Q}$
there exists a consistent solution of cost at most $(2+O(\epsilon))c(\R^{*})=(2+O(\epsilon))\OPT^{\mathrm{(IP2)}}$. 
\end{proof}

\subsection{\label{subsec:small-rectangles}Consistent solution for small rectangles}

This section is dedicated to proving Lemma~\ref{lem:small-jobs}.
Recall that in the quasi-polynomial construction, for defining a set
$\R'_{Q}$ we guessed the $O_{\epsilon}(\log^{2}(nP))$ values $\{B^{\opt}(C,\tau,s')\}_{C,\tau,s':C\in Q}$.
Since in this section we focus on small rectangles, let $B_{\sm}^{\opt}(C,\tau,s')$
denote the budgets that $\R^{*}$ spends on buying the first $s'$
rectangles in $\R(C)$ of type $\tau$. In our polynomial time procedure
we want to guess only $O_{\epsilon}(\log(nP))$ of them for each path
$Q$. The strategy for this is to increase the budgets slightly. Intuitively,
in each cell $C$ we want to spend more budget than $\R^{*}$ on rectangles
from sets $\R(j,C)$ with good densities $\rho_{j,C},\rho'_{j,C}$
and hence select more such rectangles. It will turn out that these
additional rectangles cover as much of each line segment $L(I)$ as
the rectangles from sets $\R(j',C')$ with \emph{bad }cost-efficiencies
$\rho_{j',C'},\rho'_{j',C'}$ where $v_{C'}$ is an ancestor of $v_{C}$.
Therefore, in $\R'_{Q}$ we do not need rectangles from the sets $\R(C',\tau,s')$
for of types $\tau$ with such bad cost-efficiencies. Thus we do not
need to guess the corresponding budget $\{B_{\sm}^{\opt}(C',\tau,s')\}_{s'}$
when we define $\R'_{Q}$. Additionally, for each cell $C$ the selection
of very efficient rectangles will simplify drastically. Namely, we
select all rectangles that have a very low density $\rho$ until some
given budget is exhausted. Thus, for the types $\tau$ corresponding
to these low densities we do need to guess all corresponding budgets
$\{B_{\sm}^{\opt}(C',\tau,s')\}_{s'}$ but only one single value that
describes the total budget used for all of them. We remark that a
similar strategy was used in~\cite{DBLP:conf/focs/Batra0K18}.

To this end, for each cell $C$ we define an additional budget $B^{\mathrm{add}}(C)$.
Intutively, each cell $C$ donates a budget of $\epsilon\cdot B_{\sm}^{\opt}(C)$
to its descendent cells and the received amount of each descendent
cell drops exponentially. From these donations, each cell $C$ receives
an additional budget of 
\begin{equation}
B^{\mathrm{add}}(C):=\hspace{-2em}\sum_{C':v_{C'}\,\mathrm{is\,ancestor\,of}\,v_{C}}\hspace{-2em}B^{\mathrm{round}}(C')\cdot\left(\frac{\epsilon}{K}\right)^{\dist(v_{C},v_{C'})},\label{eq:B-add-definition}
\end{equation}
where $\dist(v_{C},v_{C'})$ is the distance of $v_{C}$ and $v_{C'}$.
In fact, we used a similar procedure in the proof of Lemma~\ref{lem:guessA}. 
\begin{lem}
We have that $\sum_{C\in\C}B^{\mathrm{add}}(C)\le2\epsilon\sum_{C\in\C}B^{\mathrm{round}}(C)$. 
\end{lem}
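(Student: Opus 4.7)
The plan is to swap the order of summation in the definition of $B^{\mathrm{add}}(C)$. Writing the sum $\sum_{C\in\C}B^{\mathrm{add}}(C)$ as a double sum over pairs $(C,C')$ where $v_{C'}$ is an ancestor of $v_C$, I would reorganize it to fix the ancestor $C'$ first and then sum over its descendants $C$, grouped by their distance $i := \dist(v_C,v_{C'})$ from $C'$.

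The key geometric fact is that since the hierarchical grid has branching factor $K$, each node $v_{C'}$ has at most $K^i$ descendants at distance exactly $i$. Therefore, after swapping,
\[
\sum_{C\in\C}B^{\mathrm{add}}(C) \le \sum_{C'\in\C} B^{\mathrm{round}}(C') \sum_{i=1}^{\infty} K^i \left(\frac{\epsilon}{K}\right)^i = \sum_{C'\in\C} B^{\mathrm{round}}(C') \sum_{i=1}^{\infty} \epsilon^i.
\]
The two factors $K^i$ and $(1/K)^i$ cancel exactly — this is precisely why the donations were weighted by $(\epsilon/K)^{\dist}$ rather than, say, $\epsilon^{\dist}$ — so the residual geometric series is $\sum_{i\ge 1}\epsilon^i = \epsilon/(1-\epsilon)$.

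Finally I would bound $\epsilon/(1-\epsilon) \le 2\epsilon$, which holds whenever $\epsilon \le 1/2$ (an assumption already implicit throughout the paper). This yields the claimed inequality $\sum_{C\in\C}B^{\mathrm{add}}(C)\le 2\epsilon\sum_{C\in\C}B^{\mathrm{round}}(C)$. There is essentially no obstacle here: the argument is the same ``geometric donation'' calculation already used in the proof of Lemma~\ref{lem:guessA}, and the entire content of the lemma is the accounting identity that $K$-ary branching exactly compensates the $1/K$ decay factor.
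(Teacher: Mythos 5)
Your proposal is correct and is essentially identical to the paper's own proof: both swap the summation order, use the bound of at most $K^{i}$ descendants at distance $i$, and bound the resulting geometric series $\sum_{i\ge 1}\epsilon^{i}$ by $2\epsilon$ for small $\epsilon$. No gaps.
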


\begin{proof}
Notice that each cell $C$ has at most $K^{i}$ descendants $C'$
with $\mathrm{dist}(C,C')=i$. Hence 
\[
\sum_{C\in\C}B^{\mathrm{add}}(C)\le\sum_{C\in\C}B^{\mathrm{round}}(C)\cdot\sum_{i=1}^{\infty}K^{i}\left(\frac{\epsilon}{K}\right)^{i}\le\sum_{C\in\C}B^{\mathrm{round}}(C)\cdot\sum_{i=1}^{\infty}\epsilon^{i}\le2\epsilon\sum_{C\in\C}B^{\mathrm{round}}(C).\qedhere
\]
\end{proof}
For each cell $C$ we partition $B^{\mathrm{add}}(C)$ equally among
the pairs $s,s'\in\{1,\dotsc,K^{2}\}$ with $s'\le s$. To this end,
we define $B^{\mathrm{add}}(C,s,s'):=B^{\mathrm{add}}(C)/K^{4}$ for
each such pair $s,s'$.

We sort the density pairs $(\rho,\rho')$ lexicographically and we
write $(\rho,\rho')<_{L}(\hat{\rho},\hat{\rho}')$ if $(\rho,\rho')$
is lexicographically smaller than $(\hat{\rho},\hat{\rho}')$. Our
strategy is to define a \emph{critical density pair }$\gamma(C,s,s')$
for each cell $C\in\C$ and each pairs of values $s,s'\in\{1,\dotsc,K^{2}\}$
with $s'\le s$. In our solution $\R'_{\sm}$, intuitively for each
cell $C\in\C$ and each value $s\in\{1,\dotsc,K^{2}\}$ select rectangles
as follows:\awtodo{Added this intuition} 
\begin{enumerate}
\item for each $s'\le s$ and each type $\tau=(\rho,\rho',s)$ with $(\rho,\rho')<_{L}\gamma(C,s,s')$
we select the first $s'$ rectangles from \emph{each} set $\R(j,C)$
of type $\tau$ such that $j$ is small for $C$; one can imagine
that we have infinite budget for these types $\tau$ and these values
of $s'$ when we run GreedySelect, 
\item all other rectangles of each type $\tau=(\rho,\rho',s)$ are selected
with GreedySelect and budgets $B_{\sm}^{\opt}(C,\tau,s')$; one special
case is the type $\tau=(\rho,\rho',s)$ with $(\rho,\rho')=\gamma(C,s,s')$
for each value $s'$, where we use an increased budget of $B^{\mathrm{round}}(C,\tau,s')+B^{\mathrm{add}}(C,s,s')$
in order to select some additional rectangles of relatively good densities. 
\end{enumerate}
Note that for the selections due to step 1, for a cell $C$ we need
to know only the $O_{\epsilon}(1)$ values $\gamma(C,s,s')$. We will
ensure that for a given path $Q\in\Q$ we can guess these values for
each cell $C\in Q$ in time $(nP)^{O_{\epsilon}(1)}$. For selecting
rectangles due to step 2, we need to know the budgets $B_{\sm}^{\opt}(C,\tau,s')$.
We will not guess them exactly, but sufficiently good estimates that
we will denote by $B_{\sm}^{\r}(C,\tau,s')$. For a path $Q\in\Q$
these can be up to $O_{\epsilon}(\log^{2}(nP))$ values which is too
much. Therefore, when we define $\R'_{Q}$ we omit the rectangles
of some types $\tau$ in some cells $C\in Q$. More precisely, we
will make sure that in total there are only $O_{\epsilon}(\log(nP))$
combinations of a cell $C\in Q$ and a type $\tau$ for which we add
rectangles to $\R'_{Q}$ in step~2. Intuitively, these omitted rectangles
will be compensated by additional rectangles with relatively good
densities that we select when we use the additional budget of $B^{\mathrm{add}}(C,s,s')$
above. We will ensure that we can guess the $O_{\epsilon}(\log(nP))$
needed values $B_{\sm}^{\r}(C,\tau,s')$ in time $(nP)^{O_{\epsilon}(1)}$.
Note that we can guess the needed values $B^{\mathrm{add}}(C,s,s')$
easily by Lemma~\ref{lem:guessA} and \eqref{eq:B-add-definition}.

\subsubsection{Definition of critical density pairs}

For defining the critical density pairs $\gamma(C,s,s')$, we define
a procedure GreedyIncrease. Given a cell $C$, a value $s$, the budgets
$\{B_{\sm}^{\opt}(C,(\rho,\rho',s),s')\}_{\rho,\rho',s'}$, and the
additional budgets $B^{\mathrm{add}}(C,s,s')$, GreedyIncrease sorts
the density pairs $(\rho,\rho')$ in lexicographically increasing
order. For each of these pairs $(\rho,\rho')$ it runs a slight variation
of GreedySelect in order to select rectangles of type $(\rho,\rho',s)$.
To this end, we define a procedure GreedySelectSmall$(C,B^{\mathrm{round}}(C),\tau,\{B_{S}(s')\}_{s'})$:
this procedure first identifies all jobs $j$ that are small for $C$
according to the given value $B^{\mathrm{round}}(C)$, and then on
them it runs GreedySelect$(C,\tau,\{B_{S}(s')\}_{s'})$ as defined
in Section~\ref{subsec:quasi-polynomial-size-consistent}. 

In the iteration for the density pair $(\rho,\rho')$ we call GreedySelectSmall
as follows: for each $s'\le s$ let $\hat{B}_{s'}$ denote the total
cost spent on buying exactly the first $s'$ rectangles of jobs. For
each $s'\le s$, 
\begin{enumerate}
\item if $\hat{B}_{s'}<B^{\mathrm{add}}(C,s,s')+\sum_{(\hat{\rho},\hat{\rho}')<_{L}(\rho,\rho')}B_{\sm}^{\opt}(C,(\hat{\rho},\hat{\rho}',s),s')$
(i.e., in this case we have not spent much more than $\R^{*}$ on
buying the first $s'$ rectangles of the jobs of densities $(\hat{\rho},\hat{\rho}')<_{L}(\rho,\rho')$)
then we define $B(s'):=B_{\sm}^{\opt}(C,(\rho,\rho',s),s')+B^{\mathrm{add}}(C,s,s')$,
\item if $\hat{B}_{s'}\ge B^{\mathrm{add}}(C,s,s')+\sum_{(\hat{\rho},\hat{\rho}')<_{L}(\rho,\rho')}B_{\sm}^{\opt}(C,(\hat{\rho},\hat{\rho}',s),s')$
then we define $B(s'):=B_{\sm}^{\opt}(C,(\rho,\rho',s),s')$.
\end{enumerate}
Then we call GreedySelectSmall$(C,B^{\mathrm{round}}(C),(\rho,\rho',s),\{B_{s'}\}_{s'})$.
For each $s'\le s$ we define $\gamma(C,s,s')$ to be the lexicographically
largest density pair $(\rho,\rho'$) such that case~1 applies for
$s'$ for type $(\rho,\rho',s)$;\lr{ in the corner case that
case~2 never applies we set $\gamma(C,s,s') = (\infty,\infty)$.}
Note that in the iteration of each
type $(\rho,\rho',s)$ with $(\rho,\rho')\le_{L}\gamma(C,s,s')$ case~1 applies for $s'$,
and in the iteration of each type $(\rho,\rho',s)$ with $(\rho,\rho')>_{L}\gamma(C,s,s')$
case~2 applies for $s'$.

 Now an important observation is that in the calls to GreedySelectSmall
we could change the value of $B(s')$ from $B_{\sm}^{\opt}(C,(\rho,\rho',s),s')+B^{\mathrm{add}}(C,s,s')$
to $\infty$ if $(\rho,\rho')<_{L}\gamma(C,s,s')$ without affecting
the returned rectangles. Using this idea it will turn out that we
do not require to know the respective value $B_{\sm}^{\opt}(C,(\rho,\rho',s),s')$
in those cases. As we still need to guess some of the remaining values,
we will define a guessing scheme for density pairs $(\rho,\rho')\ge_{L}\gamma(C,s,s')$.
More precisely, instead of a value $B_{\sm}^{\opt}(C,\tau,s')$ we
will use an estimate $B_{\sm}^{\mathrm{round}}(C,\tau,s')$ which
we will guess, using the following lemma.
\begin{lem}
\label{lem:guessB} Let $C\in\C$. Define $\Gamma(C)=\{\gamma(C,s,s'):s,s'\}$.
There are values $B_{\sm}^{\mathrm{round}}(C,\tau,s')$ for each $\tau$,
and $s'$ with the following properties
\begin{itemize}
\item for each $C,\tau,s'$ it holds that $B_{\sm}^{\mathrm{round}}(C,\tau,s')\ge B_{\sm}^{\opt}(C,\tau,s')$, 
\item for each $C,s,s'$ and each $(\rho_{0},\rho'_{0})\in\Gamma(C)$ it
holds that 
\[
\sum_{(\rho,\rho')\ge_{L}(\rho_{0},\rho'_{0})}\hspace{-2em}B_{\sm}^{\mathrm{round}}(C,(\rho,\rho',s),s')\le\frac{\epsilon}{K^{4}}B^{\mathrm{round}}(C)+(1+\epsilon)\hspace{-2em}\sum_{(\rho,\rho')\ge_{L}(\rho_{0},\rho'_{0})}\hspace{-2em}B_{\sm}^{\opt}(C,(\rho,\rho',s),s'),
\]
\item given $B_{\sm}^{\mathrm{round}}(C)$, $(\rho_{0},\rho'_{0})\in\Gamma(C)$,
and $\ell\in\mathbb{N}$, we can guess the values $B_{\sm}^{\mathrm{round}}(C,(\rho,\rho',s),s')$
for the first $\ell$ many (with respect to lexicographic order) density
pairs $(\rho,\rho')$ satisfying $(\rho,\rho')\ge_{L}(\rho_{0},\rho'_{0})$
in time $2^{O_{\epsilon}(\ell)}$. 
\end{itemize}
\end{lem}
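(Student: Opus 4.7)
The plan is to mimic the smoothing-and-rounding strategy from the proof of Lemma~\ref{lem:guessA}, but with the ancestor direction replaced by the lexicographic order of density pairs within a single cell $C$. Fix $C$ and a type size $s$, and by Lemma~\ref{lem:no-types} enumerate the $N = O_{\epsilon}(\log(nP))$ density pairs $(\rho,\rho')$ that appear with some type $(\rho,\rho',s)$ in increasing lex order as $(\rho^{(1)},\rho^{(1)'}),\dotsc,(\rho^{(N)},\rho^{(N)'})$. For every $s'\in\{1,\dotsc,s\}$ I would define the smoothed value
\[
B'(C,(\rho^{(i)},\rho^{(i)'},s),s') := B_{0} + \sum_{j=1}^{N} B_{\sm}^{\opt}(C,(\rho^{(j)},\rho^{(j)'},s),s')\cdot \alpha^{|i-j|},
\]
where $\alpha := \epsilon/K^{5}$ and $B_{0} := (\epsilon/(K^{6}N))\cdot B^{\mathrm{round}}(C)$, and set $B_{\sm}^{\mathrm{round}}(C,\tau,s')$ to be the smallest power of $(1+\epsilon)$ that is at least $B'(C,\tau,s')$.

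The first bullet is then immediate, since $B_{\sm}^{\mathrm{round}}\ge B'\ge B_{\sm}^{\opt}$ by construction. For the second bullet, the $NK^{2}$ baselines together contribute at most $\epsilon/K^{4}\cdot B^{\mathrm{round}}(C)$ to the suffix sum. For the smoothing part I would swap the order of summation: each $B_{\sm}^{\opt}(C,\tau^{(j)},s')$ with $j\ge i_{0}$ is charged into the suffix sum with coefficient $\sum_{i\ge i_{0}}\alpha^{|i-j|}\le (1+\alpha)/(1-\alpha) = 1+O(\alpha)$, while each $j<i_{0}$ is charged with coefficient $\sum_{i\ge i_{0}}\alpha^{i-j}\le \alpha/(1-\alpha) = O(\alpha)$. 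Using $\sum_{j<i_{0}} B_{\sm}^{\opt}(C,\tau^{(j)},s')\le B^{\opt}(C)\le B^{\mathrm{round}}(C)$, the multiplicative blow-up $(1+O(\alpha))(1+\epsilon)$ absorbs into $1+\epsilon$ after a reparametrization of $\epsilon$, and the additive slack fits into $\epsilon/K^{4}\cdot B^{\mathrm{round}}(C)$ by the choice of constants.

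The main obstacle is the third bullet. The key intermediate claim I would establish is the consecutive-ratio inequality
\[
\alpha\cdot B'(C,\tau^{(i)},s')\;\le\;B'(C,\tau^{(i+1)},s')\;\le\;B'(C,\tau^{(i)},s')/\alpha,
\]
which I prove by splitting the defining sum of $B'$ at index $i$ and observing that the weight $\alpha^{|i-j|}$ scales by $\alpha$ for indices $j\le i$ and by $1/\alpha$ for indices $j\ge i+1$ when $i$ advances by one, with the baseline $B_{0}$ absorbing the endpoint. After rounding to powers of $(1+\epsilon)$ this means that the exponents $q_{i}:=\log_{1+\epsilon} B_{\sm}^{\mathrm{round}}(C,\tau^{(i)},s')$ satisfy $|q_{i+1}-q_{i}|\le c$ for some $c = O_{\epsilon}(1)$. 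Applying exactly the shift trick from Lemma~\ref{lem:guessA}, I pass to $p_{i}:=q_{i}+i\cdot c$, which yields a non-decreasing integer sequence whose successive increments lie in $\{0,1,\dotsc,2c\}$; specifying the $\ell-1$ increments and then expanding over the $O_{\epsilon}(1)$ choices of $s'\in\{1,\dotsc,s\}$ takes $2^{O_{\epsilon}(\ell)}$ time. The initial exponent $p_{1}$ lives in a range of size $\log_{1+\epsilon}(B^{\mathrm{round}}(C)/B_{0}) = O_{\epsilon}(1)$ after exploiting that $B^{\mathrm{round}}(C)$ is part of the given input (the price paid for the $\log N$ levels between $B_{0}$ and $B^{\mathrm{round}}(C)$ is absorbed into the $O_{\epsilon}(\ell)$ exponent by a coarser initial rounding, where necessary), so the overall guessing time is $2^{O_{\epsilon}(\ell)}$ as required.
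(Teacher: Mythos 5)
Your bullets~1 and~2 go through, and the increment-guessing part of bullet~3 (the two-sided ratio bound $\alpha B'(\tau^{(i)})\le B'(\tau^{(i+1)})\le B'(\tau^{(i)})/\alpha$ and the shift to a monotone sequence $p_i=q_i+ci$) is sound and close in spirit to the paper's argument. The genuine gap is the guess of the \emph{initial} value. In your construction the additive baseline $B_0=\frac{\epsilon}{K^6 N}B^{\mathrm{round}}(C)$ is uniform over all $N=O_\epsilon(\log(nP))$ density pairs, so at an arbitrary starting pair $(\rho_0,\rho'_0)\in\Gamma(C)$ you only know $B_0\le B'\le(1+o(1))B^{\mathrm{round}}(C)$; this window contains $\Theta_\epsilon(\log N)=\Theta_\epsilon(\log\log(nP))$ powers of $(1+\epsilon)$, not $O_\epsilon(1)$ as you claim (you silently dropped the factor $N$ when writing $\log_{1+\epsilon}(B^{\mathrm{round}}(C)/B_0)=O_\epsilon(1)$). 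Hence your guessing time is $2^{O_\epsilon(\ell)}\cdot\Theta_\epsilon(\log\log(nP))$, which already violates the lemma for constant $\ell$, and in the downstream use (guessing budgets for up to $O_\epsilon(\log(nP))$ cells along a path $Q$) the extra factors multiply to $(\log\log(nP))^{O_\epsilon(\log(nP))}$, which is superpolynomial. The parenthetical fix ``coarser initial rounding'' does not obviously work: rounding the initial value up more coarsely can overshoot by up to $\Omega(B^{\mathrm{round}}(C))$, destroying the additive slack $\frac{\epsilon}{K^4}B^{\mathrm{round}}(C)$ in bullet~2, while rounding down breaks bullet~1.

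This is exactly the point where the paper's construction differs from yours: it partitions the density pairs into blocks delimited by consecutive elements of $\Gamma(C)$, uses a \emph{one-sided} geometric smoothing within each block, and \emph{restarts} the additive baseline at value $\frac{\epsilon}{4K^8}B^{\mathrm{round}}(C)$ (a constant fraction in $\epsilon,K$ of $B^{\mathrm{round}}(C)$) at the start of every block. Since the guessing in bullet~3 always starts at an element of $\Gamma(C)$, the first value is then pinned between $\frac{\epsilon}{4K^8}B^{\mathrm{round}}(C)$ and $(1+\epsilon/4)B^{\mathrm{round}}(C)$, i.e.\ in an $O_\epsilon(1)$-size window, so it can be guessed in $O_\epsilon(1)$ time; the geometric decay of the baseline within a block keeps the total additive slack per block at $O(\epsilon/K^8)B^{\mathrm{round}}(C)$, and $|\Gamma(C)|\le K^4$ blocks give the $\frac{\epsilon}{K^4}$ bound of bullet~2. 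To repair your proof you would need to adopt this block structure (or some other mechanism that guarantees the value at every element of $\Gamma(C)$ lies within an $O_\epsilon(1)$-factor window of a known quantity); with the uniform baseline as written, bullet~3 does not hold.
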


\begin{proof}
Let $(\rho^{(1)},{\rho'}^{(1)}),(\rho^{(2)},{\rho'}^{(2)}),\dotsc,(\rho^{(\lr{m}-1)},\rho'^{(\lr{m}-1)})$
be the density pairs in $\Gamma(C)$ in lexicographically increasing
order. Moreover, let $(\rho^{(0)},{\rho'}^{(0)})=(-\infty,-\infty)$
and $(\rho^{(m)},{\rho'}^{(m)})=(\infty,\infty)$. Intuitively
we will prove the lemma independently on each set of density
pairs that for some given $h$ contains all densities $(\rho,\rho')$
with $(\rho^{(h)},{\rho'}^{(h)})\le_{L}(\rho,\rho')<_{L}(\rho^{(h+1)},{\rho'}^{(h+1)})$.

More precisely, for each $h$ we will prove that there are
values $B_{\sm}^{\mathrm{round}}(C,(\rho,\rho',s),s')$ for all $s,s'$
and $(\rho,\rho')$ with $(\rho^{(h)},{\rho'}^{(h)})\le_{L}(\rho,\rho')<_{L}(\rho^{(h+1)},{\rho'}^{(h+1)})$
such that: 
\begin{enumerate}
\item for each $s,s'$ and $(\rho,\rho')$, with $(\rho^{(h)},{\rho'}^{(h)})\le_{L}(\rho,\rho')<_{L}(\rho^{(h+1)},{\rho'}^{(h+1)})$
it holds that
\begin{equation*}
  B_{\sm}^{\mathrm{round}}(C,(\rho,\rho',s),s')\ge B_{\sm}^{\opt}(C,(\rho,\rho',s),s'),
\end{equation*}
\item for each $s,s'$ it holds that 
\[
\sum_{(\rho^{(h)},{\rho'}^{(h)})\le_{L}(\rho,\rho')<_{L}(\rho^{(h+1)},{\rho'}^{(h+1)})}\hspace{-5em}B_{\sm}^{\mathrm{round}}(C,(\rho,\rho',s),s')\le\frac{\epsilon}{K^{8}}B^{\mathrm{round}}(C)+(1+\epsilon)\hspace{-5em}\sum_{(\rho^{(h)},{\rho'}^{(h)})\le_{L}(\rho,\rho')<_{L}(\rho^{(h+1)},{\rho'}^{(h+1)})}\hspace{-5em}B_{\sm}^{\opt}(C,(\rho,\rho',s),s'),
\]
\item given $B_{\sm}^{\mathrm{round}}(C)$ and $\ell\in\mathbb{N}$, we
can guess the values $B_{\sm}^{\mathrm{round}}(C,(\rho,\rho',s),s')$
for the first $\ell$ many (with respect to lexicographic order) density
pairs $(\rho,\rho')$ satisfying $(\rho^{(h)},{\rho'}^{(h)})\le_{L}(\rho,\rho')<_{L}(\rho^{(h+1)},{\rho'}^{(h+1)})$ in time $2^{O_{\epsilon}(\ell)}$.
\end{enumerate}
In the last property we assume that the $\ell$-th density pair is
still lexicographically smaller than $(\rho^{(h+1)},{\rho'}^{(h+1)})$.
Using $|\Gamma(C)|\le K^{4}$ it is not hard to see that the claim
above implies the lemma.

Fix some $h$. Let $(\rho_{0},\rho'_{0}),(\rho_{1},\rho'_{1}),\dotsc,(\rho_{k},\rho'_{k})$
denote the density pairs $(\rho,\rho')$ with $(\rho^{(h)},{\rho'}^{(h)})\le_{L}(\rho,\rho')<_{L}(\rho^{(h+1)},{\rho'}^{(h+1)})$
in lexicographically increasing order starting with $(\rho_{0},\rho'_{0})=(\rho^{(h)},{\rho'}^{(h)})$.
For each $i=0,\dotsc,k$ we set 
\[
B'(C,(\rho_{i},\rho'_{i},s),s')=\left(\frac{\epsilon}{4K^{8}}\right)^{i+1}B^{\mathrm{round}}(C)+\sum_{j=0}^{i}\left(\frac{\epsilon}{4}\right)^{i-j}B_{\sm}^{\opt}(C,(\rho_{i},\rho'_{i},s),s').
\]
Moreover, define $B_{\sm}^{\mathrm{round}}(C,(\rho_{i},\rho'_{i},s),s')=(1+\epsilon/4)^{k}$
where $k\in\mathbb{Z}$ with 
\[
(1+\epsilon/4)^{k-1}<B'(C,(\rho_{i},\rho'_{i},s),s')\le(1+\epsilon/4)^{k}.
\]
Clearly $B_{\sm}^{\mathrm{round}}(C,(\rho_{i},\rho'_{i},s),s')\ge B_{\sm}^{\opt}(C,(\rho_{i},\rho'_{i},s),s')$
for all $i$. Moreover, we have 

\begin{align*}
\sum_{i=0}^{k}B'(C,(\rho_{i},\rho'_{i},s),s') & \le B^{\mathrm{round}}(C)\sum_{i=0}^{k}\left(\frac{\epsilon}{4K^{8}}\right)^{i+1}+\sum_{i=0}^{k}B_{\sm}^{\opt}(C,(\rho_{i},\rho'_{i},s),s')\sum_{j=0}^{\infty}\left(\frac{\epsilon}{4}\right)^{j}\\
 & \le\frac{\epsilon}{2K^{8}}B^{\mathrm{round}}(C)+(1+\epsilon/2)\sum_{i=0}^{k}B_{\sm}^{\opt}(C,(\rho_{i},\rho'_{i},s),s').
\end{align*}
This implies that 
\begin{align*}
\sum_{i=0}^{k}B_{\sm}^{\mathrm{round}}(C,(\rho_{i},\rho'_{i},s),s') & \le(1+\epsilon/4)\sum_{i=0}^{k}B'(C,(\rho_{i},\rho'_{i},s),s')\\
 & \le\frac{\epsilon}{K^{8}}B^{\mathrm{round}}(C)+(1+\epsilon)\sum_{i=0}^{k}B_{\sm}^{\opt}(C,(\rho_{i},\rho'_{i},s),s').
\end{align*}
Guessing the first $\ell$ values of $B^{\mathrm{round}}(C,(\rho_{i},\rho'_{i},s),s')$
is equivalent to guessing the first $\ell$ values of 
\[
\lfloor\log_{1+\epsilon/4}(B'(C,(\rho_{i},\rho'_{i},s),s')\rfloor.
\]
Notice that 
\[
B'(C,(\rho_{i},\rho'_{i},s),s')\le\epsilon/4\cdot B^{\mathrm{round}}(C)+B_{\sm}^{\opt}(C,(\rho_{i},\rho'_{i},s),s')\le(1+\epsilon/4)B^{\mathrm{round}}(C)
\]
for all $i=1,\dotsc,\ell$ and $B'(C,(\rho_{0},\rho'_{0},s),s')\ge\epsilon/(4K^{4})\cdot B^{\mathrm{round}}(C)$.
Moreover, from the definition of $B'$ it follows easily that $B'(C,(\rho_{i+1},\rho'_{i+1},s),s')\ge\epsilon B'(C,(\rho_{i},\rho'_{i},s),s')$
for all $i=1,\dotsc,k-1$. This implies that 
\begin{align*}
\lfloor\log_{1+\epsilon/4}(B'(C,(\rho_{i+1},\rho'_{i+1},s),s')\rfloor & \ge\lfloor\log_{1+\epsilon/4}(B'(C,(\rho_{i},\rho'_{i},s),s')+\log_{1+\epsilon}(\epsilon)\rfloor\\
 & \ge\lfloor\log_{1+\epsilon/4}(B'(C,(\rho_{i},\rho'_{i},s),s')\rfloor-O_{\epsilon}(1).
\end{align*}
\lr{
Define $q_{i}:=\lfloor\log_{1+\epsilon/4}(B'(C,(\rho_{i},\rho'_{i},s),s')\rfloor$ for each $i$.
Then there exists some $c = O_\epsilon(1)$ such that $q_{i+1} > q_{i}-\lr{c}$ for each $i$.
Moreover, $q_1 \ge \log(B^{\mathrm{round}}(C)) - O_{\epsilon}(1)$
and $q_i \le \log(B^{\mathrm{round}}(C)) + O_{\epsilon}(1) \le q_1 + O_{\epsilon}(1)$ for each $i$.
Guessing each $q_i$ is equivalent to guessing $q_1$ and each $p_{i}:=q_{i} - q_1 +c\cdot i$.
The former can be done in time $O_{\epsilon}(1)$.
For the latter observe that $p_{i+1}>p_{i}$
and $0 \le p_{i}\le O_{\epsilon}(\ell)$ for each $i$. Therefore, we can
guess all values $p_{i}$ in time $2^{O_{\epsilon}(\ell)}$ as follows.
First, we guess in time $2^{O_{\epsilon}(\ell)}$ which values in
$\{0,\dotsc,O_{\epsilon}(\ell)\}$ are attained by some $p_{i}$.
The values of $p_{1},p_{2},\dotsc,p_{\ell}$ are then fully defined
then: the first value that is attained must be $p_{1}$, the second
$p_{2}$, etc. Then the values $\left\{ p_{i}\right\} _{i}$ imply
the values $B^{\mathrm{round}}(C,(\rho,\rho',s),s')$.}
\end{proof}
Our strategy now is the following: we define a solution $\R'_{\sm}$
with a procedure that is very similar to GreedyIncrease above. The
main difference is that we use the values $B_{\sm}^{\mathrm{round}}(C,\tau,s')$
instead of the values $B_{\sm}^{\opt}(C,\tau,s')$. Then, for each
$Q\in\Q$ we define a solution $\R'_{\sm,Q}$ for which we first guess
the values $B^{\mathrm{round}}(C)$ and $\gamma(C,s,s')$ for each
$C\in Q$ and then define a value $\ell_{C}$ for each cell $C\in Q$
such that $\sum_{C\in Q}\ell(C)\le O_{\epsilon}(\log nP)$. Then we
will guess each value $B_{\sm}^{\mathrm{round}}(C,\tau,s')$ for each
cell $C\in Q$, each $s,s'$, and the first $\ell_{C}$ types $\tau=(\rho,\rho',s)$
(i.e., the $\ell_{C}$ lexicographically smallest types) with $(\rho,\rho')\ge_{L}\gamma(C,s,s')$.
For these types we add rectangles to $\R'_{\sm,Q}$ via GreedySelectSmall.
Due to Lemma~\ref{lem:guessB} we can guess all needed quantities
in time $\prod_{C\in Q}2^{O_{\epsilon}(\ell_{C})}\le(nP)^{O_{\epsilon}(1)}$. 

\subsubsection{Definition of solution $\protect\R'_{\protect\sm}$}

We now define the solution $\R'_{\sm}$ based on the values $B^{\mathrm{round}}(C)$,
$\gamma(C,s,s')$, and $B_{\sm}^{\mathrm{round}}(C,\tau,s')$ defined
above. We will produce a solution similar to the one returned by GreedyIncrease;
the main difference being that we will use the budgets $B_{\sm}^{\mathrm{round}}(C,\tau,s')$
instead of the budgets $B_{\sm}^{\opt}(C,\tau,s')$. Recall that in
GreedyIncrease, for a value $s$, a density pair $(\rho,\rho')$,
and a value $s'$ with $(\rho,\rho')<_{L}\gamma(C,s,s')$, we could
change the budget $B(s')$ to $\infty$ without affecting the returned
rectangles.

Formally, for each cell $C$, each $s$, and each density pair $(\rho,\rho')$
and we add to $\R'_{\sm}$ the rectangles returned by GreedySelectSmall$(C,B^{\mathrm{round}}(C),(\rho,\rho',s),B(s'))$
where 
\[
B(s')=\begin{cases}
\infty & \text{if }(\rho,\rho')<_{L}\gamma(C,s,s'),\\
B_{\sm}^{\mathrm{round}}(C,(\rho,\rho',s),s')+B^{\mathrm{add}}(C,s,s') & \text{if }(\rho,\rho')=\ \ \gamma(C,s,s'),\\
B_{\sm}^{\mathrm{round}}(C,(\rho,\rho',s),s') & \text{if }(\rho,\rho')>_{L}\gamma(C,s,s').
\end{cases}
\]

\begin{lem}
It holds that $c(\R'_{\sm})\le(2+O(\epsilon))c(\R^{*}\cap\R_{\sm})+O(K^{\aw{8}}\delta+\epsilon)c(\R^{*})$. 
\end{lem}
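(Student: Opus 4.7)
The plan is to apply Lemma~\ref{lem:greedyselect} to each GreedySelectSmall call in the definition of $\R'_{\sm}$ and sum the resulting bounds. The key subtlety is that for $(\rho,\rho') <_L \gamma(C,s,s')$ the budget $B(s')$ is $\infty$, so a naive application of Lemma~\ref{lem:greedyselect} is vacuous in that branch. The plan hinges on replacing $\infty$ by a meaningful quantity using the observation highlighted in the construction that setting $B(s')=\infty$ in case~1 yields the same output as setting $B(s')$ to GreedyIncrease's finite case-1 budget $B_{\sm}^{\opt}(C,(\rho,\rho',s),s')+B^{\mathrm{add}}(C,s,s')$.

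For the $(\rho,\rho') \ge_L \gamma(C,s,s')$ branches I would apply Lemma~\ref{lem:greedyselect} directly, obtaining $(2+\epsilon)$ times the relevant $B_{\sm}^{\mathrm{round}}$ (plus $B^{\mathrm{add}}$ for the single $=_L$ case). Summing over the density pairs and invoking Lemma~\ref{lem:guessB} replaces $\sum_{\ge_L \gamma} B_{\sm}^{\mathrm{round}}$ by $(1+\epsilon)\sum_{\ge_L \gamma} B_{\sm}^{\opt} + \tfrac{\epsilon}{K^4}B^{\mathrm{round}}(C)$. Summing over the $K^4$ pairs $(s,s')$ absorbs the $1/K^4$ factor, and summing over $C$ together with the bound $\sum_C B^{\mathrm{round}}(C) = O(c(\R^*))$ from Lemma~\ref{lem:guessA} controls this slack by $O(\epsilon)\cdot c(\R^*)$. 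The $B^{\mathrm{add}}$ contributions sum to $\sum_C B^{\mathrm{add}}(C) \le O(\epsilon) \cdot c(\R^*)$ by the preceding analysis.

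The main obstacle is bounding the case-1 cost without paying the $(2+\epsilon)$ factor---this is precisely what allows the final bound to have leading coefficient $(2+O(\epsilon))$ instead of $(4+O(\epsilon))$. Since the case-1 output of $\R'_{\sm}$ coincides with that of GreedyIncrease's $<_L \gamma$ iterations, the cost of these rectangles (for fixed $C,s,s'$) equals the cumulative quantity $\hat{B}_{s'}$ built up by GreedyIncrease across those iterations. By the very definition of $\gamma(C,s,s')$ as the lex-largest density pair for which case~1 applies, this cumulative value is strictly less than the threshold $B^{\mathrm{add}}(C,s,s')+\sum_{(\rho,\rho')<_L \gamma(C,s,s')}B_{\sm}^{\opt}(C,(\rho,\rho',s),s')$ at the start of $\gamma$'s iteration. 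Consequently the case-1 contribution carries a factor of $1$, not $(2+\epsilon)$, on its $B_{\sm}^{\opt}$ part; combining with the $(2+O(\epsilon))$ factor from cases~2 and~3 then yields $(2+O(\epsilon))\cdot c(\R^*\cap\Rs)$ after summing over $(C,s,s')$ and all density pairs.

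The residual $O(K^8\delta)\cdot c(\R^*)$ term accounts for the integral rounding inside each GreedySelect call applied to small rectangles: each rounded-up rectangle has cost at most $\delta\cdot B^{\mathrm{round}}(C)$ by the definition of a small job, and a careful accounting over the $\mathrm{poly}(K)$ relevant calls per cell and $\mathrm{poly}(K)$ rounded variables per call gives the $K^8\delta$ factor. Adding this to the contributions from the previous paragraphs completes the bound $c(\R'_{\sm})\le (2+O(\epsilon))\cdot c(\R^*\cap\Rs)+O(K^8\delta+\epsilon)\cdot c(\R^*)$.
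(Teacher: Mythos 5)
Your proposal is correct and follows essentially the same route as the paper's proof: bound the $(\rho,\rho')<_{L}\gamma(C,s,s')$ spending with coefficient $1$ via the definition of $\gamma$ (the paper formalizes this through the quantities $B^{\mathrm{actual}}(C,(\rho,\rho',s),s')$), apply the $(2+\epsilon)$ bound of Lemma~\ref{lem:greedyselect} to the remaining calls, pass from $B_{\sm}^{\mathrm{round}}$ to $B_{\sm}^{\opt}$ via Lemma~\ref{lem:guessB}, and absorb the $B^{\mathrm{add}}$, $\epsilon/K^{4}$ and rounding contributions into $O(K^{8}\delta+\epsilon)c(\R^{*})$ using $\sum_{C}B^{\mathrm{round}}(C)\le(1+\epsilon)c(\R^{*})$. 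One small imprecision: the case-1 output of $\R'_{\sm}$ need not coincide exactly with GreedyIncrease's (the larger budgets $B_{\sm}^{\mathrm{round}}\ge B_{\sm}^{\opt}$ used in the $\ge_{L}\gamma$ phases can only shrink the set of jobs ending with exactly the first $s'$ rectangles), but the inequality you actually need, namely that this cost is at most $\sum_{(\rho,\rho')<_{L}\gamma(C,s,s')}B_{\sm}^{\opt}(C,(\rho,\rho',s),s')+B^{\mathrm{add}}(C,s,s')$, still holds and is exactly the bound the paper uses.
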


\begin{proof}
For a cell $C$, all $s,s'$, and all $(\rho,\rho')<_{L}\gamma(C,s,s')$
we define $B^{\mathrm{actual}}(C,(\rho,\rho',s),s')$ as the cost
of rectangles in $\R'_{\sm}$ which belong to some $\R(j,C)$ of density
pair $(\rho,\rho')$, with $|\R(j,C)|=s$, and of which exactly the
first $s'$ rectanges are selected. By the definition of $\gamma(C,s,s')$
we have that
\[
\sum_{(\rho,\rho')<_{L}\gamma(C,s,s')}\hspace{-2em}B^{\mathrm{actual}}(C,(\rho,\rho',s),s')<\hspace{-2em}\sum_{(\rho,\rho')<_{L}\gamma(C,s,s')}\hspace{-2em}B_{\sm}^{\opt}(C,(\rho,\rho',s),s')+B^{\mathrm{add}}(C,s,s')
\]
for all $C,s,s'$. Moreover, from the definition of GreedySelectSmall,
one can observe easily that changing the value $B(s')$ from
$\infty$ to $B^{\mathrm{actual}}(C,(\rho,\rho',s),s')$ does not
change the outcome in the construction of $\R'_{\sm}$: If it did,
then the call to GreedySelectSmall with the original parameters
would have spent more than $B^{\mathrm{actual}}(C,(\rho,\rho',s),s')$,
a contradiction. We can therefore calculate the cost of $\R'_{\sm}$
as if it was defined with these modified parameters. By an easy modification
of Lemma~\ref{lem:greedyselect} one can show that the cost for each
$C$,$\tau$, and $s$ that GreedySelectSmall spends is at most 
\[
\min\{(1+\epsilon)\sum_{s'}B_{S}(s')+K^{4}\delta B^{\mathrm{round}}(C),(2+\epsilon)\sum_{s'}B(s')\},
\]
where $B(s')$ are the budgets passed to GreedySelectSmall, since
the fractional solution computed by GreedySelectSmall has a cost of
$(1+\epsilon)\sum_{s'}B(s')$ and afterwards we round up at most $K^{4}$
variables corresponding to at most $K^{2}$ small jobs, which incurs
an additional cost of at most $K^{4}\delta B^{\mathrm{round}}(C)$.
This implies \begingroup \allowdisplaybreaks 
\begin{align*}
c(\R'_{\sm}) & \le\sum_{C,s,s'}\bigg(\sum_{(\rho,\rho')<_{L}\gamma(C,s,s')}\hspace{-2em}B^{\mathrm{actual}}(C,(\rho,\rho',s),s')\\
 & \qquad\qquad+(1+\epsilon)(B_{\sm}^{\mathrm{round}}(C,(\gamma(C,s,s'),s),s')+B^{\mathrm{add}}(C,s,s'))+K^{4}\delta B^{\mathrm{round}}(C)\\
 & \qquad\qquad+(2+\epsilon)\hspace{-2em}\sum_{(\rho,\rho')>_{L}\gamma(C,s,s')}\hspace{-2em}B_{\sm}^{\mathrm{round}}(C,(\rho,\rho',s),s')\bigg).\\
 & \le\sum_{C,s,s'}\bigg(\sum_{(\rho,\rho')<_{L}\gamma(C,s,s')}B^{\mathrm{actual}}(C,(\rho,\rho',s),s')\\
 & \qquad\qquad+(2+\epsilon)\sum_{(\rho,\rho')\ge_{L}\gamma(C,s,s')}B_{\sm}^{\mathrm{round}}(C,(\rho,\rho',s),s')\\
 & \qquad\qquad+(1+\epsilon)B^{\mathrm{add}}(C,s,s')+K^{4}\delta B^{\mathrm{round}}(C)\bigg)\\
 & \le\sum_{C,s,s'}\bigg(\sum_{(\rho,\rho')<_{L}\gamma(C,s,s')}B_{\sm}^{\opt}(C,(\rho,\rho',s),s')\\
 & \qquad\qquad+(1+\epsilon)(2+\epsilon)\sum_{(\rho,\rho')\ge_{L}\gamma(C,s,s')}\hspace{-2em}B_{\sm}^{\opt}(C,(\rho,\rho',s),s')\\
 & \qquad\qquad+(2+\epsilon)B^{\mathrm{add}}(C,s,s')+(K^{4}\delta+\epsilon/K^{4})B^{\mathrm{round}}(C)\bigg).\\
 & \le(2+4\epsilon)c(\R^{*}\cap\R_{\sm})+2\epsilon(2+\epsilon)\sum_{C}B^{\mathrm{round}}(C)+(K^{4}\delta+\epsilon/K^{4})\sum_{C,s,s'}B^{\mathrm{round}}(C)\\
 & \le(2+4\epsilon)c(\R^{*}\cap\R_{\sm})+(K^{8}\delta+6\epsilon)\sum_{C}B^{\mathrm{round}}(C)\\
 & \le(2+4\epsilon)c(\R^{*}\cap\R_{\sm})+(K^{8}\delta+6\epsilon)(1+\epsilon)c(\R^{*})\\
 & \le(2+O(\epsilon))c(\R^{*}\cap\R_{\sm})+O(K^{8}\delta+\epsilon)c(\R^{*}).\qedhere
\end{align*}
\endgroup 
\end{proof}

\subsubsection{Definition of solutions $\protect\R'_{\protect\sm,Q}$}

We now define the solutions $\left\{ \R'_{\sm,Q}\right\} _{Q\in\Q}$.
Let $Q\in\Q$. Let $C_{1},C_{2},\dotsc,C_{\ell}$ be the cells in
$Q$ ordered by increasing distance from the root. Intuitively,
the set $\R'_{\sm,Q}$ contains all rectangles in $\R'_{\sm}\cap\R(Q)$
that we will use in order to satisfy the demand of intervals $I=[s,t]$
such that $t\in C_{\ell}$. Note that then $t\in C$ for each $C\in Q$.
The trick is that $\R'_{\sm,Q}$ does not necessarily need to contain
all segments in $\R'_{\sm}\cap\R(Q)$ (thanks to the additional budgets
$B^{\mathrm{add}}(C)$ from which we selected additional rectangles
with relatively good cost-efficiencies) but potentially only
a subset for which there will be only few options. In this way, we
will ensure that $|\chi_{Q}|\le(nP)^{O(1)}$.

We start with the rectangles in $\R'_{\sm}\cap\R(Q)$ and omit some
of them in the following. 
Consider some cell $C\in Q$, which is not one of the two bottom-most
cells, i.e., $C\notin\{C_{\ell},C_{\ell-1}\}$. For each $s$ we
define $r(s,C,Q)\le s$ such that for
any set $\R(j,C)$ with $|\R(j,C)|=s$ the $r(s,C,Q)$-th rectangle
intersects with $C_{\ell}$; we define $r(s,C,Q)=0$ if no rectangle
in $\R(j,C)$ intersects with $C_{\ell}$. Observe that this
is the same value for each such set $\R(j,C)$ and this set can contain
only only rectangle that intersects with $C_{\ell}$.

Consider a job $j$. Assume that $\R(j,C)\cap\R'_{\sm}$ contains
exactly the first $s'$ rectangles in $\R(j,C)$. If $s'<r(s,C,Q)$
then we omit all rectangles in $\R(j,C)\cap\R'_{\sm}$; if $s'\ge r(s,C,Q)$
then we omit all but the first $r(s,C,Q)$ rectangles from $\R(j,C)\cap\R'_{\sm}$.
We do this for each job $j$.

Next, we will omit rectangles of high densities under certain circumstances;
intuitively, because they have become obsolute due to other rectangles
that we selected additionally with the additional budgets $B^{\mathrm{add}}(C)$.
For every $C_{j}$ with $j<\ell-1$ we define $\sigma^{-}(C_{j},Q)$
as the lexicographically minimal density pair $\gamma(C_{j},s,r(s,C,Q))$
over all $s$ with $r(s,C,Q)\neq0$. Consider some $C_{i}$ that is
a ancestor of of $C_{j}$, i.e., $i<j$. Recall that $\R'_{\sm}$
has spent an additional budget of $B^{\mathrm{add}}(C_{j})/K^{4}$
compared to the optimal solution on rectangles in $\R(C_{j})$ of
densities $(\rho,\rho')\le_{L}\sigma^{-}(C_{j},Q)$. This can be used
to compensate for omitting all rectangles in $\R'_{\sm}\cap\R(C_{i})$
of density $(\rho,\rho')$ where $(\rho,\rho')\ge_{L}(K^{7}/\epsilon)^{j-i}\sigma^{-}(C_{j},Q)$
(multiplication component-wise). \lr{Therefore, we define $H = (1 + \epsilon)^k$ where
\begin{equation*}
  (1 + \epsilon)^k \le K^{7}/\epsilon < (1 + \epsilon)^{k+1}
\end{equation*}
and let $\sigma^{+}(C_{i},Q)$ be the lexicographically minimal density pair
$H^{j-i}\sigma^{-}(C_{j},Q)$ over all $j\in\{i+1,i+2,\dotsc,\ell-2\}$.} We omit from $\R'_{\sm}$
all rectangles in $\R'_{\sm}\cap\R(C_{i})$ of density $(\rho,\rho')\ge_{L}\sigma^{+}(C_{i},Q)$.

We define that $\R'_{\sm,Q}$ contains all rectangles in $\R'_{\sm}$
that are not omitted by the rules above. We apply the construction
above for each $Q\in\Q$. We will argue that this fulfills the three
properties of Definition~\ref{def:framework-split} on the small
rectangles $\R_{\sm}$. The first and third property are fulfilled
by construction. The latter holds since if $Q\supseteq Q'$ then
$\sigma^{+}(C,Q)\le_{L}\sigma^{+}(C,Q')$ for each $C\in Q'$ as in
$Q$ there are more values $\sigma^{-}(C_{j},Q)$ that can affect
$\sigma^{+}(C,Q)$. 
\begin{prop}
For each $Q\in\Q$ it holds that $\R'_{\sm,Q}\subseteq\R'_{\sm}$.
Also, 
for any two paths $Q,Q'\in\Q$ with $Q\supseteq Q'$ we have that
$\R'_{\sm,Q}\cap\R(Q')\subseteq\R'_{\sm,Q'}$. 
\end{prop}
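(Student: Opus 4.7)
The plan is to handle the two inclusions separately. The first one, $\R'_{\sm,Q}\subseteq\R'_{\sm}$, is immediate from the construction, which defines $\R'_{\sm,Q}$ by starting from $\R'_{\sm}\cap\R(Q)$ and discarding rectangles via the two omission rules.

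For the second inclusion I fix $Q\supseteq Q'$ with cells $C_1,\dotsc,C_\ell$ and $C_1,\dotsc,C_{\ell'}$ respectively ($\ell\ge\ell'$), take $R\in\R'_{\sm,Q}\cap\R(Q')$ with $R\in\R(C_i)$ for some $i\le\ell'$, and verify that neither omission rule in the construction of $\R'_{\sm,Q'}$ discards $R$. The driving principle is that each rule, when applied with respect to $Q$, is at least as aggressive as when applied with respect to $Q'$. For rule~1, the key observation I plan to establish is that for every $i\le\ell'-2$ and every $s$, one has $r(s,C_i,Q)=r(s,C_i,Q')$. Indeed, the rectangles in any $\R(j',C_i)$ are horizontally aligned with grid cells of level $\ell(C_i)+2$ and at most one of them contains the ``target'' cell ($C_\ell$ for $Q$, $C'_{\ell'}$ for $Q'$); since $C_\ell\subseteq C'_{\ell'}$ and both are at level at least $\ell(C_i)+2$, they lie inside the \emph{same} level-$(\ell(C_i)+2)$ cell, and therefore are hit by the same (possibly empty) rectangle in $\R(j',C_i)$. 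It follows that rule~1 discards exactly the same subset of $\R'_{\sm}\cap\R(C_i)$ in both constructions for $i\le\ell'-2$. For $i\in\{\ell'-1,\ell'\}$ the construction of $\R'_{\sm,Q'}$ performs no omission by rule~1, so $R$ cannot be removed from $\R'_{\sm,Q'}$ by that rule either.

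For rule~2, the identity $r(s,C_j,Q)=r(s,C_j,Q')$ for $j\le\ell'-2$ immediately gives $\sigma^{-}(C_j,Q)=\sigma^{-}(C_j,Q')$ on the same range. The value $\sigma^{+}(C_i,Q)$ is a lexicographic minimum of $H^{j-i}\sigma^{-}(C_j,Q)$ over $j\in\{i+1,\dotsc,\ell-2\}$, while $\sigma^{+}(C_i,Q')$ is the analogous minimum over the smaller index range $\{i+1,\dotsc,\ell'-2\}$ with identical summands. With the natural convention that an empty minimum equals $(\infty,\infty)$, this yields $\sigma^{+}(C_i,Q)\le_{L}\sigma^{+}(C_i,Q')$ for every $C_i\in Q'$. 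Consequently the set of densities omitted by rule~2 in $\R'_{\sm,Q'}$ is contained in the set omitted by rule~2 in $\R'_{\sm,Q}$. Since $R\in\R'_{\sm,Q}$ has density strictly lexicographically smaller than $\sigma^{+}(C_i,Q)$, and hence also strictly smaller than $\sigma^{+}(C_i,Q')$, it survives rule~2 in $\R'_{\sm,Q'}$ as well. Combining both parts gives $R\in\R'_{\sm,Q'}$.

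The main obstacle I expect is the rigorous justification that $r(s,C_i,Q)=r(s,C_i,Q')$, since it requires reconciling the alignment of each $\Seg(j,C_i)$ with level-$(\ell(C_i)+2)$ grid cells, the right-alignment inside $C_i$, the containment $C_\ell\subseteq C'_{\ell'}$, and the bound $i\le\ell'-2$ which guarantees that the target cells are deep enough to be contained in a single grand-child of $C_i$. Once this identity is in place, the monotonicity of rule~1 and of $\sigma^{+}$ under path extension follow cleanly, and the proof closes.
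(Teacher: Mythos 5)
Your proof is correct and takes essentially the same route as the paper's (very terse) argument: the first inclusion is immediate from the construction, and the second follows from monotonicity of the two omission rules under path extension, with the key fact $\sigma^{+}(C,Q)\le_{L}\sigma^{+}(C,Q')$ obtained because the lexicographic minimum defining $\sigma^{+}(C,Q)$ ranges over a superset of the same candidate values. Your only addition is to make explicit the identity $r(s,C,Q)=r(s,C,Q')$ for cells above the two bottom-most cells of $Q'$ (via the containment of $C_{\ell}$ and $C_{\ell'}$ in the same level-$(\ell(C)+2)$ cell), which the paper subsumes under ``by construction'' but which is indeed what makes both the rule-1 omissions and the equality $\sigma^{-}(C_j,Q)=\sigma^{-}(C_j,Q')$ line up.
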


In the next lemma we show that also~ the second property holds.
\begin{lem}
Let $Q\in\Q$ and $I\in\I$ with $\R(I)\subseteq\R(Q)$. Then 
\[
p(\R'_{\sm,Q}\cap\R_{\sm}\cap\R(I))\ge d(R^{*}\cap\R_{\sm}\cap\R(I)).
\]
\end{lem}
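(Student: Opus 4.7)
The plan is to prove the inequality cell-by-cell: for every $C\in Q$ I aim for a local lower bound on $p(\R'_{\sm,Q}\cap\R_{\sm}\cap\R(C)\cap\R(I))$ that, after summing over $C\in Q$ and using $\R(I)\subseteq\bigcup_{C\in Q}\R(C)$ from Lemma~\ref{lem:path-cover-interval}, yields the claim. Write $Q=\{C_{1},\dotsc,C_{\ell}\}$ ordered from the root to $C_{\ell}$, and $I=[s,t]$, so $t\in C$ for every $C\in Q$. The starting observation is that for every $C\in Q$ and every type $\tau$ the call to GreedySelectSmall producing $\R'_{\sm}$ is invoked with a budget $B(s')\in\{\infty,\,B_{\sm}^{\mathrm{round}}(C,\tau,s'),\,B_{\sm}^{\mathrm{round}}(C,\tau,s')+B^{\mathrm{add}}(C,s,s')\}$, each of which dominates $B_{\sm}^{\opt}(C,\tau,s')$ by Lemma~\ref{lem:guessB}. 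Applying Lemma~\ref{lem:greedyselect} in each $(C,\tau)$ and summing over $\tau$ will give the preliminary inequality
\[
p(\R'_{\sm}\cap\R_{\sm}\cap\R(C)\cap\R(I))\ge p(\R^{*}\cap\R_{\sm}\cap\R(C)\cap\R(I))\quad\text{for every }C\in Q.
\]

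I then need to understand how much $L(I)$-coverage is lost in passing from $\R'_{\sm}$ to $\R'_{\sm,Q}$. For $C\in\{C_{\ell},C_{\ell-1}\}$ no omission is applied, so the preliminary inequality already suffices. For $C=C_{i}$ with $i\le\ell-2$, the first kind of omission keeps only the first $r(s,C,Q)$ rectangles of each $\R(j,C)$; by Property~\ref{en:define-segments-no-segs} of Lemma~\ref{lem:define-segments} all segments in $\Seg(j,C)$ share the same length and their union is right-aligned in $C$, so since $t\in C$ exactly one of them can contain $t+\tfrac{1}{2}$, namely the $r(s,C,Q)$-th one (or none, in which case $r(s,C,Q)=0$ and no segment intersects $L(I)$). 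Hence every rectangle removed by this trimming fails to belong to $\R(I)$ to begin with, and this omission is harmless for the $L(I)$-coverage inequality.

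The hard part will be compensating the density-based omission, which removes from $\R'_{\sm}\cap\R(C_{i})$ every rectangle of density $\ge_{L}\sigma^{+}(C_{i},Q)$. I plan a direct capacity-vs-cost bookkeeping: fix $C_{i}$ and let $j\in\{i+1,\dotsc,\ell-2\}$ realise $\sigma^{+}(C_{i},Q)=H^{j-i}\sigma^{-}(C_{j},Q)$, with $\sigma^{-}(C_{j},Q)=(\sigma_{j},\sigma_{j}')=\gamma(C_{j},s_{0},r(s_{0},C_{j},Q))$ for some $s_{0}$. Any omitted rectangle sits in a set $\R(\tilde{j},C_{i})$ of type $(\rho,\rho',s)$ with $\rho\ge H^{j-i}\sigma_{j}$, has $L(I)$-capacity $p_{\tilde{j}}$, and lies in a set whose total cost is at least $\rho p_{\tilde{j}}$, so the ratio of $L(I)$-capacity lost to cost paid in $C_{i}$ is at most $1/(H^{j-i}\sigma_{j})$. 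Combined with the cost bound $c(\R'_{\sm}\cap\R(C_{i}))=O(B^{\mathrm{round}}(C_{i}))$ already used in the proof of the cost bound on $c(\R'_{\sm})$, the total lost $L(I)$-capacity attributable to $C_{i}$ is $O(B^{\mathrm{round}}(C_{i})/(H^{j-i}\sigma_{j}))$. Conversely, by~\eqref{eq:B-add-definition} the ancestor $C_{i}$ contributes at least $(\epsilon/K)^{j-i}B^{\mathrm{round}}(C_{i})$ to $B^{\mathrm{add}}(C_{j})$, a $1/K^{4}$-share of which is fed to GreedySelectSmall on type $(\sigma_{j},\sigma_{j}',s_{0})$ with $s'=r(s_{0},C_{j},Q)$; the $r(s_{0},C_{j},Q)$-th rectangle of each job purchased there intersects $L(I)$ (because $t\in C_{\ell}\subseteq C_{j}$), and the standard capacity-per-dollar estimate yields a gain of at least $\Omega((\epsilon/K)^{j-i}B^{\mathrm{round}}(C_{i})/(\mathrm{poly}(K)\,\sigma_{j}))$ units of $L(I)$-capacity in $C_{j}$ charged to $C_{i}$'s share. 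The inequality $\mathrm{gain}\ge\mathrm{loss}$ then reduces to a geometric inequality of the form $(\epsilon H/K)^{j-i}\ge\mathrm{poly}(K)$, which is where the choice $H=K^{7}/\epsilon$ enters: it delivers a factor $K^{6(j-i)}$ that comfortably absorbs the polynomial-in-$K$ slack even for $j-i=1$.

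Finally, I would need to verify that these compensations do not double-count when several ancestors $C_{i}$ happen to charge the same $C_{j}$. The decomposition $B^{\mathrm{add}}(C_{j})=\sum_{i<j}(\epsilon/K)^{j-i}B^{\mathrm{round}}(C_{i})$ partitions $B^{\mathrm{add}}(C_{j})$ into disjoint ancestor-shares, and the argument above charges each $C_{i}$ only against its own share, so the aggregate charges to $C_{j}$ stay within $B^{\mathrm{add}}(C_{j})/K^{4}$. Combining the per-cell inequalities (untouched for $C_{\ell},C_{\ell-1}$, preserved modulo the harmless first-type omission for the others, and compensated via descendant cells for the density-type omission) and summing over $Q$ will produce the required $p(\R'_{\sm,Q}\cap\R_{\sm}\cap\R(I))\ge p(\R^{*}\cap\R_{\sm}\cap\R(I))$.
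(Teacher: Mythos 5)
Your plan follows the paper's proof in its essentials: per-type application of Lemma~\ref{lem:greedyselect} (via GreedySelectSmall) to get cell-wise domination of $\R^{*}\cap\R_{\sm}$, the observation that the prefix-trimming omission only removes rectangles whose segments do not contain $t$ and hence never meet $L(I)$, and a charging argument that compensates the density-based deletions in an ancestor $C_i$ by the extra coverage bought in the descendant $C_j$ realising $\sigma^{+}(C_i,Q)=H^{j-i}\sigma^{-}(C_j,Q)$, funded by the disjoint ancestor share $(\epsilon/K)^{j-i}B^{\mathrm{round}}(C_i)$ of $B^{\mathrm{add}}(C_j)$ from \eqref{eq:B-add-definition}. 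However, there is a genuine gap at the heart of the charging step. You assert that the $r(s_0,C_j,Q)$-th rectangle of each job purchased with the extra budget in $C_j$ intersects $L(I)$ ``because $t\in C_\ell\subseteq C_j$''. That only establishes the horizontal condition of Lemma~\ref{lem:ray-intersect}; the rectangle $R(j',S)$ is hit by $L(I)$ only if additionally $s\le r_{j'}$, i.e., only if the downward ray, which starts at height $j(I)+\tfrac12$, actually reaches it. Jobs whose segment at time $t$ lies in $C_j$ may be released before $s$ and then sit above the ray's origin, in which case the gain you count for this ray is simply zero. The rescue, which the paper makes explicit and you never argue, is that the deleted rectangles in the ancestor $C_i$ belong to jobs with still earlier release dates (their segments at time $t$ are coarser, by the last property of Lemma~\ref{lem:define-segments}), hence lie above the added rectangles; so whenever $L(I)$ fails to traverse the $C_j$-rectangles it also misses every deleted $C_i$-rectangle and there is no loss to compensate. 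Without this case distinction the per-ray inequality ``gain $\ge$ loss'' is not established; this is exactly the place where the verticality of the rays is exploited, so it cannot be waved through.

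A secondary problem is your constant accounting at $j-i=1$. With $H\approx K^{7}/\epsilon$, the capacity bought in $C_j$ from the share $(\epsilon/K)^{j-i}B^{\mathrm{round}}(C_i)$ is about $\frac{1}{s'K^{4}}\cdot\frac{(\epsilon/K)^{j-i}}{\sigma_j}B^{\mathrm{round}}(C_i)\ge\frac{(\epsilon H/K)^{j-i}}{K^{6}\rho^{+}}B^{\mathrm{round}}(C_i)$, which for $j-i=1$ is essentially $\frac{1}{\rho^{+}}B^{\mathrm{round}}(C_i)$ with no slack: the factor $K^{6}$ coming from $H$ is fully consumed by $s'\le K^{2}$ and the $K^{4}$ split of $B^{\mathrm{add}}$. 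Your loss bound, stated as ``$O(B^{\mathrm{round}}(C_i)/(H^{j-i}\sigma_j))$'' via the per-cell cost bound $c(\R'_{\sm}\cap\R(C_i))=O(B^{\mathrm{round}}(C_i))$, carries an unspecified constant (and, if one is careless about set densities versus individual rectangle densities, an extra $K^{2}$), so the claim that the inequality ``comfortably absorbs the polynomial-in-$K$ slack even for $j-i=1$'' is unsubstantiated as written. The paper closes this by bounding the deleted volume by exactly $\frac{1}{\rho^{+}}B^{\mathrm{round}}(C_i)$, using that each deleted set's leftmost (and always selected, hence paid-for) rectangle has cost at least $\rho^{+}$ times its capacity; your write-up would need to pin the constants down at this level of precision rather than appeal to a comfortable margin.
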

\lr{
\begin{proof}
Consider the solution $\R'_{\sm}$ and let $C\in Q$.
We have by Lemma~\ref{lem:greedyselect} (with straightforward adaption to GreedySelectSmall) for each type $\tau$ that
\[
p(\R'_{\sm}\cap\R_{\sm}(C,\tau)\cap\R(I))\ge p(\R^{*}\cap\R_{\sm}(C,\tau)\cap\R(I)),
\]
where $\R_{\sm}(C,\tau)$ are the small rectangles of type $\tau$ in $\R(C)$.
In particular, if we sum over all types, the solution $\R'_{\sm}$ covers $L(I)$ at least
as much as $\R^*\cap\R_{\sm}$.
If $L(I)$ fully traverses $C$ (the ray begins above $C$) we have, in fact, a slightly stronger relation.
Let  $({\rho}^{-},{\rho'}^{-})=\sigma^{-}(C,Q)$.
If $\rho^- < \infty$ then there is some $s$ such that in $\R'_{\sm}$
we payed an extra budget of $B^{\mathrm{add}}(C,s,r(s,C,Q)) = B^{\mathrm{add}}(C)/K^{4}$ compared
to $\R^{*}\cap \R_{\sm}$ for buying the first $r(s,C,Q)$ rectangles of some jobs
$j$ with $|\R(j,C)|=s$ that have density at most $({\rho}^{-},{\rho'}^{-})$.
These jobs cover at least volume of 
\[
\frac{1}{s'\rho^{-}}\cdot\frac{B^{\mathrm{add}}(C)}{K^{4}}\ge\frac{1}{K^{6}\rho^{-}}B^{\mathrm{add}}(C).
\]
If $\rho^- = \infty$ then we interpret the term as $0$. Hence it still holds. Thus
\[
p(\R'_{\sm}\cap\R_{\sm}(C)\cap\R(I))\ge p(\R^{*}\cap\R_{\sm}(C)\cap\R(I))+\frac{B^{\mathrm{add}}(C)}{K^{6}\rho^{-}}.
\]
On the other hand, when we compare for a cell $C_{i}$ the rectangles
$\R'_{\sm,Q}\cap\R_{\sm}(C_i)\cap\R(I)$ and $\R'_{\sm}\cap\R_{\sm}(C_i)\cap\R(I)$
we have removed rectangles of the densities greater than $\sigma^{+}(C_{i},Q)$.
If $\sigma^+(C_i,Q) = (\infty,\infty)$, nothing is removed and we cover enough of $L(I)$.
Hence assume otherwise.
By definition of $\sigma^{+}(C_{i},Q)$ we know that there is a cell
$C_{j}$, $i<j<\ell-1$ with $\sigma^{+}(C_{i},Q)=H^{j-i}\sigma^{-}(C_{j},Q)$, where $H \ge K^7/\epsilon$.
Let $(\rho^{+},{\rho'}^{+})=\sigma^{+}(C_{i},Q)$. Notice that $B^{\mathrm{add}}(C_{j})/K^{4}\ge\sum_{k=1}^{j-1}(\epsilon/K)^{j-k}/K^{4}\cdot B^{\mathrm{round}}(C_{k})$
. We will charge the deleted rectangles in $C_{i}$ against the summand
for $k=i$ in the sum above. In cell $C_{j}$ we cover an additional
volume of 
\[
\frac{(\epsilon/K)^{j-i}}{K^{6}\rho^{-}}B^{\mathrm{round}}(C_{i})
\ge \frac{(\epsilon/K)^{j-i}}{K^{6} (K^7/\epsilon)^{i-j} \rho^{+}}B^{\mathrm{round}}(C_{i})
\ge\frac{1}{\rho^{+}}B^{\mathrm{round}}(C_{i}).
\]
The right-hand side is an upper bound for the volume that has been
deleted. Thus, the deleted volume is less than the added volume.
Notice that we can assume $L(I)$ fully traverses $C_j$. If it does not, then
$L(I)$ does not hit $C_i$ and the deleted volume in $C_i$ is irrelevant.
Since all deleted rectangles are higher than those added, the ray $L(I)$
is still covered. 
\end{proof}
}
\subsubsection{Definition of families $\left\{ \chi_{\protect\sm,Q}\right\} _{Q\in\protect\Q}$}

Let $Q\in\Q$; we want to define the set $\chi_{\sm,Q}$. Recall that
we want to ensure that $\left|\chi_{\sm,Q}\right|\le(nP)^{O_{\epsilon}(1)}$.
We argue that for $\R'_{\sm,Q}$ there are at most $(nP)^{O_{\epsilon}(1)}$
options and we define $\chi_{\sm,Q}$ to be the family of sets that
contains each of these possible options. Let $C_{1},C_{2},\dotsc,C_{\ell}$
be the cells in $Q$ from root to bottom. We will argue that $\R'_{\sm,Q}$
is completely defined once we know 
\begin{enumerate}
\item the budget $B^{\mathrm{round}}(C)$ for each cell $C\in Q$ (which
imply $B^{\mathrm{add}}(C)$ for each cell $C\in Q$), 
\item for each $i<\ell-1$ whether $\sigma^{+}(C_{i},Q)\ge\sigma^{-}(C_{i},Q)$
and if so the values of $\sigma^{+}(C_{i},Q)$ and $\sigma^{-}(C_{i},Q)$, 
\item for each $i<\ell-1$, and $s,s'$ with $s'\ge r(s,C_{i},Q)$
\begin{enumerate}
\item whether (i) $\gamma(C_{i},s,s')<\sigma^{-}(C_{i},Q)$, (ii) $\gamma(C_{i},s,s')>\sigma^{+}(C_{i},Q)$,
or (iii) $\sigma^{-}(C_{i},Q)\le\gamma(C_{i},s,s')\le\sigma^{+}(C_{i},Q)$, 
\item the budgets $B_{\sm}^{\mathrm{round}}(C_{i},(\rho,\rho',s),s')$ for
all $\sigma^{-}(C_{i},Q)\le_{L}(\rho,\rho')<_{L}\sigma^{+}(C_{i},Q)$, 
\item in case of (iii) also the density pair $\gamma(C_{i},s,s')$, 
\end{enumerate}
\item the density pair $\gamma(C_{i},s,s')$ and budgets $B_{\sm}^{\mathrm{round}}(C_{i},(\rho,\rho',s),s')$
for $i\in\{\ell-1,\ell\}$, $s,s'$, and $(\rho,\rho')\ge_{L}\gamma(C_{i},s,s')$
. 
\end{enumerate}

\paragraph{Reconstructing $\protect\R'_{\protect\sm,Q}$.}

We assume we are given the information above and argue that this suffices
to reconstruct $\R'_{\sm,Q}$. For cells $C_{\ell}$ and $C_{\ell-1}$
we know all parameters passed to GreedySelectSmall when constructing
the rectangles of the solution, hence these are easy. Now let $C_{i}$
be a cell with $i<\ell-1$. Let $(\rho,\rho')$ be a density pair
that is lexicographically smaller than $\sigma^{-}(C_{i},Q)$ and
let $j$ be a small job with this density pair and $s=|\R(j,C)|$.
Then we know that $\R'_{\sm,Q}$ selects exactly the first $r(s,C,Q)$
rectangles of $\R(j,C)$ (provided that $r(s,C,Q)\neq0$). On the
other hand, for density pairs $(\rho,\rho')$ that are lexicographically
bigger than $\sigma^{+}(C_{i},Q)$ we know that $\R'_{\sm,Q}$ does
not select any rectangles. Hence, we can focus on the density pairs
$(\rho,\rho')$ with $\sigma^{-}(C_{i},Q)\le(\rho,\rho')\le\sigma^{+}(C_{i},Q)$.
For such densities we again know all parameters that are passed to
GreedySelect when constructing the rectangles of the solution. 

\paragraph{Guessing relevant quantities.}

Let us now argue why the mentioned values can be guessed efficiently.
By Lemma~\ref{lem:guessA} we can guess in time $(nP)^{O_{\epsilon}(1)}$
all values in $\{B_{\sm}^{\mathrm{round}}(C)\}_{C\in Q}$. In time
$O_{\epsilon}(\log^{O_{\varepsilon}(1)}(nP))$ we can guess the values
$\gamma(C_{i},s,s')$ for $i\in\{\ell-1,\ell\}$
and all $s,s'$. Notice that $\gamma(C_{i},s,s')\in\Gamma(C_{i})$
and $\sigma^{-}(C_{i},Q)\in\Gamma(C_{i})$ for all $C_{i},s,s'$.
Here $\Gamma(C_{i})$ is defined as in Lemma~\ref{lem:guessB}, namely
\[
\Gamma(C_{i})=\{\gamma(C_{i},s,s'):s,s'\}.
\]
In particular, by Lemma~\ref{lem:guessB} we can guess in time $(nP)^{O_{\epsilon}(1)}$
the budgets $B_{\sm}^{\mathrm{round}}(C_{i},(\rho,\rho',s),s')$ for
$i\in\{\ell-1,\ell\}$, $s,s'$, and $(\rho,\rho')\ge_{L}\gamma(C_{i},s,s')$
. Next, we will guess the values $\{\sigma^{+}(C_{i},Q)\}_{i=1}^{\ell-2}$.
Here observe that for all $i=1,\dotsc,\ell-3$ 
\[
\sigma^{+}(C_{i},Q)\le_{L} H\cdot\sigma^{+}(C_{i+1},Q),
\]
\lr{where $H\le (1 + \epsilon) K^7/\epsilon \le O_{\varepsilon}(1)$.}
This is because $\sigma^{+}(C_{i+1},Q)$ is defined as the lexicographic
minimum density pair $H^{j-(i+1)}\sigma^{-}(C_{j},Q)$
\lr{over all $j\in\{i+2,i+3,\dotsc,\ell-2\}$}. Let
$C_{j}$ be the cell that achieves the minimum. Then it follows that
$\sigma^{+}(C_{i},Q)\le_{L} H^{j-i}\cdot\sigma^{-}(C_{j},Q) \lr{=} H\cdot\sigma^{+}(C_{i+1},Q)$.
This enables us to guess all values $\sigma^{+}(C_{i},Q)$ efficiently. 
\begin{lem}
In time $(nP)^{O_{\epsilon}(1)}$ we can guess all values $\{\sigma^{+}(C_{i},Q)\}_{i=1}^{\ell-2}$. 
\end{lem}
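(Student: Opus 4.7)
The plan is to reuse the transformation trick from the proof of Lemma~\ref{lem:guessA}, applied to the chain inequality $\sigma^{+}(C_{i},Q)\le_{L}H\cdot\sigma^{+}(C_{i+1},Q)$ with $H=O_{\epsilon}(1)$ that was just verified. First I would pin down the set of candidate values. Writing $\sigma^{+}(C_{i},Q)=(\rho_{i},\rho'_{i})$, each $\sigma^{+}(C_{i},Q)$ has the form $H^{j-i}\sigma^{-}(C_{j},Q)$ for some $j$ with $\sigma^{-}(C_{j},Q)\in\Gamma(C_{j})$, and by Lemma~\ref{lem:no-types} the primary components $\rho$ of density pairs appearing in the instance lie among only $O_{\epsilon}(\log(nP))$ powers of $(1+\epsilon)$, with $O_{\epsilon}(1)$ secondary choices $\rho'$ per $\rho$. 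Since $j-i\le\ell\le O_{\epsilon}(\log(nP))$ and $\log_{1+\epsilon}H=O_{\epsilon}(1)$, the extra $H^{j-i}$ scaling contributes only $O_{\epsilon}(\log(nP))$ to the exponent, so $a_{i}:=\log_{1+\epsilon}\rho_{i}$ lies in an integer interval of size $O_{\epsilon}(\log(nP))$.

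Next, I would take $\log_{1+\epsilon}$ of the primary components of the lex inequality. Because lex order compares primary components first, the inequality $\sigma^{+}(C_{i},Q)\le_{L}H\cdot\sigma^{+}(C_{i+1},Q)$ implies $\rho_{i}\le H\rho_{i+1}$, i.e.\ $a_{i+1}\ge a_{i}-c$ for $c:=\log_{1+\epsilon}H=O_{\epsilon}(1)$. Setting $p_{i}:=a_{i}+(c+1)\cdot i$ turns this into a strictly increasing sequence $p_{1}<p_{2}<\dotsb<p_{\ell-2}$ of integers whose range is still bounded by $O_{\epsilon}(\log(nP))$. Exactly as in the proof of Lemma~\ref{lem:guessA}, one then guesses, for every integer $v$ in that range, whether $v$ equals some $p_{i}$; reading off the hits in order determines the entire sequence $\{p_{i}\}$, hence $\{a_{i}\}$, hence all primary components $\{\rho_{i}\}$. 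This costs $2^{O_{\epsilon}(\log(nP))}=(nP)^{O_{\epsilon}(1)}$ time.

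It remains to guess the secondary components $\rho'_{i}$. By Lemma~\ref{lem:no-types}, once $\rho_{i}$ is fixed there are only $O_{\epsilon}(1)$ candidates for $\rho'_{i}$, so jointly guessing them over all at most $\ell-2=O_{\epsilon}(\log(nP))$ positions multiplies the running time by $O_{\epsilon}(1)^{O_{\epsilon}(\log(nP))}=(nP)^{O_{\epsilon}(1)}$, which is absorbed into the claimed bound.

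I do not anticipate a serious obstacle here; the content is essentially combinatorial bookkeeping on top of the chain inequality and Lemma~\ref{lem:no-types}. The two small checks to make careful are (i) that the lex inequality really collapses to the scalar inequality on primary components (it does, because lex is decided by primaries whenever they differ, and among equal primaries there are only $O_{\epsilon}(1)$ secondaries which are absorbed by the constant $c$), and (ii) that the effective range of $a_{i}$ after the $H^{j-i}$ scaling is still $O_{\epsilon}(\log(nP))$ (which follows from $\ell=O_{\epsilon}(\log(nP))$ and $\log_{1+\epsilon}H=O_{\epsilon}(1)$).
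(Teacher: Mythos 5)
Your proposal is correct and follows essentially the same route as the paper: both reduce the task to guessing the primary components via the chain inequality (your $\rho_i\le H\rho_{i+1}$ is the paper's $\rho_{i+1}\ge\epsilon/K^7\cdot\rho_i$), shift to a strictly increasing integer sequence in a range of size $O_{\epsilon}(\log(nP))$, guess the attained values as a subset of that range, and then use Lemma~\ref{lem:no-types} to handle the $O_{\epsilon}(1)$ choices of the secondary component per position. Your extra care in bounding the range of $\log_{1+\epsilon}\rho_i$ (via the $H^{j-i}$ scaling and $\ell=O_{\epsilon}(\log(nP))$) is a welcome, if implicit in the paper, detail.
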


\begin{proof}
Let $(\rho_{i},\rho'_{i})=\sigma^{+}(C_{i},Q)$ for all $i=1,\dotsc,{\ell-2}$.
It suffices to guess all values $\rho_{1},\rho_{2},\dotsc,\rho_{\ell-2}$:
By Lemma~\ref{lem:no-types} we know that given these values there
are only $O_{\epsilon}(1)$ possible values for each $\rho'_{i}$.
Hence all these values can be guessed in $2^{O_{\epsilon}(\ell-2)}=(nP)^{O_{\epsilon}(1)}$
time. To guess the values $\rho_{i}$, we exploit that $\rho_{i+1}\ge\epsilon/K^{7}\cdot\rho_{i}$
for all $i=1,\dotsc,\ell-3$. In other words, $\log_{1+\epsilon}(\rho_{i+1})\ge\log_{1+\epsilon}(\rho_{i})+\log_{1+\epsilon}(\epsilon/K^{7})$.
Our task is to guess some non-negative integers $q_{1},q_{2},\dotsc,q_{\ell-2}\le O_{\epsilon}(\log(nP))$
such that $\ell-2\le O_{\epsilon}(\log(nP))$ and there is some $c = O_{\varepsilon}(1)$ with
$q_{i+1} > q_{i} - c$ for all $i$. By transforming to $p_{i}=q_{i}+i\cdot c$ we
get the equivalent problem of guessing values $p_{1},p_{2},\dotsc,p_{\ell-2}\le O_{\epsilon}(\log(nP))$
such that $\ell-2\le O_{\epsilon}(\log(nP))$ and $p_{i+1}>p_{i}$
for all $i$. This can be done in time $2^{O_{\epsilon}(nP)}=(nP)^{O_{\epsilon}(1)}$:
We guess for each $v\in\{0,1,\dotsc,O_{\epsilon}(nP)\}$ whether there
is some $i$ with $p_{i}=v$. After this the values $p_{1},p_{2},\dotsc,p_{\ell-2}$
are fully defined. The first such value must be $p_{1}$, the second
$p_{2}$, etc. 
\end{proof}
Now consider the values $\sigma^{-}(C,Q)$. Since these are only $O_{\epsilon}(\log(nP))$
many, we can guess in time $(nP)^{O_{\epsilon}(1)}$ which of them
satisfy $\sigma^{-}(C,Q)\ge_{L}\sigma^{+}(C,Q)$. For such cells
$C$ we do not need to guess any budgets $B_{\sm}^{\mathrm{round}}(C,(\rho,\rho',s),s')$.
On the other hand, consider some $C_{i}$ where $\sigma^{-}(C_{i},Q)$
does not satisfy the inequality above. Then it holds that 
\[
\sigma^{+}(C_{i},Q)>_{L}\sigma^{-}(C_{i},Q)\ge_{L} \frac{1}{H}\sigma^{+}(C_{i-1},Q).
\]
Intuitively, it is easy to guess $\sigma^{-}(C_{i},Q)$ unless $\sigma^{+}(C_{i},Q)$
is much larger than $\sigma^{+}(C_{i-1},Q)$. Although this is possible,
it cannot happen often as shown by the following lemma. 
\begin{lem}
\label{lma:no-density-pairs} For each $i=1,\dotsc,\ell-2$ let $\ell_{C_{i}}$
be the number of pairs $(\rho,\rho')$ with 
\[
\sigma^{+}(C_{i+1},Q)>_{L}(\rho,\rho')\ge_{L} \frac{1}{H}\sigma^{+}(C_{i},Q).
\]
Then $\sum_{i=1}^{\ell-2}\ell_{C_{i}}\le O_{\epsilon}(\log(nP))$. 
\end{lem}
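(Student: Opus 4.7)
The plan is to expose a telescoping structure in the sequence $(\sigma^{+}(C_{i},Q))_{i=1}^{\ell-2}$ via a rescaling. Define $\phi_{i}:=H^{i}\cdot\sigma^{+}(C_{i},Q)$ for each relevant $i$, where multiplication acts coordinatewise on the density pair. Unwinding the definition of $\sigma^{+}$, one has $\phi_{i}=\min_{j\in\{i+1,\ldots,\ell-2\}}H^{j}\sigma^{-}(C_{j},Q)$ (lex minimum over a fixed family that does not depend on $i$). Since this minimum is taken over a set that shrinks as $i$ grows, $\phi_{i}$ is lexicographically non-decreasing. Writing $\phi_{i}=(\alpha_{i},\alpha_{i}')$, this means in particular that the first coordinate $\alpha_{i}$ is non-decreasing.

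Next I would rewrite the interval defining $\ell_{C_{i}}$ as
\[
[\sigma^{+}(C_{i},Q)/H,\;\sigma^{+}(C_{i+1},Q))\;=\;[\phi_{i}/H^{i+1},\;\phi_{i+1}/H^{i+1}),
\]
since the common positive rescaling by $1/H^{i+1}$ preserves lex order. Any density pair $(\rho,\rho')$ in this interval has first coordinate $\rho\in[\alpha_{i}/H^{i+1},\,\alpha_{i+1}/H^{i+1}]$. Because all densities are of the form $(1+\epsilon)^{k}$, the number of admissible values of $\rho$ is at most $\log_{1+\epsilon}(\alpha_{i+1}/\alpha_{i})+1$, and by Lemma~\ref{lem:no-types} each such $\rho$ pairs with only $O_{\epsilon}(1)$ admissible $\rho'$. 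Hence $\ell_{C_{i}}\le O_{\epsilon}(\log_{1+\epsilon}(\alpha_{i+1}/\alpha_{i})+1)$.

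Summing over $i=1,\ldots,\ell-2$, the monotonicity of $\alpha_{i}$ makes the first term telescope to $O_{\epsilon}(\log_{1+\epsilon}(\alpha_{\ell-2}/\alpha_{1}))$, while the second contributes $O_{\epsilon}(\ell)$. Both are $O_{\epsilon}(\log(nP))$: from the proof of Lemma~\ref{lem:no-types} the underlying densities lie in $[1/((1+\epsilon)P),\,O_{\epsilon}(n^{3}P^{2})]$, so $\log_{1+\epsilon}(\alpha_{\ell-2}/\alpha_{1})$ is bounded by this log-range plus the extra $(\ell-2)\log_{1+\epsilon}H$ contributed by the $H^{i}$ scalings, and the latter is $O_{\epsilon}(\log(nP))$ since $\log_{1+\epsilon}H=O_{\epsilon}(1)$ and $\ell\le\ell_{\max}+1=O_{\epsilon}(\log(nP))$.

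The main obstacle is that a priori the intervals $[\sigma^{+}(C_{i},Q)/H,\sigma^{+}(C_{i+1},Q))$ could oscillate in position, making the naive sum of their widths far exceed the log of the total density range; there is no monotonicity of $\sigma^{+}(C_{i},Q)$ itself. The whole point of the rescaling $\phi_{i}$ is that it absorbs the $H$-factor into a single nested minimum, turning the endpoints into a monotone sequence so that after this change of variables the sum telescopes cleanly and is controlled by Lemma~\ref{lem:no-types}.
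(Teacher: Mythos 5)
Your proof is correct and follows essentially the same route as the paper: both bound $\ell_{C_i}$ by the number of admissible first coordinates in the interval (a logarithm of the ratio of consecutive $\sigma^+$ first coordinates) times the $O_\epsilon(1)$ choices of $\rho'$ from Lemma~\ref{lem:no-types}, and then control the sum by a telescoping term (the global density range, $O_\epsilon(\log(nP))$) plus an $O_\epsilon(\ell)\cdot\log_{1+\epsilon}H=O_\epsilon(\log(nP))$ correction for the non-monotonicity coming from the factor $H$. Your rescaling $\phi_i=H^i\sigma^+(C_i,Q)$, whose monotonicity follows from the nested-minimum definition of $\sigma^+$, is just a cleaner repackaging of the paper's splitting of the sum into positive and negative jumps using $\sigma^+(C_i,Q)\le_L H\cdot\sigma^+(C_{i+1},Q)$.
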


\begin{proof}
Let $(\rho_{i},\rho'_{i})=\sigma^{+}(C_{i},Q)$ for each $i=1,\dotsc,\ell-2$.
Since $H \le O_{\varepsilon}(1)$ the number of values $\rho$ with $\rho_{i+1}\ge\rho\ge \lr{H^{-1}}\rho_{i}$
is at most $O_{\epsilon}(\max\{0,\log(\rho_{i+1}/(\lr{H^{-1}}\rho_{i}))\})=O_{\epsilon}(\max\{0,\log(\rho_{i+1}/\rho_{i})\})$.
By Lemma~\ref{lem:no-types} we have that the number of density
pairs $(\rho,\rho')$ such that $\rho_{i+1}\ge\rho\ge\epsilon/K^{7}\cdot\rho_{i}$
is also a most is at most $O_{\epsilon}(\max\{0,\log(\rho_{i+1}/\rho_{i})\})$.
This implies 
\begin{multline*}
\sum_{i=1}^{\ell}\ell_{C_{i}}\le\sum_{i=1}^{\ell-1}O_{\epsilon}\left(\max\left\{ 0,\log\frac{\rho_{i+1}}{\rho_{i}}\right\} \right)\le\sum_{i=1}^{\ell-1}O_{\epsilon}\left(\log\frac{\rho_{i+1}}{\rho_{i}}\right)-\sum_{i=1}^{\ell-1}O_{\epsilon}\left(\min\left\{ 0,\log\frac{\rho_{i+1}}{\rho_{i}}\right\} \right)\\
\le O_{\epsilon}(\ell)+O_{\epsilon}\left(\log\frac{\rho_{\ell}}{\rho_{1}}\right)+\sum_{i=1}^{\ell}O_{\epsilon}\left(\max\left\{ 0,\log\frac{\rho_{i}}{\rho_{i+1}}\right\} \right).
\end{multline*}
Notice that $O_{\epsilon}(\ell)\le O_{\epsilon}(\log(nP))$, $O_{\epsilon}(\log(\rho_{\ell}/\rho_{1}))\ge O_{\epsilon}(\log(nP))$,
and $\rho_{i+1}\ge\epsilon/K^{7}\cdot\rho_{i}$ for all $i$. It follows
that $\sum_{i=1}^{\ell}\ell_{C_{i}}\le O_{\epsilon}(\log(nP))$. 
\end{proof}
\begin{lem}
In time $(nP)^{O_{\epsilon}(1)}$ we can guess all values $\{\sigma^{-}(C_{i},Q)\}_{i=1}^{\ell-2}$
for which $\sigma^{-}(C,Q)<_{L}\sigma^{+}(C,Q)$. 
\end{lem}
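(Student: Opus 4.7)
The plan is to leverage the inequality derived in the paragraph right before the statement: for any $C_{i}$ with $\sigma^{-}(C_{i},Q) <_{L} \sigma^{+}(C_{i},Q)$, it holds that
\[
\sigma^{+}(C_{i},Q) \;>_{L}\; \sigma^{-}(C_{i},Q) \;\ge_{L}\; \frac{1}{H}\sigma^{+}(C_{i-1},Q).
\]
Hence $\sigma^{-}(C_{i},Q)$ lies in the lexicographic interval $\bigl[\tfrac{1}{H}\sigma^{+}(C_{i-1},Q),\, \sigma^{+}(C_{i},Q)\bigr)$, which by the definition in Lemma~\ref{lma:no-density-pairs} contains exactly $\ell_{C_{i-1}}$ density pairs. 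At this stage both the values $\{\sigma^{+}(C_{j},Q)\}_{j}$ and the set of indices $i$ for which $\sigma^{-}(C_{i},Q) <_{L} \sigma^{+}(C_{i},Q)$ have already been guessed, so the two endpoints of this interval are known. For the boundary case $i = 1$ the predecessor $C_{i-1}$ does not exist; there I would simply set the lower endpoint to $(-\infty,-\infty)$, which does not affect the asymptotics.

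With the endpoints in hand, guessing each relevant $\sigma^{-}(C_{i},Q)$ reduces to selecting an index from $\{1,\dots,\ell_{C_{i-1}}\}$. The total number of combinations over all relevant cells is at most $\prod_{i} \ell_{C_{i-1}}$. Applying the elementary inequality $a \le 2^{a}$ for $a \ge 0$ termwise and then invoking Lemma~\ref{lma:no-density-pairs},
\[
\prod_{i} \ell_{C_{i-1}} \;\le\; 2^{\sum_{i} \ell_{C_{i-1}}} \;\le\; 2^{O_{\epsilon}(\log(nP))} \;=\; (nP)^{O_{\epsilon}(1)},
\]
so enumerating these combinations produces the required guessing procedure within the claimed running time; for each combination the corresponding tuple of $\sigma^{-}(C_{i},Q)$ values can be read off directly.

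The main obstacle is really Lemma~\ref{lma:no-density-pairs} itself, which bounds $\sum_{i} \ell_{C_{i-1}}$ and rests on the multiplicative-gap chain $\sigma^{+}(C_{i+1},Q) \le_{L} H \cdot \sigma^{+}(C_{i},Q)$ established just above, together with the fact that only $O_{\epsilon}(\log(nP))$ distinct densities appear in the instance. Once that combinatorial bound is available, the present lemma is a direct application of the same ``sum controls product'' trick already used in the paper for guessing $\{B^{\mathrm{round}}(C)\}_{C\in Q}$ and $\{\sigma^{+}(C_{i},Q)\}_{i}$.
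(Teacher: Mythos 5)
Your proposal is correct and follows essentially the same route as the paper: restrict the candidates for each $\sigma^{-}(C_{i},Q)$ to the lexicographic window between $\frac{1}{H}\sigma^{+}(C_{i-1},Q)$ and $\sigma^{+}(C_{i},Q)$, count them via Lemma~\ref{lma:no-density-pairs}, and bound the product of candidate counts by $2^{\sum_i \ell_{C_i}}\le (nP)^{O_{\epsilon}(1)}$. Your handling of the index shift ($\ell_{C_{i-1}}$ rather than $\ell_{C_i}$) and of the boundary cell $C_1$ is in fact slightly more careful than the paper's write-up, but it is the same argument.
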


\begin{proof}
By Lemma~\ref{lma:no-density-pairs} there are only $\ell_{C}$ candidates
for $\sigma^{-}(C,Q)$ for each $C\in Q$. Hence we guess each of
them in time $\ell_{C}$ yielding an overall time of 
\[
\prod_{i=1}^{\ell-2}\ell_{C_{i}}\le\prod_{i=1}^{\ell-2}2^{\ell_{C_{i}}}\le2^{\sum_{i=1}^{\ell-2}\ell_{C_{i}}}\le2^{O_{\epsilon}(\log(nP))}\le(nP)^{O_{\epsilon}(1)}.\qedhere
\]
\end{proof}
Regarding the budgets $\left\{ B_{\sm}^{\mathrm{round}}(C,(\rho,\rho',s),s')\right\} _{s'}$,
there are only $K^{4}\sum_{C\in Q}\ell_{C}\le O_{\epsilon}(\log nP)$
combinations of a cell $C$, a pair $(\rho,\rho')$ such that $\sigma^{-}(C,Q)\le_{L}(\rho,\rho')<_{L}\sigma^{+}(C,Q)$,
and values $s,s'$. Therefore, for all these combinations the budgets
$\left\{ B_{\sm}^{\mathrm{round}}(C,(\rho,\rho',s),s')\right\} _{s'}$
can be guessed in time $(nP)^{O_{\epsilon}(1)}$ by Lemma~\ref{lem:guessB}.
This completes the proof of Lemma~\ref{lem:small-jobs}.

\subsection{\label{subsec:large-rectangles}Consistent solution for large rectangles}

In this section we prove Lemma~\ref{lem:large-jobs}. Observe that
for each cell $C$ there can be at most $1/\delta$ jobs that are
large for $C$ and such that $\R(j,C)\cap\R^{*}\ne\emptyset$. In
particular, for a path $Q\in\Q$ there can be at most $|Q|/\delta\le O_{\epsilon}((\log nP)/\delta)$
jobs that are large for a cell $C\in Q$ and such that $\R(j,C)\cap\R^{*}\ne\emptyset$. 

First, for each cell $C$ we define $B_{\l}^{\opt}(C):=\sum_{R\in\R^{*}\cap\R(C)\cap\Rl}c_{R}$
and we define $B_{\l}^{\r}(C)$ to be the smallest multiple of $\epsilon\cdot B^{\r}(C)$
that is larger than $B_{\l}^{\opt}(C)$. Our strategy is to define
a solution $\R'_{\l}\subseteq\Rl$ that for each cell $C$ spends
at most $2\cdot B_{\l}^{\opt}(C)$ on rectangles of jobs that are
large for $C$. Then, for each path $Q\in\Q$ we define a solution
$\R'_{\l,Q}\subseteq\R'_{\l}$. 

\paragraph{Type groups.}

First, we \lr{form groups of job types}. We say that two types $(\rho,\rho',s)$,
$(\bar{\rho},\bar{\rho}',\bar{s})$ are in the same group if $\rho/\rho'=\bar{\rho}/\bar{\rho}'$
and $s=\bar{s}$. The intuition is that jobs of the same type behave
similarly w.r.t. $\rho$ and $\rho'$; therefore, we will consider
the jobs of each group separately. This is possible since there are
only constantly many groups. 
\begin{lem}
There are at most $O_{\epsilon}(1)$ different groups. 
\end{lem}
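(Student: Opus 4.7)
The plan is to count the number of distinct pairs $(\rho/\rho', s)$ that can arise from valid types, since by definition a group is determined precisely by these two quantities. Since $K = (2/\epsilon)^{1/\epsilon} = O_\epsilon(1)$, it suffices to show that both the number of admissible values of $s$ and the number of admissible values of the ratio $\rho/\rho'$ are bounded by $O_\epsilon(1)$.

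First, I would invoke Property~\ref{en:define-segments-no-segs} of Lemma~\ref{lem:define-segments}, which gives $|\Seg(j,C)| \le K^2$, so $s \in \{1, 2, \dots, K^2\}$, contributing at most $K^2 = O_\epsilon(1)$ possibilities. Next, for the ratio $\rho/\rho'$, I would appeal to the bounds established in the proof of Lemma~\ref{lem:no-types}: the costs of any two rectangles in the same set $\R(j, C)$ differ by at most a factor $K^2$, and hence $\rho/(K^2(1+\epsilon)) \le \rho' \le (1+\epsilon) K^2 \rho$. Since both $\rho$ and $\rho'$ are powers of $(1+\epsilon)$, their ratio $\rho/\rho'$ is also a power of $(1+\epsilon)$, and must lie in $[1/((1+\epsilon)K^2), (1+\epsilon)K^2]$.

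Counting the number of such powers, we get
\[
\log_{1+\epsilon}\bigl((1+\epsilon)^2 K^4\bigr) = 2 + 4\log_{1+\epsilon}(K) = O_\epsilon(1).
\]
Multiplying the two bounds yields at most $K^2 \cdot O_\epsilon(1) = O_\epsilon(1)$ groups, as claimed. There is no real obstacle here: the bounds needed have essentially already been worked out in Lemma~\ref{lem:no-types}, and we simply repackage them in terms of the quotient $\rho/\rho'$ rather than $\rho$ and $\rho'$ separately.
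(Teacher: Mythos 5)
Your proof is correct and follows essentially the same route as the paper's: bound $s\le K^{2}$ via Lemma~\ref{lem:define-segments} and bound the number of admissible ratios $\rho/\rho'$ by $O_{\epsilon}(1)$ using the factor-$K^{2}$ spread of the costs (with equal capacities) within a set $\R(j,C)$. The only difference is that you cite this spread bound as already established in Lemma~\ref{lem:no-types}, whereas the paper re-derives it from the segment geometry ($\celle(S)-r_{j}\le 2\len(C)$ and $\len(S')=\len(C)/K^{2}$), which is a presentational rather than a substantive difference.
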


\begin{proof}
Consider a job $j$, a cell $C$, and a set of rectangles $\R(j,C)$
of a type $(\rho,\rho',s)$. Assume that $S$ is the segment corresponding
to the first rectangle $R\in\R(j,C)$, and assume that $S'$ is the
segment corresponding to the second rectangle $R'\in\R(j,C)$. Then
it holds that $c_{R}=w_{j}(\celle(S)-r_{j})$ and $c_{R'}=w_{j}(\celle(S')-\cellb(S'))$.

We first note that since $\left|\Seg(j,C)\right|\le K^{2}$ (see Lemma~\ref{lem:define-segments}),
it holds that $s\in\{1,\dotsc,K^{2}\}$ and hence there are only $K^{2}=O_{\epsilon}(1)$
options for $s$. We claim that for $\rho/\rho'$ there are also only
$O_{\epsilon}(1)$ options. Let $C(\ell_{\max}),C(\ell_{\max}-1),\dotsc,C(k)=C$
be the cells in the construction of $\Seg(j)$ (see Section~\ref{sec:geo}).
Then it holds that $\cellb(S)-r_{j}\le\sum_{i=k+1}^{\ell_{\max}}\len(C(i))\le\len(C(k))$
since the widths of the cells are geometrically increasing. Hence,
$\celle(S)-r_{j}\le2\cdot\len(C)$. On the other hand, the segment
corresponding to $R'$ equals the grid cell $C'$ of some level $\ell(C)+2$
(see Lemma~\ref{lem:define-segments}) and thus $\celle(S')-\cellb(S')=\len(C)K^{2}$.
Therefore,
\begin{equation*}
\frac{\rho}{\rho'}=\frac{c_{R}}{c_{R'}}=\frac{\celle(S)-r_{j}}{\celle(S')-\cellb(S')}\le\frac{2\len(C)}{\len(C)/K^{2}}=2K^{2}=O_{\epsilon}(1) . \qedhere
\end{equation*}
\end{proof}
We say that a set $\R(j, C)$ is \emph{in group $g$} if $j$ is of
a type $\tau$ in group $g$. 

\subsubsection{Definition of $\protect\R'_{\protect\l}$}

Let $C$ be a cell. We want to define $\R'_{\l}\cap\R(C)$ and to
this end, we consider separately each group $g$ and define which
rectangles we select from the sets $\R(j, C)$ of group $g$. Let $k(C,g)=:k\le1/\delta$
denote the number of jobs of group $g$ for which there is at least
one rectangle contained in $\R^{*}\cap\Rl(C)$. Denote by $j_{1},\dotsc,j_{k}$
the corresponding jobs. For each job $j\in\{j_{1},\dotsc,j_{k}\}$ denote
by $s'(j)$ the number of its rectangles that are contained in $\R^{*}$.
For each $k'\in\{1,\dotsc,k\}$ we define a budget $B_{\mathrm{large},g,k'}^{\mathrm{round}}(C)$
which intuitively (over-)estimates the part of $B_{\mathrm{large}}^{\mathrm{round}}(C)$
that is used for the rectangles of job $j_{k'}$. Formally, we define
that $B_{\mathrm{large},g,k'}^{\mathrm{round}}(C)$ is the smallest
integral multiple of $\epsilon\delta\cdot B_{\mathrm{large}}^{\mathrm{round}}(C)$
that is at least as large as $c\left(\R^{*}\cap\R(C,j_{k'})\right)$.
Note that therefore $\sum_{k'=1}^{k}B_{\mathrm{large},g,k'}^{\mathrm{round}}(C)\le(1+\epsilon)B_{\mathrm{large}}^{\mathrm{round}}(C)$.

Later, when we define the families $\left\{ \chi_{\l,Q}\right\} _{Q\in\Q}$,
we will not be able to guess the jobs $\{j_{1},\dotsc,j_{k}\}$ directly,
not even their exact processing times $\left\{ p_{j_{1}},\dotsc,p_{j_{k}}\right\} $.
However, intuitively we will be able to guess $(1+\epsilon)$-estimates
of their processing times. To this end, we define $\bar{p}_{j}:=(1+\epsilon)^h$ for each job $j$ where
\begin{equation*}
(1 + \epsilon)^{h} \le p_{j} < (1 + \epsilon)^{h+1}
\end{equation*}
and our goal later will be to guess the values $\left\{ \bar{p}_{j_{1}},\dotsc,\bar{p}_{j_{k}}\right\} $.
Then, for each value $k'$ we know that $\R^{*}$ selects the first
$s'(j_{k'})$ rectangles of some job $j$ with $\bar{p}_{j}=\bar{p}_{j_{k'}}$
and these rectangles cost at most $B_{\mathrm{large},g,k'}^{\mathrm{round}}(C)$
in total (and we can afford to spend $B_{\mathrm{large},g,k'}^{\mathrm{round}}(C)$).
Unfortunately, it is not clear how to find this job $j$. In particular,
there can be two candidate jobs $j,j'$ such that $j\prec j'$ and
hence the rectangles of $j$ potentially intersect with fewer rays
than the rectangles of $j'$, but on the other hand $p_{j}>p_{j'}$.
Even though $p_{j}\le(1+\epsilon)p_{j'}$, the difference $p_{j}-p_{j'}$
might be critical for whether a ray is completely covered or not.
Therefore, it is not clear which job we should select, $j$ or $j'$.
A similar situation can occur with $\omega(1)$ jobs, rather than
only $j$ and $j'$.

To remedy this issue, our strategy is that for each $k'$ we find
\emph{two }jobs with processing time at least $\bar{p}_{j_{k'}}$
and for which selecting the first $s'(j_{k'})$ rectangles costs at
most $B_{\mathrm{large},g,k'}^{\mathrm{round}}(C)$ each. In this
way, we pay at most $2\cdot B_{\mathrm{large},g,k'}^{\mathrm{round}}(C)$
for the rectangles of these two jobs and we will ensure that together
they cover as much as the rectangles of $j_{k'}$ in $\R^{*}$. More
precisely, observe that for any two jobs $j^{(1)},j^{(2)}$ with $\bar{p}_{j^{(1)}}=\bar{p}_{j^{(2)}}=\bar{p}_{j_{k'}}$
it holds that $p_{j^{(1)}}+p_{j^{(2)}}\ge\frac{2}{1+\epsilon}p_{j_{k'}}$.
In fact, $\frac{2}{1+\epsilon}p_{j_{k'}}$ is by a constant factor
larger than $p_{j_{k'}}$ (so we cover substantially more) which will
be crucial later in order to argue that we can \lr{omit} some of
the large rectangles when we define the sets $\R'_{Q}$, i.e., an \lr{omitted}
rectangle $R\in\R(Q)$ is included in $\R'$ but not in $\R'_{Q}$.

Unfortunately, it might be that we do not find two such jobs $j^{(1)},j^{(2)}$
for a job $j_{k'}\in\{j_{1},\dotsc,j_{k}\}$. For example, this happens
if $j_{k'}$ is the job with largest processing time for which the
first $s'(j_{k'})$ rectangles cost at most $B_{\mathrm{large},g,k'}^{\mathrm{round}}(C)$
and for any other job $j$ it holds that $\bar{p}_{j}<\bar{p}_{j_{k'}}$.
Our strategy is to guess such a job $j_{k'}$ directly. To this end,
for each $k'\in\{1,\dotsc,k\}$ we consider the $k$ jobs $j$ with maximum
processing time in group $g$ with $\R(j,C)\ne\emptyset$ such that
buying the first $s'(j_{k'})$ rectangles of $j$ costs at most $B_{\mathrm{large},g,k'}^{\mathrm{round}}(C)$.
Let $J_{\mathrm{high},g,k'}(C)$ denote the corresponding jobs for
$k'$, and let $J_{\mathrm{high},g}(C):=\bigcup_{k'\in\{1,\dotsc,k\}}J_{\mathrm{high},g,k'}(C)$.
Note that $\left|J_{\mathrm{high},g}(C)\right|\le k^{2}\le1/\delta^{2}$.
If a job $j_{k'}$ is contained in $J_{\mathrm{high},g}(C)$ then
we select the first $s'(j_{k'})$ of its rectangles, i.e., add them
to $\R'_{\l}$. We call $j$ \emph{easy}. Later, we can guess the
easy jobs in time $2^{1/\delta^{2}}$ since they are all contained
in $J_{\mathrm{high},g}(C)$. If a job $j\in\{j_{1},\dotsc,j_{k}\}$
is not easy then we call $j$ \emph{hard. }Denote by $\Je(C)$ and
$\Jh(C)$ the easy and hard jobs in $\{j_{1},\dotsc,j_{k}\}$, respectively.

\paragraph{Hard jobs.}

We describe now which rectangles we select in order to cover as much
as the rectangles of the hard jobs in $\R^{*}\cap\Rl\cap\R(C)$ .
To this end, we partition the jobs in $\{j_{1},\dotsc,j_{k}\}\cap\Jh(C)$
according to the values $\bar{p}_{j_{1}},\dotsc,\bar{p}_{j_{k}}$, i.e.,
in each set of the partition each job $j$ has the same value $\bar{p}_{j}$.
Let $J'=\{j'_{1},\dotsc,j'_{|J'|}\}\subseteq\{j_{1},\dotsc,j_{k}\}$ be
a set of this partition. Assume that these jobs are ordered non-increasingly
according to $\prec$, i.e., $j'_{k'+1}\prec j'_{k'}$ for each $k'$.
In particular, note that the rectangles of $j'_{k'}$ are further
down than the rectangles of $j'_{k'+1}$ for each $k'$. We consider
the jobs in $J'$ in this order. Consider a job $j_{k'}\in J'$. We
define two jobs $j^{(1)},j^{(2)}$ such that $j^{(1)}$ and $j^{(2)}$
are the two maximal jobs $j$ according to $\prec$ (i.e., $j\prec j^{(1)}$
and $j\prec j^{(2)}$ for any candidate job $j$ with $j\neq j^{(1)}$ and $j\ne j^{(2)}$)
with the properties that 
\begin{itemize}
\item $\bar{p}_{j}=\bar{p}_{j_{k'}}$, 
\item $j$ is in group $g$ and $\R(j,C)\neq\emptyset$ , 
\item buying the first $s'(j_{k'})$ rectangles of $j$ costs at most $B_{\mathrm{large},g,k'}^{\mathrm{round}}(C)$, 
\item $j$ is smaller according to $\prec$ than the last job $\hat{j}\notin J_{\mathrm{high},g}(C)$
with $\bar{p}_{j}=\bar{p}_{\hat{j}}$ from which we have selected
rectangles before (in the first iteration this condition does not
apply; in particular $\hat{j}$ is not defined yet).
\end{itemize}
If we find two such jobs $j^{(1)},j^{(2)}$ then one can show that
together their respective first $s'(j_{k'})$ rectangles cover as
much as the first $s'(j_{k'})$ rectangles of $j_{k'}$. More formally,
for job $j$ and each $\ell$ let $\R_{\ell}(j,C)\subseteq\R(j,C)$
denote the first $\ell$ rectangles in $\R(j,C)$; then for each interval
$I$ one can show that 
\[
d\left(\left(\R_{s'(j_{k'})}(j^{(1)},C)\cup\R_{s'(j_{k'})}(j^{(2)},C)\right)\cap\R(I)\right)\ge d\left(\R_{s'(j_{k'})}(j_{k'},C)\cap\R(I)\right).
\]
Intuitively, we would like to select the respective first $s'(j_{k'})$
rectangles of $j^{(1)}$ and $j^{(2)}$ (i.e., add them to $\R'_{\l}$)
and continue with the next job in $J'$. However, it might be that
$j=j^{(1)}$ and $j^{(2)}\in J'$. Then we cannot use the rectangles
from $j^{(2)}$ to argue later that we cover strictly more than $\R^{*}\cap\Rl\cap\R(C)$
(which will be crucial later), because some rectangles from $j^{(2)}$
are already included in $\R^{*}$.  Instead, if $j_{k'}=j^{(1)}$
then we do not select rectangles from $j^{(2)}$ (to avoid the case
that $j^{(2)}\in J'$) but instead select rectangles from some job
$\tilde{j}\in J_{\mathrm{high},g}(C)\setminus\Je(C)$ from which we
have not yet selected any rectangle. In particular, then $\tilde{j}\notin J'$
and the mentioned problem does not occur.

Formally, we distinguish the two cases 
\begin{itemize}
\item $j_{k'}\ne j^{(1)}$: we select the first $s'(j_{k'})$ rectangles
from $j^{(1)}$ and $j^{(2)}$. Then potentially $j_{k'}=j^{(2)}$
but we will ensure that $j^{(1)}\ne j\ne j^{(2)}$ for each job $j'\in\{j_{k'+1},\dotsc,j_{k}\}$, 
\item $j_{k'}=j^{(1)}$: we select the first $s'(j_{k'})$ rectangles from
$j^{(1)}$ and additionally the first $s'(j_{k'})$ rectangles from
the job in $\tilde{j}\in J_{\mathrm{high},g}\setminus\Je(C)$ with
largest processing time among all jobs in $J_{\mathrm{high},g}\setminus\Je(C)$
for which buying the first $s'(j_{k'})$ rectangles costs at most
$B_{\mathrm{large},g,k'}^{\mathrm{round}}(C)$ and from which we have
not selected any rectangle so far. Using that $|J_{\mathrm{high},g,k'}|=k$
(if $|J_{\mathrm{high},g,k'}|<k$ then $j_{k'}$ would not be hard),
one can show that we always find such a job $\tilde{j}$. 
\end{itemize}
We will show that $j^{(1)}$ is always defined since $j_{k'}$ itself
will always be a candidate. The job $j^{(2)}$ might not be defined
though; however, then the second case applies and thus our procedure
is well-defined.

We repeat the procedure above for each set of the partition according
to the values $\bar{p}_{j_{1}},\dotsc,\bar{p}_{j_{k}}$ which completes
the our treatment of group $g$. Let $\mathcal{R}'_{\l}(C)\subseteq\Rl(C)$
denote the set of all rectangles selected by this procedure for the
cell $C$. We do this procedure for each cell $C\in\C$ and finally
define $\mathcal{R}'_{\l}:=\bigcup_{C\in\C}\mathcal{R}'_{\l}(C)$.
\begin{lem}
It holds that $c(\mathcal{R}'_{\l})\le2c(\R^{*}\cap\Rl)+3\epsilon\cdot c(\R^{*})$. 
\end{lem}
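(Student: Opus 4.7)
The plan is to bound the cost of $\mathcal{R}'_{\l}$ by splitting it into contributions per cell $C$ and per group $g$, and showing that for each such pair the cost incurred on group $g$ in cell $C$ is at most $2\sum_{k'=1}^{k(C,g)} B_{\mathrm{large},g,k'}^{\mathrm{round}}(C)$. Fix $C$ and $g$, and let $j_1,\dots,j_k$ be the jobs of $\{j_1,\dotsc,j_k\}$ corresponding to this pair. For each easy job $j_{k'}\in\Je(C)$, our procedure adds exactly one block of $s'(j_{k'})$ rectangles, and this block was chosen among those costing at most $B_{\mathrm{large},g,k'}^{\mathrm{round}}(C)$ by membership in $J_{\mathrm{high},g,k'}(C)$. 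For each hard job $j_{k'}\in\Jh(C)$, the procedure adds two blocks of $s'(j_{k'})$ rectangles (coming either from $j^{(1)},j^{(2)}$ or from $j^{(1)},\tilde j$); each block was selected precisely under the condition that buying the first $s'(j_{k'})$ rectangles of that job costs at most $B_{\mathrm{large},g,k'}^{\mathrm{round}}(C)$. Thus the total cost contributed by group $g$ at cell $C$ is bounded by $2\sum_{k'=1}^{k(C,g)} B_{\mathrm{large},g,k'}^{\mathrm{round}}(C)$ as claimed.

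Next, I translate this bound in terms of $\R^*$. By definition $B_{\mathrm{large},g,k'}^{\mathrm{round}}(C)$ is the smallest integral multiple of $\epsilon\delta\cdot B_{\mathrm{large}}^{\mathrm{round}}(C)$ that is at least $c(\R^{*}\cap\R(C,j_{k'}))$, so
\[
B_{\mathrm{large},g,k'}^{\mathrm{round}}(C) \le c(\R^{*}\cap\R(C,j_{k'})) + \epsilon\delta\cdot B_{\mathrm{large}}^{\mathrm{round}}(C).
\]
Summing over the at most $1/\delta$ jobs $j_{k'}$ in group $g$ and then over the $O_{\epsilon}(1)$ groups $g$ gives
\[
\sum_{g}\sum_{k'=1}^{k(C,g)} B_{\mathrm{large},g,k'}^{\mathrm{round}}(C) \le c(\R^{*}\cap\Rl\cap\R(C)) + O_{\epsilon}(\epsilon)\cdot B_{\mathrm{large}}^{\mathrm{round}}(C).
\]
Finally, summing over all cells $C\in\C$ and invoking the second property of Lemma~\ref{lem:guessA} (which gives $\sum_{C} B_{\mathrm{large}}^{\mathrm{round}}(C) \le (1+\epsilon)\, c(\R^{*})$) yields
\[
c(\mathcal{R}'_{\l}) \le 2\cdot c(\R^{*}\cap\Rl) + O_{\epsilon}(\epsilon)\cdot c(\R^{*}).
\]
Rescaling $\epsilon$ by the absolute constant hidden in $O_\epsilon(1)$ at the start of the construction (so that the number of groups times the excess is bounded by $3\epsilon$), we obtain the desired bound $2\cdot c(\R^{*}\cap\Rl) + 3\epsilon\cdot c(\R^{*})$.

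The main obstacle is justifying the per-job bound in the hard case, namely that the procedure always finds the second job whose contribution we pay for. For $j^{(1)}$ this is immediate since $j_{k'}$ is itself a valid candidate. For $j^{(2)}$, if no such job exists we fall back to a job $\tilde j\in J_{\mathrm{high},g}\setminus\Je(C)$ from which we have not yet selected rectangles; existence of such $\tilde j$ uses $|J_{\mathrm{high},g,k'}(C)|=k$ (otherwise $j_{k'}$ would be easy) together with the fact that through the iterations over $j_{k'}\in J'$ we extract at most one previously-unused job per iteration. Once these existence points are verified, the accounting above produces the stated cost bound in a straightforward manner.
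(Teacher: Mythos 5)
Your accounting follows the paper's route (bound the cost per cell and group by $2\sum_{k'}B_{\mathrm{large},g,k'}^{\mathrm{round}}(C)$, compare each such budget to $c(\R^{*}\cap\R(C,j_{k'}))$ plus a rounding error of $\epsilon\delta\cdot B_{\l}^{\r}(C)$, then sum over cells), but the step where you aggregate the rounding errors is flawed. You sum the error over the at most $1/\delta$ jobs \emph{per group} and then multiply by the number of groups, arriving at an excess of $O_{\epsilon}(\epsilon)\cdot B_{\l}^{\r}(C)$ per cell, and you then claim this can be absorbed by ``rescaling $\epsilon$ by the absolute constant hidden in $O_{\epsilon}(1)$.'' That constant is not absolute: the number of groups depends on $\epsilon$ through $K=(2/\epsilon)^{1/\epsilon}$ (it is roughly $K^{2}$ times the number of admissible ratios $\rho/\rho'$), and the product (number of groups)$\cdot\epsilon$ diverges as $\epsilon\to 0$. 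So no rescaling of $\epsilon$ turns your bound into $2c(\R^{*}\cap\Rl)+3\epsilon\,c(\R^{*})$; as written, the additive term is an unbounded function of $\epsilon$ times $c(\R^{*})$.

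The fix is the observation the paper uses: within a single cell $C$, there are at most $1/\delta$ jobs \emph{in total, across all groups}, that are large for $C$ and have $\R(j,C)\cap\R^{*}\ne\emptyset$ (each such job forces $\R^{*}$ to pay more than $\delta\cdot B^{\mathrm{round}}(C)$ for its first rectangle, while $\R^{*}$ spends at most $B^{\opt}(C)\le B^{\mathrm{round}}(C)$ in $C$). Hence the double sum $\sum_{g}\sum_{k'}$ ranges over at most $1/\delta$ terms altogether, and the total rounding error per cell is at most $2\epsilon\cdot B_{\l}^{\r}(C)$, independent of the number of groups. Summing over cells and using $B_{\l}^{\r}(C)\le(1+\epsilon)B^{\mathrm{round}}(C)$ together with $\sum_{C}B^{\mathrm{round}}(C)\le(1+\epsilon)\sum_{C}B^{\opt}(C)\le(1+\epsilon)c(\R^{*})$ then gives the stated $3\epsilon\,c(\R^{*})$. (On this last point, note also that Lemma~\ref{lem:guessA} speaks about $B^{\mathrm{round}}(C)$, not $B_{\l}^{\r}(C)$, so the intermediate inequality $B_{\l}^{\r}(C)\le(1+\epsilon)B^{\mathrm{round}}(C)$, which follows from $B_{\l}^{\opt}(C)\le B^{\opt}(C)\le B^{\mathrm{round}}(C)$ and the definition of $B_{\l}^{\r}(C)$ as a multiple of $\epsilon B^{\mathrm{round}}(C)$, is needed and should be stated.) The rest of your argument, including the per-job cost bound of $2B_{\mathrm{large},g,k'}^{\mathrm{round}}(C)$ for hard jobs and the existence of the second selected job, matches the paper and is fine.
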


\begin{proof}
Consider a cell $C$ and a group $g$. Let $J(g)$ denote the set
of all jobs $j$ such that $\R(C,j)$ is in group $g$. Let $J^{*}(g):=\{j_{1},\dotsc,j_{k}\}\subseteq J(g)$
be the jobs from $J(g)$ for which $\R(j,C)\cap\R^{*}\ne\emptyset$.
For each job $j_{k'}\in\{j_{1},\dotsc,j_{k}\}$ we bought the first $s'(j_{k'})$
rectangles of two jobs for which buying the first $s'(j_{k'})$ rectangles
costs at most $B_{\mathrm{large},g,k'}^{\mathrm{round}}(C)$. This
yields a total cost of at most $2\sum_{k'=1}^{k}B_{\mathrm{large},g,k'}^{\mathrm{round}}(C)$.
For each $k'\in\{1,\dotsc,k\}$ we have that $B_{\mathrm{large},g,k'}^{\mathrm{round}}(C)\le c\left(\R^{*}\cap\R(C,j_{k'})\right)+\epsilon\delta\cdot B_{\mathrm{large}}^{\mathrm{round}}(C)$.
Across all groups, there can be at most $1/\delta$ jobs $j$ that
are large for $C$ and which satisfy $\R(j,C)\cap\R^{*}\ne\emptyset$.
Thus, if we define $B_{\mathrm{large},g,j_{k'}}^{\mathrm{round}}(C)$
to be the value $B_{\mathrm{large},g,k'}^{\mathrm{round}}(C)$ that
corresponds to the job $j_{k'}$, we obtain that

\begin{eqnarray*}
c(\mathcal{R}'_{\l}\cap\R(C)) & \le & \sum_{g}\sum_{j_{k'}\in J^{*}(g)}2\cdot B_{\mathrm{large},g,j_{k'}}^{\mathrm{round}}(C)\\
 & \le & \sum_{g}\sum_{j_{k'}\in J^{*}(g)}2\left(c\left(\R^{*}\cap\R(C,j_{k'})\right)+\epsilon\delta\cdot B_{\mathrm{large}}^{\mathrm{round}}(C)\right)\\
 & \le & 2c\left(\R^{*}\cap\Rl\cap\R(C)\right)+2\epsilon\cdot B_{\mathrm{large}}^{\mathrm{round}}(C).
\end{eqnarray*}
Therefore, 
\begin{eqnarray*}
c(\mathcal{R}'_{\l}) & = & \sum_{C}c(\mathcal{R}'_{\l}\cap\R(C))\\
 & \le & \sum_{C}\left(2c\left(\R^{*}\cap\Rl\cap\R(C)\right)+2\epsilon\cdot B_{\mathrm{large}}^{\mathrm{round}}(C)\right)\\
 & \le & 2c(\R^{*}\cap\Rl)+2\epsilon\sum_{C}B_{\mathrm{large}}^{\mathrm{round}}(C)\\
 & \le & 2c(\R^{*}\cap\Rl)+2(1+\epsilon)\epsilon\sum_{C}B^{\mathrm{round}}(C)\\
 & \le & 2c(\R^{*}\cap\Rl)+2(1+\epsilon)^{2}\epsilon c(\R^{*}).\\
 & \le & 2c(\R^{*}\cap\Rl)+3\epsilon c(\R^{*}).
\end{eqnarray*}
\end{proof}

\subsubsection{Definition of the sets $\protect\R'_{\protect\l,Q}$ }

We define now the sets $\left\{ \R'_{\l,Q}\right\} _{Q\in\Q}$ with
$\R'_{\l,Q}\subseteq\R'_{\l}$ for each $Q\in\Q$. Let $Q\in\Q$.
Let $v_{\tilde{C}}$ be the bottom-most vertex of $Q$ and let $v_{\tilde{C}'}$
be its parent vertex; hence, $\tilde{C}$ and $\tilde{C}'$ are their
associated cells. The set $\R'_{\l,Q}$ will contain all rectangles
in $\R(Q)\cap\Rl$ that we will use in order to satisfy the demand
of intervals $I=[s,t)$ such that $t\in\tilde{C}$. One might think
that those are the rectangles in $\R'_{\l}\cap\R(Q)$. However, $\R'_{\l,Q}$
will not necessarily contain \emph{all} rectangles in $\R'_{\l}\cap\R(Q)$
but potentially only a \emph{subset}. We will ensure that for this
subset there will be only few options, which will ensure that $|\chi_{Q}|\le(nP)^{O(1)}$.

Consider a cell $C\in Q$ and a group $g$. First, for each job $j\in\Je(C)$
we add to $\R'_{\l,Q}$ all rectangles in $\R'_{\l}\cap\R(j,C)$.
Intuitively, regarding the hard jobs, for each hard jobs $j$ with
$\R(j,C)\cap\R^{*}\ne\emptyset$ we selected rectangles from \emph{two
}jobs $j^{(1)},j^{(2)}$. Therefore, for each interval $I$ we selected
rectangles in $\R(I)$ with larger total capacity than $\R^{*}$.
Therefore, when we define $\R'_{\l,Q}$ we can omit some of the rectangles
in $\R_{\l}'\cap\R(Q)$. We will ensure that for the remaining rectangles
there are only few options. 

For each cell $C\in Q$ and each job $j$ with $\R(j,C)$ we add the
rectangles in $\R(j,C)\cap\R'_{\l}$ to $\R'_{\l,Q}$ only if the
pair $(C,\bar{p}_{j})$ is \emph{relevant}. Formally, for each pair
$(C,\bar{p})$ where $C\in Q$ and $\bar{p}$ is a power of $1+\epsilon$,
we say that the pair $(C,\bar{p})$ is \emph{irrelevant }if 
\begin{enumerate}
\item there is no hard job $j$ with $\bar{p}_{j}=\bar{p}$ for which $\R(j,C)\cap\R^{*}$
  contains a rectangle $R(j, S')$ with $\tilde C\subseteq S'$ or 
\item if there is a hard job $j'\in\Jh(C')$ for some cell $C'\in Q\setminus\{\tilde{C},\tilde{C}'\}$
such that 
\begin{itemize}
\item $v_{C'}$ is a descendant of $v_{C}$, 
\item $\bar{p}\le\bar{p}_{j'}\cdot\delta\cdot\epsilon{}^{\dist(v_{C},v_{C'})}$,
and 
\item $\R^{*}$ contains a rectangle $R(j',S')$ such that $\tilde{C}\subseteq S'$. 
\end{itemize}
\end{enumerate}
Otherwise, we say that $(C,\bar{p})$ is \emph{relevant}.

Note that if $(C,\bar{p})$ is irrelevant because of condition 2.,
then for the mentioned (hard) job $j'$ we considered\emph{ two }jobs
($j^{(1)}$ and additionally $j^{(2)}$ or $\tilde{j}$) whose total
processing time is at least $2p_{j}/(1+\epsilon)$, and we added to
$\R'_{\l}$ the rectangle $R(j^{(1)},S')$ and additionally $R(j^{(2)},S')$
or $R(\tilde{j},S')$. Hence, these rectangles together cover more
than $R(j',S')$ (essentially at least twice as much) and this additional
coverage compensates for all rectangles in $\R^{*}$ corresponding
to the pair $(C,\bar{p})$ (which is irrelevant due to $j'$). Note
that for each irrelevant pair $(C,\bar{p})$ there can be at most
$1/\delta$ corresponding sets $\R(j,C)$ with $\R(j,C)\cap\R'_{\l}\cap\R^{*}\ne\emptyset$.
For each relevant pair $(C,\bar{p})$ we add to $\R'_{\l,Q}$ all
rectangles in $\R'_{\l}$ that belong to a set $\R(j,C)$ with $\R(j,C)\cap\R'_{\l}\ne\emptyset$
and $\bar{p}=\bar{p}_{j}$.

By construction, it follows that $\R'_{\l,Q}\subseteq\R'_{\l}$. Also,
with the above intuition, we can prove that $\R'_{\l,Q}$ covers as
much from each interval $I\in\I$ with $\R(I)\subseteq\R(Q)$ as the
large rectangles in $\R^{*}$. 
\begin{lem}
For each $I\in\I$ with $\R(I)\subseteq\R(Q)$ we have $p(\R'_{\l,Q}\cap\R(I))\ge p(\R^{*}\cap\R_{\l}\cap\R(I))$. 
\end{lem}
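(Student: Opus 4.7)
The plan is to fix a ray $L(I)$ (so $t$, the bottom endpoint of $L(I)$, lies in $\tilde{C}$) and to compare $\R'_{\l,Q}\cap \R(I)$ against $\R^*\cap\Rl\cap\R(I)$ cell by cell along $Q$, further subdividing each cell's contribution by group $g$ and processing-time class $\bar p$. The workhorse identity is already built into the construction of $\R'_{\l}$: for each hard job $j_{k'}$, the two jobs $j^{(1)},j^{(2)}$ (or $j^{(1)},\tilde j$) from which the first $s'(j_{k'})$ rectangles are bought satisfy $p_{j^{(1)}}+p_{j^{(2)}}\ge 2\bar p/(1+\epsilon)\ge p_{j_{k'}}$, and since $j^{(1)},j^{(2)}$ lie strictly below $j_{k'}$ in the $\prec$-order, their rectangles intersect any downward ray at least as much as those of $j_{k'}$; hence their first $s'(j_{k'})$ rectangles cover $L(I)$ at least as much as the first $s'(j_{k'})$ rectangles of $j_{k'}$.

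I first dispose of easy jobs: for $j\in\Je(C)$, all rectangles in $\R'_{\l}\cap\R(j,C)$ are unconditionally kept in $\R'_{\l,Q}$, and the procedure selects exactly the first $s'(j)$ rectangles of $j$, so the coverage by $\R^*\cap\R(j,C)$ is matched exactly. For a hard job $j_{k'}\in\Jh(C)$ with $(C,\bar p_{j_{k'}})$ \emph{relevant}, the whole class is added to $\R'_{\l,Q}$, in particular the partners $j^{(1)},j^{(2)}$ (or $j^{(1)},\tilde j$), so the workhorse identity gives the needed coverage. Two subtleties must be checked here: that distinct hard jobs in the same class do not fight over the same pair of partners (ensured by processing $j_1'\succ j_2'\succ\dots$ in the hard-job sweep and by switching to $\tilde j$ whenever $j_{k'}=j^{(1)}$), and that the fallback $\tilde j$ always exists (because $|J_{\mathrm{high},g,k'}|=k$, so uneasy options always remain).

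The main obstacle is the \emph{irrelevant} case. If $(C,\bar p)$ is irrelevant by condition~1, I plan to argue that no hard job $j$ with $\bar p_j=\bar p$ actually contributes to $\R^*\cap\R(C)\cap\R(I)$: any such rectangle $R(j,S)$ would satisfy $t\in S\subseteq C$, and the hierarchical structure (Lemma~\ref{lem:define-segments}) forces $\tilde C\subseteq S$ whenever $C\notin\{\tilde C,\tilde C'\}$, contradicting condition~1; the boundary cases $C\in\{\tilde C,\tilde C'\}$ require a small direct check (there condition~1 is vacuous for size reasons since segments in $\R(C)$ are strictly smaller than $\tilde C$). If $(C,\bar p)$ is irrelevant by condition~2 via a witness job $j'\in\Jh(C')$ with $\bar p\le\bar p_{j'}\cdot\delta\cdot\epsilon^{\dist(v_C,v_{C'})}$, I charge the missing coverage at $C$ to the excess coverage produced at $C'$: the two-partner selection at $C'$ yields processing volume at least $2p_{j'}/(1+\epsilon)$ against $\R^*$'s $p_{j'}$, an excess of $p_{j'}(1-O(\epsilon))$, whereas the missing volume at $C$ is bounded by $(1/\delta)\cdot(1+\epsilon)\bar p\le(1+\epsilon)\bar p_{j'}\epsilon^{\dist(v_C,v_{C'})}$ per group. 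Summing over all ancestor cells $C$ (geometric in the distance), all $O_\epsilon(1)$ groups, and all classes $\bar p$ keeps the total missing volume charged to one witness $j'$ below $O_\epsilon(\epsilon)\,p_{j'}$, which is strictly dominated by the excess at $C'$ for small $\epsilon$.

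The one bookkeeping hazard is that the excess at a single witness cell $(C',g,\bar p_{j'})$ must not be charged simultaneously to two unrelated irrelevant pairs; I handle this by a charging scheme indexed by triples $(C',g,\bar p_{j'})$, observing that each irrelevant pair $(C,\bar p)$ singles out a unique such witness (the one making condition~2 fire) and that the geometric factor $\epsilon^{\dist(v_C,v_{C'})}$ ensures the load on any fixed witness is $O_\epsilon(\epsilon)\,p_{j'}$, independent of how many irrelevant ancestors it discharges.
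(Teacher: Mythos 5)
Your route is essentially the paper's: easy jobs are matched exactly; for each hard job $j$ the two partner jobs supply capacity at least $\tfrac{2}{1+\epsilon}p_{j}$ below $j$ in the $\prec$-order, and the surplus over $p_{j}$ pays for the rectangles that $\R'_{\l,Q}$ omits at ancestor cells whose pair $(C,\bar p)$ is irrelevant; your observation that condition-1-irrelevant pairs contribute nothing to $\R^{*}\cap\R(I)$ at cells above $\tilde C,\tilde C'$ (by laminarity of the grid, since a segment of $\R(C)$ containing $t$ must contain $\tilde C$) is correct and in fact makes explicit a point the paper glosses. The disjointness of the partner sets across hard jobs and the existence of the fallback $\tilde j$ are also handled as in the paper.

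The genuine gap is in the quantitative charging for condition-2-irrelevant pairs. The surplus available at a witness $j'$ is only a \emph{fixed constant fraction} of $p_{j'}$, namely $\tfrac{1-\epsilon}{1+\epsilon}p_{j'}$ (partners $j^{(1)},j^{(2)}$) or $\tfrac{1}{1+\epsilon}p_{j'}$ (partners $j^{(1)}=j'$ and $\tilde j$). You bound the omitted volume by applying the ``at most $1/\delta$ large jobs with $\R^{*}$-rectangles per cell'' count per group and per class $\bar p$, and then sum over all $O_{\epsilon}(1)$ groups and all classes. But the number of groups is an $\epsilon$-dependent constant of order at least $K^{2}$ with $K=(2/\epsilon)^{1/\epsilon}$, and summing over classes adds another $\Theta(1/\epsilon)$ factor; the resulting ``$O_{\epsilon}(\epsilon)\,p_{j'}$'' therefore hides a constant that diverges as $\epsilon\to0$ and is \emph{not} dominated by the fixed-fraction surplus, so the final comparison does not follow as written. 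Note moreover that relevance of a pair $(C,\bar p)$ is not a per-group notion, so one witness must pay for the omitted rectangles of all groups and classes simultaneously. The repair is the paper's accounting: for a witness $j'$, charge the entire set $\Rir(j')$ at once, using that each ancestor cell $C$ contains at most $1/\delta$ jobs \emph{in total} (over all groups and classes) that are large for $C$ and have rectangles in $\R^{*}$, each of processing time at most $(1+\epsilon)\bar p_{j'}\delta\,\epsilon^{\dist(v_{C},v_{C'})}$; summing the geometric series over ancestor cells bounds the charge by roughly $\tfrac{2\epsilon}{1-\epsilon}p_{j'}$ with an absolute constant, which the surplus dominates for small $\epsilon$.
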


\begin{proof}
First, we observe that for each cell $C\in Q$ and each group $g$,
the set $\R'_{\l,Q}$ contains all rectangles in $\R^{*}\cap\R(Q)$
that correspond to (easy) jobs in $\Je(C)$.

Consider a cell $C\in Q$, a group $g$ and a (hard) job $j\in\Jh(C)$
with $\R^{*}\cap\R(j,C)\ne\emptyset$. Recall that when we defined
$\R'_{\l}$, at some point we considered this cell $C$ and the group
$g$. We considered the corresponding hard jobs $\{j_{1},\dotsc,j_{k}\}$.
We had one iteration for each job $j_{k'}\in\{j_{1},\dotsc,j_{k}\}$
and in one of these iterations $j_{k'}=j$. Then we defined the jobs
$j^{(1)},j^{(2)},\tilde{j}$ and added the first $s'(j_{k'})$ rectangles
of either both $j^{(1)}$ and $j^{(2)}$, or of both $j^{(1)}$ and
$\tilde{j}$. In the former case we define $J(j):=\{j^{(1)},j^{(2)}\}$,
in the latter case we define $J(j):=\{j^{(1)},\tilde{j}\}$. Also,
in case that $C'\notin\{\tilde{C},\tilde{C}'\}$, there might be some
rectangles in $\R^{*}\cap\R_{\l}$ that are irrelevant due to $j$,
in which case we define $\Rir(j)$ to be all of these rectangles,
i.e., all rectangles $R\in\R(j',C')$ for some job $j'$ and a cell
$C'$ such that $v_{C}$ is a descendant of $v_{C'}$, $\bar{p}_{j'}\le\bar{p}_{j}\cdot\epsilon\delta\cdot\epsilon^{\dist(v_{C},v_{C'})}$,
and $\R^{*}$ contains a rectangle $R(j,S)$ such that $\tilde{C}\subseteq S$.

Let $I\in\I$ with $\R(I)\subseteq\R(Q)$. We want to show that 
\begin{equation}
p(\R'_{\l,Q}\cap\bigcup_{j'\in J(j)}\R(j',C)\cap\R(I))\ge p((\R(j,C)\cup\Rir(j))\cap\R_{\l}\cap\R(I)).\label{eq:compensation}
\end{equation}
 We distinguish two cases. First consider the case that $J(j):=\{j^{(1)},j^{(2)}\}$.
Recall that $\bar{p}_{j^{(1)}}=\bar{p}_{j^{(2)}}=\bar{p}_{j}$. Also,
$j^{(1)}$ and $j^{(2)}$ are chosen maximally according to $\prec$
(i.e., with largest release dates). Also, in this case $j^{(1)}\ne j$.
The following part of the definition of $\R'_{\l}$ is important now:
whenever $j_{k'}=j_{k'}^{(1)}$ (where we define $j_{k'}^{(1)}$ and
$j_{k'}^{(2)}$ to be the respective jobs $j^{(1)}$ and $j^{(2)}$in
the iteration $k'$) for some $k'$ then we selected rectangles from
$j_{k'}^{(1)}$ and rectangles from $\tilde{j}$. Also, $j_{k'}^{(1)}$
and $j_{k'}^{(2)}$ are smaller according to $\prec$ than the last
job $\hat{j}\notin J_{\mathrm{high},g}(C)$ with $\bar{p}_{j}=\bar{p}_{\hat{j}}$
from which we had selected rectangles before. Therefore, for the jobs
$j,j^{(1)}$, and $j^{(2)}$ we know that $j\prec j^{(1)}$ and $j\preceq j^{(2)}$
and hence the rectangles of $j^{(1)}$ are further down in our visualization
than the rectangles of $j$. The same is true for $j^{(2)}$, unless
$j=j^{(2)}$. Also, the rectangles of $j^{(1)}$ and $j^{(2)}$ are
further down in our visualization than any rectangle in $\Rir(j)$.
We have that

\begin{eqnarray*}
p_{j^{(1)}}+p_{j^{(2)}} & \ge & \frac{2}{1+\epsilon}p_{j}\\
 & \ge & p_{j}+\sum_{C\in Q:v_{C'}\,\mathrm{is\,descendent\,of}\,v_{C}}\bar{p}_{j}\cdot2\epsilon^{\dist(v_{C},v_{C'})}\\
 & \ge & p_{j}+\sum_{C\in Q:v_{C'}\,\mathrm{is\,descendent\,of}\,v_{C}}\sum_{j:\R(j,C)\cap\R_{\l}\cap\R^{*}\ne\emptyset}\bar{p}_{j}\cdot2\delta\cdot\epsilon^{\dist(v_{C},v_{C'})}\\
 & \ge & p_{j}+p\left(\Rir(j)\cap\R(I)\right)
\end{eqnarray*}
which implies that our selected rectangles from $j^{(1)}$ and $j^{(2)}$
satisfy as much demand from $I$ as the rectangles from $j'$ in $\R^{*}$
and the rectangles in $\Rir(j)\cap\R^{*}\cap\R(I)$. Thus, inequality~\ref{eq:compensation}
holds in this case.

Now assume that $J(j):=\{j^{(1)},\tilde{j}\}$. In this case $j^{(1)}=j$.
Also, all rectangles from $\tilde{j}$ are further down in our visualization
than any rectangle in $\Rir(j)$. Similarly as above, we calculate
that

\begin{eqnarray*}
p_{\tilde{j}} & \ge & \frac{1}{1+\epsilon}p_{j}\\
 & \ge & \sum_{C\in Q:v_{C'}\,\mathrm{is\,descendent\,of}\,v_{C}}\bar{p}_{j}\cdot2\epsilon^{\dist(v_{C},v_{C'})}\\
 & \ge & \sum_{C\in Q:v_{C'}\,\mathrm{is\,descendent\,of}\,v_{C}}\sum_{j:\R(j,C)\cap\R_{\l}\cap\R^{*}\ne\emptyset}\bar{p}_{j}\cdot\delta\cdot2\epsilon^{\dist(v_{C},v_{C'})}\\
 & \ge & p\left(\Rir(j)\cap\R(I)\right)
\end{eqnarray*}
and thus inequality~\ref{eq:compensation} holds also in this case.
We complete the proof by calculating

\begin{eqnarray*}
p(\R'_{\l,Q}\cap\R(I)) & \ge & \sum_{j,C:j\,\mathrm{is\,easy}}p(\R'_{\l,Q}\cap\R(j,C)\cap\R(I))\\
 &  & +\sum_{j,C:j\,\mathrm{is\,hard\,or\,}\R(j,C)\cap\R^{*}=\emptyset}p(\R'_{\l,Q}\cap\R(j,C)\cap\R(I))\\
 & \ge & \sum_{j,C:j\,\mathrm{is\,easy}}p(\R^{*}\cap\R(j,C)\cap\R(I))\\
 &  & +\sum_{j,C:j\,\mathrm{is\,hard}}p(\R'_{\l,Q}\cap\bigcup_{j'\in J(j)}\R(j',C)\cap\R(I))\\
 & \ge & \sum_{j,C:j\,\mathrm{is\,easy}}p(\R^{*}\cap\R(j,C)\cap\R(I))\\
 &  & +\sum_{j,C:j\,\mathrm{is\,hard\,and}\,(C,\bar{p}_{j})\,\mathrm{is\,relevant}}p(\R(j,C)\cup\Rir(j)\cap\R_{\l}\cap\R(I))\\
 & \ge & p(\R^{*}\cap\R_{\l}\cap\R(I)).
\end{eqnarray*}
\end{proof}
In order to satisfy the properties of Lemma~\ref{lem:large-jobs},
we need to prove the third property of Definition~\ref{def:framework-split}. 
\begin{lem}
For any two paths $Q,Q'\in\Q$ with $Q\supseteq Q'$ we have that
$\R'_{\l,Q}\cap\R(Q')\subseteq\R'_{\l,Q'}$. 
\end{lem}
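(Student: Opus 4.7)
The plan is to verify the nesting property pointwise. Let $R \in \R'_{\l,Q} \cap \R(Q')$ and write $R \in \R(j,C)$ with $C \in Q'$ (hence $C \in Q$). By the construction of $\R'_{\l,Q}$, one of two inclusion rules triggered: either $j$ is easy for $C$ (i.e., $j \in \Je(C)$), or the pair $(C, \bar p_j)$ is relevant for $Q$. The easy-job case is immediate: the partition into $\Je(C)$ and $\Jh(C)$ is built from $\{j_1,\dotsc,j_k\}$ and $J_{\mathrm{high},g}(C)$ via the budget $B^{\mathrm{round}}_{\mathrm{large}}(C)$, all intrinsic to $C$ and its group with no path-dependence. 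Hence $j$ is also easy for $C$ when constructing $\R'_{\l,Q'}$, so $R \in \R'_{\l,Q'}$.

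For the relevant-job case I will prove the contrapositive: if $(C,\bar p_j)$ is irrelevant for $Q'$ then it is irrelevant for $Q$. Since $Q \supseteq Q'$, writing $\tilde C_Q, \tilde C_{Q'}$ for the respective bottom cells, we have $\tilde C_Q \subseteq \tilde C_{Q'}$. Consider first the case that irrelevance for $Q'$ is witnessed by condition~2 via some $j' \in \Jh(C'')$ with $C'' \in Q' \setminus \{\tilde C_{Q'}, \tilde C_{Q'}'\}$. Then $C''$ lies strictly above $\tilde C_{Q'}$ in the hierarchy, hence strictly above the bottom two cells of $Q$, so $C'' \in Q \setminus \{\tilde C_Q, \tilde C_Q'\}$. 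The descendant relation $v_{C''}$ descendant of $v_C$ and the quantitative bound $\bar p_j \le \bar p_{j'}\cdot\delta\cdot\epsilon^{\dist(v_C,v_{C''})}$ do not depend on the path, and the third-bullet containment $\tilde C_{Q'} \subseteq S'$ upgrades to $\tilde C_Q \subseteq S'$. So the same $(j',C'')$ certifies condition~2 for $Q$.

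It remains to handle irrelevance of $(C,\bar p_j)$ for $Q'$ via condition~1: no hard $\hat j \in \Jh(C)$ with $\bar p_{\hat j}=\bar p_j$ has a rectangle $R(\hat j, S)\in \R^{*}$ with $\tilde C_{Q'}\subseteq S$. If moreover no such rectangle satisfies $\tilde C_Q \subseteq S$, then condition~1 for $Q$ holds and we are done. Otherwise there is a hard $\hat j$ and a segment $\hat S$ with $\tilde C_Q \subseteq \hat S \subsetneq \tilde C_{Q'}$ and $R(\hat j, \hat S)\in\R^{*}$; by Lemma~\ref{lem:define-segments} the segment $\hat S$ coincides with a grid cell $\hat C$ of level $\ell(C)+2$ strictly between $\tilde C_{Q'}$ and $\tilde C_Q$ on the path $Q$. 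In this subcase I would argue that condition~2 for $Q$ is forced: the presence of the ``intermediate-length'' rectangle $R(\hat j, \hat C)\in \R^{*}$ together with the compensation mechanism of Section~\ref{subsec:large-rectangles} (each hard job at $C$ selected in the large-job loop is paired with a companion job whose rectangles reach deeper along $Q$) yields a hard job $j'$ at some $C' \in Q \setminus \{\tilde C_Q, \tilde C_Q'\}$ descending from $C$, with $\bar p_{j'}$ satisfying the second bullet of condition~2 and with an $\R^{*}$-rectangle $R(j',S')$ satisfying $\tilde C_Q \subseteq S'$.

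This last subcase is the main obstacle: it is the only place where the path-dependence of the irrelevance definition interacts nontrivially with the compensation scheme and requires carefully unpacking the hard-job construction together with the hierarchical segment alignment of Lemma~\ref{lem:define-segments}. All other cases reduce to set-theoretic containments and the intrinsic (path-independent) nature of the easy/hard partition.
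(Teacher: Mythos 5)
Your overall strategy is the same as the paper's: the paper's proof is exactly the monotonicity assertion that a pair $(C,\bar{p})$ which is irrelevant for $Q'$ is irrelevant for $Q$, hence relevance for $Q$ implies relevance for $Q'$; your treatment of the easy jobs and of a condition-2 witness (same witness cell $C''$, same job $j'$, and $\tilde{C}_{Q}\subseteq\tilde{C}_{Q'}\subseteq S'$) is correct and matches that. The problem is your final subcase, which you leave unproven and for which the sketched escape route would not work. Condition~2 for $Q$ needs a \emph{hard} job $j'\in\Jh(C')$ at a cell $C'\in Q\setminus\{\tilde{C}_{Q},\tilde{C}_{Q}'\}$ that is a strict descendant witness satisfying the strong inequality $\bar{p}\le\bar{p}_{j'}\cdot\delta\cdot\epsilon^{\dist(v_{C},v_{C'})}$. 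The object you actually have in that subcase is a rectangle $R(\hat{\jmath},\hat{S})\in\R^{*}$ of a hard job $\hat{\jmath}$ at $C$ itself with $\bar{p}_{\hat{\jmath}}=\bar{p}$; taking $C'=C$ fails the inequality (it would require $\bar{p}\le\delta\bar{p}$ with $\delta<1$), and the ``compensation mechanism'' of the large-job construction (companion jobs $j^{(1)},j^{(2)},\tilde{\jmath}$) is a capacity/coverage argument that produces no hard job with the much larger rounded processing time that condition~2 demands. So ``condition~2 for $Q$ is forced'' is not justified and is in general false as stated.

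What you are missing is a structural observation that localizes (and essentially dissolves) this subcase. If $\hat{S}\in\Seg(\hat{\jmath},C)$ satisfies $\tilde{C}_{Q}\subseteq\hat{S}\subsetneq\tilde{C}_{Q'}$, then $\hat{S}$ is a grid cell of level $\ell(C)+2>\ell(\tilde{C}_{Q'})$, which forces $C\in\{\tilde{C}_{Q'},\tilde{C}_{Q'}'\}$, i.e.\ $C$ is one of the two bottom-most cells of $Q'$. For every other $C\in Q'$ the segments in $\Seg(j,C)$ are grid cells of level $\ell(C)+2\le\ell(\tilde{C}_{Q'})$, so by nestedness of the grid and $\tilde{C}_{Q}\subseteq\tilde{C}_{Q'}$ the conditions ``$\tilde{C}_{Q}\subseteq S'$'' and ``$\tilde{C}_{Q'}\subseteq S'$'' are equivalent, and condition~1 is effectively path-independent there; your case analysis then closes without any appeal to the compensation scheme. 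For the two bottom-most cells of $Q'$ the paper's definition of condition~1 is itself degenerate (as literally written, containment of $\tilde{C}_{Q'}$ in a segment of level $\ell(C)+2$ is impossible, so those pairs are always irrelevant for $Q'$), and the construction/guessing treats the pairs $(\tilde{C},\bar{p}),(\tilde{C}',\bar{p})$ separately; any complete proof has to address exactly this boundary case explicitly rather than route it through the covering argument. As it stands, your proposal has a genuine gap at precisely the step you flagged, and the proposed fix is the wrong tool.
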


\begin{proof}
This follows from the definition of irrelevant pairs. If a pair $(C,\bar{p})$
with $C\in Q'$ is \emph{irrelevant }when we defined \emph{$\R'_{\l,Q'}$,
}then it is also irrelevant when we defined $\R'_{\l,Q}$. Therefore,
when a pair $(C,\bar{p})$ is relevant when we defined $\R'_{\l,Q}$,
it is also relevant when we defined $\R'_{\l,Q'}$, and hence $\R'_{\l,Q}\cap\R(Q')\subseteq\R'_{\l,Q'}$. 
\end{proof}

\subsubsection{Definition of the families $\left\{ \chi_{\protect\l,Q}\right\} _{Q\in\protect\Q}$ }

Let $Q\in\Q$; we want to define the set $\chi_{\l,Q}$ such that
$\R'_{\l,Q}\in\chi_{\l,Q}$. We will argue that for $\R'_{\l,Q}$
there are at most $(nP)^{O_{\epsilon}(1)}$ options and we define
$\chi_{\l,Q}$ to be the family of sets that contains each of these
possible options. From the definition of $\R'_{\l,Q}$ it follows
that $\R'_{\l,Q}$ is completely defined once we know for each cell
$C\in Q$ and for each group $g$ 
\begin{itemize}
\item the number $k(C,g)=k\le1/\delta$ of jobs of group $g$ for which
at least one rectangle is contained in $\R^{*}\cap\Rl\cap\R(C)$;
denote by $j_{1},\dotsc,j_{k}$ these jobs, 
\item the budget $B_{\mathrm{large},g,k'}^{\mathrm{round}}(C)$ for each
$k'\in\{1,\dotsc,k\}$, 
\item the value $s'(j)\le K=O_{\epsilon}(1)$ for each job $j\in\{j_{1},\dotsc,j_{k}\}$,
\item the set $\Je(C)$,
\item for each job $j\in\Je(C)$ the number of rectangles from $\R(j,C)$
that are contained in $\R'_{\l,Q}$, 
\item the value $\bar{p}_{j_{k'}}$ for each $k'\in\{1,\dotsc,k\}$ such that
$(C,\bar{p}_{j_{k'}})$ is relevant, 
\item the order of the jobs $j_{1},\dotsc,j_{k}$ according to $\prec$ ,
\item for each job $j_{k'}$ whether $j^{(1)}=j_{k'}$ for the job $j^{(1)}$
that is defined in the iteration of $j_{k'}$ in the construction
of $\R'_{\l}$.
\end{itemize}

Consider a cell $C\in Q$ and a group $g$. \aw{We first guess $B^{\mathrm{round}}(C)$ using Lemma~\ref{lem:guessA}.}
Then we can guess $k(C,g)$
in time $1/\delta$. Then, we can guess the budgets $B_{\mathrm{large},g,k'}^{\mathrm{round}}(C)$
in time $O_{\epsilon,\delta}(1)$ since there are only $1/\delta$
values to guess, we know $B^{\mathrm{round}}(C)$, and each of the
budgets $B_{\mathrm{large},g,k'}^{\mathrm{round}}(C)$ is an integral
multiple of $\epsilon\delta\cdot B_{\mathrm{large}}^{\mathrm{round}}(C)$.
Also, we can guess $\left\{ s'(j_{k'})\right\} _{k'}$ in time $K/\delta=O_{\delta,\epsilon}(1)$.
This yields the set $J_{\mathrm{high},g}(C)$. Since $\Je(C)\subseteq J_{\mathrm{high},g}(C)$
and $\left|J_{\mathrm{high},g}(C)\right|\le2k(C,g)\le2/\delta$, we
can guess $\Je(C)$ in time $2^{2/\delta}$. Also, we can guess in
time $O_{\epsilon,\delta}(1)$ the rectangles of each job $j\in\Je(C)$
that are contained in $\R'_{\l,Q}$. Also, in time $(1/\delta)!=O_{\delta}(1)$
we can guess the ordering of the jobs $j_{1},\dotsc,j_{k}$ according
to~$\prec$. For each job $j_{k'}\in\left\{ j_{1},\dotsc,j_{k}\right\} $
there are only two options for whether $j^{(1)}=j_{k'}$ in the iteration
corresponding to $j_{k'}$, and thus we can guess this for all these
$k$ jobs in time $2^{k}\le2^{1/\delta}=O_{\delta}(1)$. Since there
are $O_{\epsilon}(\log nP)$ cells $C\in Q$ and $O_{\epsilon}(1)$
groups $g$, this yields $O_{\epsilon,\delta}(1)^{O_{\epsilon,\delta}(\log(nP))}=(nP)^{O_{\epsilon,\delta}(1)}$
possible guesses overall.

It remains to argue that we can guess also the values $\bar{p}_{j_{k'}}$
in time $(nP)^{O_{\epsilon,\delta}(1)}$. The intuition is that there
are only $O_{\epsilon,\delta}(\log nP)$ relevant pairs $(C,\bar{p})$
and they admit a certain structure that allows us to guess them in
time $(nP)^{O_{\epsilon,\delta}(1)}$. 
\begin{lem}
In time $(nP)^{O_{\epsilon,\delta}(1)}$ we can guess all relevant
pairs $(C,\bar{p})$ with $C\in Q$. 
\end{lem}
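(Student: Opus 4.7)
The plan is to show that the relevant pairs are both sparse (totaling $O_{\epsilon,\delta}(\log nP)$) and confined to an aggregated candidate set of the same size, so that the true set of relevant pairs can be guessed as a subset thereof in polynomial time. First I would observe that at each $C\in Q$ at most $1/\delta$ jobs are large for $C$ with $\R(j,C)\cap\R^{*}\neq\emptyset$; each supplies at most one candidate $\bar{p}$-value satisfying NOT(1), so combined with $|Q|=O_{\epsilon}(\log nP)$ the total number of relevant pairs is $O_{\epsilon,\delta}(\log nP)$.

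Next I would introduce thresholds that localize the relevant $\bar{p}$-values per cell. For $C_{i}\in Q$ let $\pi_{i}$ be the largest $\bar{p}_{j}$ over hard jobs $j$ at $C_{i}$ satisfying condition~1, and set $M_{i}=\max_{i<i'\le\ell-2}\pi_{i'}\cdot\delta\cdot\epsilon^{i'-i}$, so relevance at $C_{i}$ for $i\le\ell-2$ reduces to NOT(1) combined with $\bar{p}>M_{i}$. The recursion $M_{i-1}=\epsilon\max(\pi_{i}\delta, M_{i})$ yields in particular $M_{i}\ge\epsilon M_{i+1}$, hence $\lfloor\log_{1+\epsilon}M_{i+1}\rfloor\le\lfloor\log_{1+\epsilon}M_{i}\rfloor+O_{\epsilon}(1)$ going to descendants, while the decrease of $\log_{1+\epsilon}M$ is unbounded. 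This is exactly the bounded-increase setting of Lemma~\ref{lem:guessA}: substituting $p_{i}=q_{i}+i\cdot c$ for a suitable $c=O_{\epsilon}(1)$ turns $\{\lfloor\log_{1+\epsilon}M_{i}\rfloor\}_{i}$ into a strictly decreasing integer sequence of length $O_{\epsilon}(\log nP)$ in an interval of width $O_{\epsilon,\delta}(\log nP)$, and this sequence can be guessed in time $(nP)^{O_{\epsilon,\delta}(1)}$.

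The core step will be to bound the aggregated count of candidate powers once $\{M_{i}\}$ is fixed. The same recursion gives $\pi_{i}\le M_{i-1}/(\delta\epsilon)$, so the relevant $\bar{p}$-values at $C_{i}$ lie in the window $(M_{i},\, M_{i-1}/(\delta\epsilon)]$. Counting powers of $1+\epsilon$ in each window and summing over $i\in\{1,\dotsc,\ell-2\}$ telescopes: the dominant contributions $\log_{1+\epsilon}M_{i-1}-\log_{1+\epsilon}M_{i}$ collapse to $\log_{1+\epsilon}(P/M_{\ell-2})=O_{\epsilon}(\log nP)$ (the $i=1$ boundary term is absorbed using $\pi_{1}\le P$), while the additive $O_{\epsilon,\delta}(1)$ slack per term contributes $O_{\epsilon,\delta}(\ell)=O_{\epsilon,\delta}(\log nP)$. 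This mirrors the density-pair telescoping of Lemma~\ref{lma:no-density-pairs}. The at-most $2/\delta$ candidate $\bar{p}$-values at the two bottom cells $C_{\ell-1},C_{\ell}$ (for which NOT(2) is vacuous) are appended trivially, leaving an aggregated candidate set of size $O_{\epsilon,\delta}(\log nP)$.

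Finally, the relevant pairs are an arbitrary subset of this aggregated candidate set, so there are at most $2^{O_{\epsilon,\delta}(\log nP)}=(nP)^{O_{\epsilon,\delta}(1)}$ possibilities; combined with the guessing of $\{M_{i}\}$ from the previous step, the overall running time is $(nP)^{O_{\epsilon,\delta}(1)}$. The main obstacle is the telescoping bound in the third paragraph: a direct cell-by-cell enumeration over powers of $1+\epsilon$ in $[1,P]$ would incur a $\Theta(\log nP\cdot\log P)$ candidate count and only yield a quasi-polynomial guessing scheme, so the recursive structure of $\{M_{i}\}$ inherited from condition NOT(2) is essential to collapse the aggregated count to $O_{\epsilon,\delta}(\log nP)$.
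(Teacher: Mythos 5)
Your proposal is correct in substance and reaches the required $(nP)^{O_{\epsilon,\delta}(1)}$ bound, but by a mechanism different from the paper's. The paper works directly with the relevant pairs: it assigns to each pair $(C,(1+\epsilon)^{\ell})$ the shifted exponent $\alpha_{(C,(1+\epsilon)^{\ell})}=\ell+\dist(v_{C^{*}},v_{C})\cdot\left\lfloor \log_{1+\epsilon}(\epsilon\delta)\right\rfloor$ (with $C^{*}$ the topmost cell of $Q$), uses irrelevance condition~2 to show that each group $\G_{\ell}$ contains at most one relevant pair and that the $\alpha$-values are strictly increasing along a canonical ordering of the relevant pairs, and then guesses the attained $\alpha$-values (a subset of $O_{\epsilon,\delta}(\log nP)$ integers) plus a unary bit-string assigning them to cells; the two bottom cells are handled by enumerating all $O_{\epsilon}(\log nP)$ powers of $1+\epsilon$ and guessing a subset. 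You instead first guess the per-cell thresholds $M_{i}$ induced by condition~2, exploiting the bounded one-sided change $M_{i+1}\le M_{i}/\epsilon$ via the monotone-sequence subset trick of Lemma~\ref{lem:guessA}, and then collapse the per-cell candidate windows $(M_{i},M_{i-1}/(\delta\epsilon)]$ by a telescoping count in the spirit of Lemma~\ref{lma:no-density-pairs}, finally guessing the relevant pairs as a subset of an $O_{\epsilon,\delta}(\log nP)$-size aggregated candidate set. Both arguments rest on the same structural fact---condition~2 forces $\log_{1+\epsilon}\bar{p}+\dist\cdot\log_{1+\epsilon}(\epsilon\delta)$ to increase from deep to shallow cells among relevant pairs---the paper encodes it directly on the pairs, while your two-stage scheme is more modular (it reuses two guessing primitives already in the paper) at the price of extra bookkeeping.

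Three points need small patches, none fatal. First, at the two bottom cells $\tilde{C},\tilde{C}'$ the algorithm does not know the at most $2/\delta$ values $\bar{p}_{j}$ in advance; you should add \emph{all} $O_{\epsilon}(\log nP)$ powers of $1+\epsilon$ for these two cells to the candidate set (as the paper does) and let the final subset guess pick them out. Second, the boundary term $\log_{1+\epsilon}(P/M_{\ell-2})$ is ill-defined when the maximum defining $M_{\ell-2}$ is empty; the telescoping should stop at the deepest cell with a positive threshold, where $M_{i}\ge\epsilon\delta$ and $\bar{p}\ge1$ cap that single window by $O_{\epsilon,\delta}(\log P)$, and one additional guess of the index where the thresholds vanish (the zero thresholds form a suffix) completes the scheme. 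Third, since $\left\lfloor \log_{1+\epsilon}M_{i}\right\rfloor$ has bounded \emph{increase} going downwards, the monotonizing substitution should be $p_{i}=q_{i}-i\cdot c$ (or apply the transformation to the reversed sequence), not $p_{i}=q_{i}+i\cdot c$.
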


\begin{proof}
As defined previously, let $v_{\tilde{C}}$ be the bottom-most vertex
of $Q$ and let $v_{\tilde{C}'}$ be its parent vertex. There are
$O_{\epsilon}(\log nP)$ pairs of the form $(\tilde{C},\bar{p})$
or $(\tilde{C}',\bar{p})$ and we can guess in time $2^{O_{\epsilon}(\log nP)}=(nP)^{O_{\epsilon}(1)}$
which of them are relevant.

Let us consider the cells $C\in Q$ with $\tilde{C}\ne C\ne\tilde{C}'$,
let $Q'\subseteq Q$ denote the set of all these cells. We group the
relevant pairs $(C,\bar{p})$ with $C\in Q'$ into groups $\G_{\ell}$
where for each $\ell\in\N$ we define 
\[
\G_{\ell}:=\{(C,\bar{p})|C\in Q'\wedge\bar{p}=(1+\epsilon)^{\ell+\dist(v_{\tilde{C}},v_{C})\cdot\left\lfloor \log_{1+\epsilon}\left(\epsilon\delta\right)\right\rfloor }\}.
\]
Now each set $\G_{\ell}$ can contain at most one relevant pair $(C,\bar{p})$:
assume by contradiction that $\G_{\ell}$ contains two relevant pairs
$(C,\bar{p}),(C',\bar{p}')$. Asssume w.l.o.g.~that $v_{C}$ is closer
to $v_{\tilde{C}}$ than $v_{C'}$. Then there is a hard job $j$
with $\bar{p}_{j}=\bar{p}$ such that $\R(j,C)\cap\R^{*}\ne\emptyset$.
In particular,

\[
\bar{p}=(1+\epsilon)^{\ell+\dist(v_{\tilde{C}},v_{C})\cdot\left\lfloor \log_{1+\epsilon}\left(\epsilon\delta\right)\right\rfloor }
\]
and

\[
\bar{p}'=(1+\epsilon)^{\ell+\dist(v_{\tilde{C}},v_{C'})\cdot\left\lfloor \log_{1+\epsilon}\left(\epsilon\delta\right)\right\rfloor }
\]
which implies that 
\[
\bar{p}'=\bar{p}_{j}\cdot(1+\epsilon)^{\dist(v_{C},v_{C'})\cdot\left\lfloor \log_{1+\epsilon}\left(\epsilon\delta\right)\right\rfloor }\le\bar{p}_{j}\cdot\left(\epsilon\delta\right)^{\dist(v_{C},v_{C'})}\le\bar{p}_{j}\cdot\delta\cdot\epsilon^{\dist(v_{C},v_{C'})}
\]
and hence $(C',\bar{p}')$ is irrelevant.

Also, observe that there are only $O_{\epsilon}(\log nP)$ values
$\ell$ such that $\G_{\ell}$ contains a relevant pair. It remains
to show that we can guess these relevant pairs efficiently. First,
we guess in time $2^{O_{\epsilon}(\log nP)}=(nP)^{O_{\epsilon}(1)}$
for which cells $C$ there exists a relevant pair $(C,\bar{p})$ for
some value $\bar{p}$.

Let $C^{*}$ denote the topmost cell in $Q$. We order the relevant
pairs $(C,\bar{p})$ non-increasingly according to $\dist(v_{C},v_{C^{*}})$,
breaking ties by ordering them increasingly by their values $\bar{p}$.
For each pair $(C,(1+\epsilon)^{\ell})$ we introduce the value $\alpha_{(C,(1+\epsilon)^{\ell})}:=\ell+\dist(v_{C^{*}},v_{C})\cdot\left\lfloor \log_{1+\epsilon}\left(\epsilon\delta\right)\right\rfloor $.
We claim that in our ordering of the relevant pairs $(C,(1+\epsilon)^{\ell})$
the values $\alpha_{(C,(1+\epsilon)^{\ell})}$ are strictly increasing.
Indeed, consider two pairs $(C,\bar{p}),(C',\bar{p}')$ that are adjacent
in this ordering such that $(C,\bar{p})$ appears directly before
$(C',\bar{p}')$. If $C=C'$ then $\bar{p}<\bar{p}'$ and hence $\alpha_{(C,\bar{p})}<\alpha_{(C',\bar{p}')}$.
Suppose now that $C\ne C'$. Assume by contradiction that $\alpha_{(C,\bar{p})}\ge\alpha_{(C',\bar{p}')}.$
Assume that $\bar{p}=(1+\epsilon)^{\ell}$ and $\bar{p}'=(1+\epsilon)^{\ell'}$
and let $j$ and $j'$ be the hard jobs corresponding to the pairs
$(C,\bar{p})$ and $(C',\bar{p}')$, respectively. Then 
\[
\alpha_{(C,\bar{p})}=\ell+\dist(v_{C^{*}},v_{C})\cdot\left\lfloor \log_{1+\epsilon}\left(\epsilon\delta\right)\right\rfloor \ge\ell'+\dist(v_{C^{*}},v_{C'})\cdot\left\lfloor \log_{1+\epsilon}\left(\epsilon\delta\right)\right\rfloor =\alpha_{(C',\bar{p}')}
\]
which implies that 
\[
\ell+\dist(v_{C},v_{C'})\cdot\left\lfloor \log_{1+\epsilon}\left(\epsilon\delta\right)\right\rfloor \ge\ell'
\]
and therefore 
\[
\bar{p}_{j}\cdot\delta\cdot\epsilon{}^{\dist(v_{C},v_{C'})}\ge\bar{p}_{j}\cdot\left(\epsilon\delta\right)^{\dist(v_{C},v_{C'})}\ge(1+\epsilon)^{\ell+\dist(v_{C},v_{C'})\cdot\left\lfloor \log_{1+\epsilon}\left(\epsilon\delta\right)\right\rfloor }\ge(1+\epsilon)^{\ell'}=\bar{p}_{j'}.
\]
This implies that the job $j$ makes the pair $(C',\bar{p}')$ irrelevant
which is a contradiction.

Since the $\alpha_{(C,\bar{p})}$ are increasing and can attain only
$O_{\epsilon,\delta}(\log(nP))$ different values, we can guess in
time $2^{O_{\epsilon,\delta}(\log nP)}=(nP)^{O_{\epsilon,\delta}(1)}$
which of these possible values are attained by some $\alpha_{(C,\bar{p})}$
(however, this does not tell us the corresponding pairs $(C,\bar{p})$
since a value $\alpha_{(C,\bar{p})}$ might belong to more than one
pair $(C,\bar{p})$). Then, since $|Q'|=O_{\epsilon,\delta}(\log(nP))$
and the values $\alpha_{(C,\bar{p})}$ are ordered according to their
cells, we guess in time $2^{O_{\epsilon,\delta}(\log nP)}=(nP)^{O_{\epsilon,\delta}(1)}$
which of these attained values corresponds to which cell $C$. One
way to do this is to guess a bit-string with $O_{\epsilon,\delta}(\log(nP))$
bits, which describes in unary the number of relevant pairs for each
cell $C\in Q'$, with the 0-bits being the separators between these
values for the different cells $C\in Q'$. Once we know each value
$\alpha_{(C,\bar{p})}$ and its corresponding cell $C$, we can deduce
the corresponding pair $(C,\bar{p})$ and hence we know all relevant
pairs. 

Overall, there are $(nP)^{O_{\epsilon,\delta}(1)}$ possible guesses
in total.
\end{proof}
Once we know all relevant pairs, we can guess in time $O_{\epsilon,\delta}(1)$
per relevant pair $(C,\bar{p})$ for which group $g$ there is a job
$j_{k'}$ with $\bar{p}_{j_{k'}}=\bar{p}$ and the corresponding value
$k'$, which yields $O_{\epsilon,\delta}(1)^{O_{\epsilon,\delta}(\log(nP))}=(nP)^{O_{\epsilon,\delta}(1)}$
possible guesses overall. 

We define that $\chi_{\l,Q}$ contains the resulting set of large
tasks for each of the possible guesses for the above values. This
completes the proof of Lemma~\ref{lem:large-jobs}.

\bibliographystyle{alpha}
\bibliography{bibliography}

\begin{thebibliography}{CGKS12}

\bibitem[AT18]{azar2018improved}
Yossi Azar and Noam Touitou.
\newblock Improved online algorithm for weighted flow time.
\newblock In {\em 2018 IEEE 59th Annual Symposium on Foundations of Computer
  Science (FOCS)}, pages 427--437. IEEE, 2018.

\bibitem[BC09]{bansal2009weighted}
Nikhil Bansal and Ho-Leung Chan.
\newblock Weighted flow time does not admit o (1)-competitive algorithms.
\newblock In {\em Proceedings of the twentieth annual ACM-SIAM symposium on
  Discrete algorithms}, pages 1238--1244. SIAM, 2009.

\bibitem[BD07]{bansal2007minimizing}
Nikhil Bansal and Kedar Dhamdhere.
\newblock Minimizing weighted flow time.
\newblock {\em ACM Transactions on Algorithms (TALG)}, 3(4):39--es, 2007.

\bibitem[BGK18]{DBLP:conf/focs/Batra0K18}
Jatin Batra, Naveen Garg, and Amit Kumar.
\newblock Constant factor approximation algorithm for weighted flow time on a
  single machine in pseudo-polynomial time.
\newblock In Mikkel Thorup, editor, {\em 59th {IEEE} Annual Symposium on
  Foundations of Computer Science, {FOCS} 2018, Paris, France, October 7-9,
  2018}, pages 778--789. {IEEE} Computer Society, 2018.

\bibitem[BMR04]{DBLP:journals/scheduling/BenderMR04}
Michael~A. Bender, S.~Muthukrishnan, and Rajmohan Rajaraman.
\newblock Approximation algorithms for average stretch scheduling.
\newblock {\em J. Sched.}, 7(3):195--222, 2004.

\bibitem[BP03]{DBLP:conf/stoc/BansalP03}
Nikhil Bansal and Kirk Pruhs.
\newblock Server scheduling in the l\({}_{\mbox{p}}\) norm: a rising tide lifts
  all boat.
\newblock In Lawrence~L. Larmore and Michel~X. Goemans, editors, {\em
  Proceedings of the 35th Annual {ACM} Symposium on Theory of Computing, June
  9-11, 2003, San Diego, CA, {USA}}, pages 242--250. {ACM}, 2003.

\bibitem[BP04]{DBLP:conf/latin/BansalP04}
Nikhil Bansal and Kirk Pruhs.
\newblock Server scheduling in the weighted l\({}_{\mbox{p}}\) norm.
\newblock In Martin Farach{-}Colton, editor, {\em {LATIN} 2004: Theoretical
  Informatics, 6th Latin American Symposium, Buenos Aires, Argentina, April
  5-8, 2004, Proceedings}, volume 2976 of {\em Lecture Notes in Computer
  Science}, pages 434--443. Springer, 2004.

\bibitem[BP14]{DBLP:journals/siamcomp/BansalP14}
Nikhil Bansal and Kirk Pruhs.
\newblock The geometry of scheduling.
\newblock {\em {SIAM} J. Comput.}, 43(5):1684--1698, 2014.

\bibitem[CGKS12]{DBLP:conf/soda/ChanGKS12}
Timothy~M. Chan, Elyot Grant, Jochen K{\"{o}}nemann, and Malcolm Sharpe.
\newblock Weighted capacitated, priority, and geometric set cover via improved
  quasi-uniform sampling.
\newblock In Yuval Rabani, editor, {\em Proceedings of the Twenty-Third Annual
  {ACM-SIAM} Symposium on Discrete Algorithms, {SODA} 2012, Kyoto, Japan,
  January 17-19, 2012}, pages 1576--1585. {SIAM}, 2012.

\bibitem[CK02]{DBLP:conf/stoc/ChekuriK02}
Chandra Chekuri and Sanjeev Khanna.
\newblock Approximation schemes for preemptive weighted flow time.
\newblock In John~H. Reif, editor, {\em Proceedings on 34th Annual {ACM}
  Symposium on Theory of Computing, May 19-21, 2002, Montr{\'{e}}al,
  Qu{\'{e}}bec, Canada}, pages 297--305. {ACM}, 2002.

\bibitem[CKZ01]{chekuri2001algorithms}
Chandra Chekuri, Sanjeev Khanna, and An~Zhu.
\newblock Algorithms for minimizing weighted flow time.
\newblock In {\em Proceedings of the thirty-third annual ACM symposium on
  Theory of computing}, pages 84--93, 2001.

\bibitem[FKL19]{DBLP:conf/soda/FeigeKL19}
Uriel Feige, Janardhan Kulkarni, and Shi Li.
\newblock A polynomial time constant approximation for minimizing total
  weighted flow-time.
\newblock In Timothy~M. Chan, editor, {\em Proceedings of the Thirtieth Annual
  {ACM-SIAM} Symposium on Discrete Algorithms, {SODA} 2019, San Diego,
  California, USA, January 6-9, 2019}, pages 1585--1595. {SIAM}, 2019.

\bibitem[KTW99]{DBLP:journals/siamcomp/KellererTW99}
Hans Kellerer, Thomas Tautenhahn, and Gerhard~J. Woeginger.
\newblock Approximability and nonapproximability results for minimizing total
  flow time on a single machine.
\newblock {\em {SIAM} J. Comput.}, 28(4):1155--1166, 1999.

\bibitem[LKB77]{lenstra1977complexity}
Jan~Karel Lenstra, AHG~Rinnooy Kan, and Peter Brucker.
\newblock Complexity of machine scheduling problems.
\newblock In {\em Annals of discrete mathematics}, volume~1, pages 343--362.
  Elsevier, 1977.

\bibitem[SW99]{schuurman1999polynomial}
Petra Schuurman and Gerhard~J Woeginger.
\newblock Polynomial time approximation algorithms for machine scheduling: Ten
  open problems.
\newblock {\em Journal of Scheduling}, 2(5):203--213, 1999.

\end{thebibliography}

\appendix

\section{Omitted proofs}

\subsection{Simplification of input instance\label{subsec:Bounded-weights}}
First we establish that $P=\max_j p_j/\min_j p_j=\max_j p_j$:
We scale all values $p_j$ and $r_j$ by the same factor so that $\min_j p_j = n^2 \cdot 2/\epsilon$.
This preserves the approximation rate of a solution (if it is scaled accordingly), but 
the values $p_j$ and $r_j$ are no longer integers. Hence, we round all these values
to the next integer. A solution for the non-rounded values can be transformed to
a solution for the rounded values by delaying each jobs completion time by at most
$n^2 + 1$ ($n^2$ for rounding $p_j$ and $1$ for rounding $r_j$).
Since the optimum is at least $\sum_j w_j p_j \ge \sum_j w_j n^2 \cdot 2/\epsilon$,
this increases the optimum by at most a factor of $(1 + \epsilon)$.
Then we create a dummy job with processing time $1$ and negligible weight to ensure that $\min_j p_j=1$
and hence $P=\max_j p_j$. This transformation increases $P$ only by a polynomial factor.

To obtain bounded weights we scale each
$w_{j}$ by the same factor such that $\max_{j}w_{j}=4/\epsilon^{2}\cdot n^{2}P$.
This transformation preserves the approximation rate of a solution.
Now remove all jobs $j$ with $w_{j}<1/\epsilon$, round each remaining
$w_{j}$ to the next integer (increasing the optimum by a factor at
most $(1+\epsilon)$), and solve the remaining instance. Note that
the optimum is at least $\max_{j}w_{j}$ and all jobs are finished
before $T$. We now schedule all jobs that were previously arbitrarily
in the interval $[T,2T]$. The cost of these jobs is at most $n\cdot1/\epsilon\cdot2T\le1/\epsilon\cdot4n^{2}P\le\epsilon\max_{j}w_{j}$
and thus negligible.

\end{document}